\documentclass[a4paper,11pt,fleqn]{article}

\usepackage{amsmath,amssymb,amsthm}
\usepackage[pdftex,paper=a4paper,left=3.0cm,right=3.0cm,top=4cm,bottom=4cm]{geometry}
\usepackage{graphicx}
\usepackage{subfig}

\usepackage{xspace}

\newlength{\addressminus}
\setlength{\addressminus}{-1.3mm}
\newlength{\addressminusB}
\setlength{\addressminusB}{-0.7mm}

\title{Smoothed Performance Guarantees for Local Search \footnote{A preliminary version of this paper appeared in the proceedings of ESA 2011.} }

\author{Tobias Brunsch\\[\addressminusB]
        \footnotesize Dept.~of Computer Science\\[\addressminus]
      	\footnotesize University of Bonn, Germany\\[\addressminus]
        \footnotesize {\tt brunsch@cs.uni-bonn.de}
 \and Heiko R\"oglin\\[\addressminusB]
        \footnotesize Dept.~of Computer Science\\[\addressminus]
      	\footnotesize University of Bonn, Germany\\[\addressminus]
        \footnotesize {\tt heiko@roeglin.org}
 \and Cyriel Rutten\\[\addressminusB]
 	\footnotesize Dept.~of Quantitative Economics\\[\addressminus]
  	\footnotesize Maastricht University, The Netherlands\\[\addressminus]
  	\footnotesize {\tt cyrielrutten@gmail.com}
 \and Tjark Vredeveld\\[\addressminusB]
	\footnotesize Dept.~of Quantitative Economics\\[\addressminus]
  	\footnotesize Maastricht University, The Netherlands\\[\addressminus]
  	\footnotesize {\tt t.vredeveld@maastrichtuniversity.nl}
}

\date{}

\begin{document}

\maketitle

\allowdisplaybreaks

\newtheorem{theorem}{Theorem}
\newtheorem{lemma}[theorem]{Lemma}
\newtheorem{cor}[theorem]{Corollary}
\newtheorem{prop}[theorem]{Proposition}
\newtheorem{claim}{Claim}

\newtheorem{definition}[theorem]{Definition}

\newtheorem*{remark}{Remark}
\newtheorem{property}{Property}

\newenvironment{artclfig}{\begin{figure}[htbp]\begin{center}}{\end{center}\end{figure}}

\newcommand{\sched}{\ensuremath{\sigma}\xspace}
\newcommand{\cmax}{\ensuremath{C_{\max}}\xspace}
\newcommand{\copt}{\ensuremath{\cmax^*}\xspace}
\newcommand{\csched}[1][\sched]{\ensuremath{\cmax(#1)}\xspace}
\newcommand{\load}[2][]{\ensuremath{L#1_{#2}}\xspace}

\newcommand{\Prob}{\ensuremath{\mathop{\mathbf{Pr}}}\xspace}
\newcommand{\Probe}[1]{\ensuremath{\Prob[#1]\,}\xspace}
\newcommand{\Probl}[2][]{\ensuremath{\Prob\limits_{#1}\left[#2\right]}\xspace}

\newcommand{\E}{\ensuremath{\mathop{\mathbf{E}}}\xspace}
\newcommand{\Ee}[1]{\ensuremath{\E[#1]}\xspace}
\newcommand{\El}[2][]{\ensuremath{\E\limits_{#1}\left[#2\right]}\xspace}

\newcommand{\opt}{\ensuremath{\sched^*}\xspace}
\newcommand{\critmach}{\ensuremath{i_{\max}}\xspace}
\newcommand{\loadsched}[1]{\ensuremath{L_{#1}}(\sched)\xspace}
\newcommand{\jobset}[1][]{\ensuremath{J_{#1}}\xspace}
\newcommand{\jobsetsched}[1]{\ensuremath{J_{#1}}(\sched)\xspace}
\newcommand{\pmax}{\ensuremath{p_{\max}}\xspace}
\newcommand{\event}{\ensuremath{\mathcal{E}}\xspace}
\newcommand{\e}{\ensuremath{e}\xspace}
\newcommand{\allow}[1]{\ensuremath{{\cal M}_{#1}}\xspace}
\newcommand{\psmooth}{\ensuremath{\mathbf{p}}\xspace}

\newcommand{\jobs}[2][\sched]{\ensuremath{J_{#2}({#1})}\xspace}
\newcommand{\schedopt}{\sched^*}

\newcommand{\jobclass}[1]{\ensuremath{\mathcal{J}_{#1}}}

\newcommand{\smin}{s_{\min}}
\newcommand{\smax}{s_{\max}}

\newcommand{\JUMP}[1]{\mathrm{Jump}(#1)}
\newcommand{\LEX}[1]{\mathrm{Lex}(#1)}
\newcommand{\LIST}[1]{\mathrm{List}(#1)}
\newcommand{\NL}[1]{\mathrm{NL}(#1)}

\newcommand{\SET}[1]{\left\{#1\right\}}
\newcommand{\MIN}[1]{\min \SET{#1}}
\newcommand{\MAX}[1]{\max \SET{#1}}
\renewcommand{\d}{\mathrm{d}}

\newcommand{\smalljobs}{{\jobset}_{\text{small}}}
\renewcommand{\H}[1]{R_{#1}}

\newcommand{\WHERE}{\,\colon\,}
\newcommand{\FLOOR}[1]{\left\lfloor#1\right\rfloor}
\newcommand{\CEIL}[1]{\left\lceil#1\right\rceil}
\newcommand{\COMMA}{\,,}
\newcommand{\DOT}{\,.}

\newcommand{\ProofText}[1]{Proof of #1}

\providecommand{\qedhere}{\tag*{\qed}}

\begin{abstract}

We study popular local search and greedy algorithms for standard machine
scheduling problems.
The performance guarantee of these algorithms is well understood,
but the worst-case lower bounds seem somewhat contrived and it is
questionable whether they arise in practical applications. To find out how
robust these bounds are, we study the algorithms in the framework  
of smoothed analysis, in which instances are subject to some
degree of random noise.

While the lower bounds for all scheduling variants with restricted machines
are rather robust, we find out that the bounds are fragile for unrestricted
machines. In particular, we show that the smoothed performance guarantee
of the jump and the lex-jump algorithm are (in contrast to the worst case)
independent of the number of machines. They are~$\Theta(\phi)$ and~$\Theta(\log \phi)$,
respectively, where~$1/\phi$ is a parameter measuring the magnitude 
of the perturbation. The latter immediately implies that also the smoothed
price of anarchy is~$\Theta(\log \phi)$ for routing games on parallel links.
Additionally, we show that for unrestricted machines
also the greedy list scheduling algorithm has an approximation guarantee
of~$\Theta(\log \phi)$.

\end{abstract}

\section{Introduction}
\label{sec:intro}

The performance guarantee of local search and greedy algorithms for scheduling
problems is well studied and understood. For most algorithms, matching upper and
lower bounds on their approximation ratio are known. The lower bounds are often
somewhat contrived, however, and it is questionable whether they resemble typical
instances in practical applications. For that reason, we study these algorithms
in the framework of smoothed analysis, in which instances are subject to some
degree of random noise. By doing so, we find out for which heuristics and
scheduling variants the lower bounds are robust and for which they are fragile and
not very likely to occur in practical applications. Since pure Nash equilibria
can be seen as local optima, our results also imply a
new bound on the smoothed price of anarchy, showing that known worst-case
results are too pessimistic in the presence of noise.

Let us first describe the scheduling problems that we study. We assume that
there is a set $\jobset = \SET{ 1, \ldots, n }$ of jobs each of which needs to
be processed on one of the machines from the set $M = \SET{ 1, \ldots, m }$.
All jobs and machines are available for processing at time~$0$. The goal is to
schedule the jobs on the machines such that the \emph{makespan}, i.e., the time
at which the last job is completed, is minimized. Each machine~$i \in M$ has a
speed~$s_i$ and each job~$j \in \jobset$ has a processing
requirement~$p_j$. 
The time~$p_{ij}$ it takes to fully process job~$j$ on machine~$i$ depends on
the machine environment. We consider two machine environments. The first one is
the one of \emph{uniform parallel machines}, also known as
\emph{related machines}:
$p_{ij} = p_j / s_i$. The second machine environment that we consider is the
one of \emph{restricted related machines}: a job~$j$ is only allowed to be
processed on a subset $\allow{j} \subseteq M$ of the machines. The processing
time is therefore $p_{ij} = p_j / s_i$ if $i \in \allow{j}$ and $p_{ij} =
\infty$ if $i \notin \allow{j}$. An instance~$I$ of a scheduling problem
consists of the machine speeds $s_1, \ldots, s_m$, the processing requirements
$p_1, \ldots, p_n$, and in the restricted case the allowed machine
set~$\allow{j} \subseteq M$ for every job~$j$.

A special case for both machine environments is when all speeds are
equal, i.e., $s_i = 1$ for all $i \in M$. In this case, we say that the
machines are identical. In the notation of Graham et
al.~\cite{graham:etal:79} these problems are denoted by $Q||\cmax$ and
$Q|\allow{j}|\cmax$ for the related machine problems and $P||\cmax$ and
$P|\allow{j}|\cmax$ in case of identical machines.
In these problems, makespan minimization is equivalent to
minimizing the maximum machine finishing time. Once the assignment of
the jobs to the machines is known, the order in which the jobs are
processed is of no importance to determine the machine finishing times,
as long as the jobs are processed without any idle time in between.
Therefore, we assume that the
jobs that are scheduled on a machine $i$ share this processor in
such a way that they all finish at the same time.

Even in the case that all speeds are equal, the problems under
 consideration are known to be
strongly NP-hard when $m$ is part of the input (see, e.g., Garey and
Johnson~\cite{Garey+Johnson:1979}). This has motivated a lot of research in the
previous decades on approximation algorithms for scheduling problems. Since some
of the theoretically best approximation algorithms are rather involved, a lot of
research has focused on simple heuristics like \emph{greedy algorithms} and \emph{local
search algorithms} which are easy to implement. While greedy algorithms
make reasonable ad hoc decisions to obtain a schedule, local search algorithms start with
some schedule and iteratively improve the current schedule by performing some
kind of local improvements until no such is possible anymore. 
In this article, we consider the following algorithms that can be applied to all scheduling variants that we have described above:
\begin{itemize}

  \item \emph{List scheduling} is a greedy algorithm that starts from an empty schedule and a list of jobs. Then, it repeatedly selects the next unscheduled job from the list and assigns it to the machine on which it will be completed the earliest with respect to the current partial schedule. We call any schedule that can be generated by list scheduling a \emph{list schedule}.

  \item The \emph{jump} and the \emph{lex-jump} algorithms are local
search algorithms that start with an arbitrary schedule and iteratively
perform a local improvement step. In each improvement step, one job is
reassigned (jumped) from a machine~$i$ to a different machine~$i'$ where it
finishes earlier. In the jump algorithm, only jobs on \emph{critical}
machines~$i$, i.e., machines that have maximum finishing time, are
considered to be improving. In the lex-jump algorithm, the jobs can
be arbitrary. Note that a local step is lex-jump improving if and only if the sorted vector of machine finishing times decreases lexicographically, hence the term lex-jump.
A schedule for which there is no jump improvement step or no lex-jump improvement step is called \emph{jump optimal} or \emph{lex-jump optimal}, respectively.

\end{itemize}
For each of these three algorithms, we are interested in  their performance guarantees, i.e., the worst case bound on the ratio of the makespan of a schedule to be returned by the algorithm over the makespan of an optimal schedule.
The final schedule returned by a local search algorithm is called a \emph{local optimum}.
Usually, there are multiple local optima for a given scheduling
instance both for the jump and the lex-jump algorithm with varying quality.
As we do not know which local optimum is found by the local search, we 
will always bound the quality of the worst local optimum. Since local optima
for lex-jump and pure Nash equilibria are the same, see e.g.~\cite{Voecking:2007:AGT}, this corresponds
to bounding the price of anarchy in the scheduling game that is obtained if 
jobs are selfish agents trying to minimize their own completion time and if the
makespan is considered as the welfare function.
Similarly, list scheduling can produce different schedules
depending on the order in which the jobs are inserted into the list. Also for
list scheduling we will bound the quality of the worst schedule that can be
obtained.

\paragraph{Notation.}

Consider an
instance~$I$ for the scheduling problem and a schedule~$\sched$ for this
instance. By~$\jobset[i](\sched) \subseteq \jobset$ we denote the set of jobs
assigned to machine~$i$ according to~$\sched$. The \emph{processing requirement
on a machine~$i \in M$} is defined as $\sum_{j \in \jobset[i](\sched)} p_j$
and the \emph{load} of a machine is defined by $\load{i}(I,\sched) =
\sum_{j \in \jobset[i](\sched)} p_{ij}$.
The makespan~$\csched[I, \sched]$ of~$\sched$ can be written as $\csched[I,
\sched] = \max_{i \in M} \load{i}(I, \sched)$. The optimal makespan, i.e., the
makespan of an optimal schedule is denoted by~$\copt(I)$. By~$\JUMP{I}$,
$\LEX{I}$, and~$\LIST{I}$ we denote the set of all feasible jump optimal
schedules, lex-jump optimal schedules, and list schedules, respectively,
according to instance~$I$.

If the instance~$I$ is clear from the context, we simply
write~$\load{i}(\sched)$ instead of~$\load{i}(I, \sched)$, $\csched[\sigma]$
instead of $\csched[I, \sched]$, and~$\copt$ instead of~$\copt(I)$. If the
schedule~$\sched$ is clear as well, we simplify our notation further
to~$\load{i}$ and~$\cmax$ and we write~$\jobset[i]$ instead
of~$\jobset[i](\sched)$. By appropriate scaling, we may assume w.l.o.g.\
that the slowest machine has speed~$\smin = 1$ and that all processing 
requirements are bounded by $p_j \leq 1$.
In Appendix~\ref{sec:appendix-table}, the notation is summarized in a table.

\paragraph{Smoothed analysis.} 
As can be seen in Table~\ref{tab:results}, the worst-case
approximation guarantee of jump and lex-jump is known for all scheduling
variants and it is constant only for the simplest case with unrestricted
and identical machines. In all other cases it increases with the number~$m$
of machines. For list scheduling, the case with unrestricted and
related machines has been considered. Cho and Sahni~\cite{Cho+Sahni:1980} and Aspnes et al.~\cite{DBLP:journals/jacm/AspnesAFPW97} showed that the performance guarantee of list scheduling is $\Theta(\log m)$ in this case.

In order to analyze the robustness of the worst-case bounds, we turn to the
framework of smoothed analysis, introduced by Spielman and
Teng~\cite{Spielman+Teng:SA:2004} to explain why certain algorithms perform
well in practice in spite of a poor worst-case running time. Smoothed analysis is a hybrid of
average-case and worst-case analysis: First, an adversary chooses an instance.
Second, this instance is slightly randomly perturbed. The smoothed performance
is the expected performance, where the expectation is taken over the random
perturbation. The adversary, trying to make the algorithm perform as bad as
possible, chooses an instance that maximizes this expected performance. This
assumption is made to model that often the input an algorithm gets is subject to
imprecise measurements, rounding errors, or numerical imprecision.
If the smoothed performance guarantee of an algorithm is small, then bad worst-case
instances might exist, but one is very unlikely to encounter them if instances are
subject to some small amount of random noise. 

We follow the more general model of smoothed analysis introduced by Beier and
V\"{o}cking~\cite{DBLP:journals/jcss/BeierV04}.
In this model, the adversary is even allowed to
specify the probability distribution of the random noise. The influence he can
exert is described by a parameter~$\phi \geq 1$ denoting the maximum density of the noise. This model is formally defined as follows.
\begin{definition}
\label{def:phi-smooth}
In a \emph{$\phi$-smooth} instance~$\mathcal{I}$,
the adversary chooses the following input data:
\begin{itemize}
\item the number~$m$ of machines;
\item arbitrary machine speeds $\smax := s_1 \geq \ldots \geq s_m =: \smin = 1$,
in the case of non-identical machines;
\item the number~$n$ of jobs;
\item an arbitrary set $\allow{j} \subseteq M$ for each job~$j\in \jobset$,
in the case of restricted machines;
\item for each~$p_j$, a probability density $f_j:[0,1]\to[0,\phi]$ according
to which $p_j$ is chosen independently of the processing requirements of the other jobs.
\end{itemize}
Note that the only perturbed part of the instance are the processing
requirements.
Formally, a $\phi$-smooth instance is not a single instance but a
distribution over instances.
We write $I \sim \mathcal{I}$ to denote that the instance $I$ is
drawn from the $\phi$-smooth instance $\mathcal{I}$.
\end{definition}

The parameter~$\phi$ specifies how close the analysis is to a worst case analysis. The adversary can, for example, choose for every~$p_j$ an interval of length~$1/\phi$ from which $p_j$ is drawn uniformly at random. For~$\phi = 1$, every processing requirement is uniformly distributed over~$[0, 1]$, and hence the input model equals the average case for uniformly distributed processing times. When~$\phi$ gets larger, the adversary can specify the processing requirements more and more precisely, and for~$\phi \to \infty$ the smoothed analysis approaches a worst-case analysis.
 
In this article, we analyze the \emph{smoothed performance guarantee}
of the jump, the lex-jump, and the list scheduling algorithm. As mentioned above, 
to define the approximation guarantee of these algorithms on a given instance, we
consider the worst local optimum (for the jump and the lex-jump algorithm) or the worst order in which the jobs are inserted into the list (for the list scheduling algorithm). 
Now, the smoothed performance is defined to be the worst expected approximation guarantee of any
$\phi$-smooth instance.  

\paragraph{Our results.}

\begin{table}
\begin{center}\footnotesize
\begin{tabular}{|l|cc|cc|}\hline
& \multicolumn{2}{c|}{worst case} & \multicolumn{2}{c|}{$\phi$-smooth}\\
& jump & lex-jump & jump & lex-jump \\
\hline

\begin{tabular}{l}unrestricted\\identical\end{tabular} 
& $\Theta(1)$ \cite{finn:horowitz:1979,Schuurman:Vredeveld:2007} 
& $\Theta(1)$ \cite{finn:horowitz:1979,Schuurman:Vredeveld:2007} 
& $\Theta(1)$ 
& $\Theta(1)$\\

\begin{tabular}{l}unrestricted\\related\end{tabular}
& $\Theta \left( \sqrt{m} \right)$ \cite{Cho+Sahni:1980,Schuurman:Vredeveld:2007}
& $\Theta \left(  \frac{\log m}{\log \log m} \right)$ \cite{Voecking:2007}
& $\Theta(\phi)$ [\ref{subsec:jump}]
& $\Theta(\log \phi)$ [\ref{subsec:ub-list-lex-jump}, \ref{subsec:lb-list-lex-jump}]\\

\begin{tabular}{l}restricted\\identical\end{tabular}
& $\Theta \left( \sqrt{m} \right)$ \cite{Rutten:etal:2012}
& $\Theta \left( \frac{\log m}{\log\log m} \right)$ \cite{Awerbuch+etal:2006}
& $\Theta \left( \sqrt{m} \right)$ [\ref{subsec:lb-jump-restricted}]
& $\Theta \left( \frac{\log m}{\log\log m} \right)$ [\ref{subsec:lb-lex-jump-restricted}]\\

\begin{tabular}{l}restricted\\related\end{tabular}
& $\Theta\Big(\sqrt{m\cdot{\smax}}\Big)$ \cite{Rutten:etal:2012}
& $\Theta\Big(\frac{\log S}{\log \log S}\Big)$ \cite{Rutten:etal:2012}
& $\Theta\Big(\sqrt{m\cdot {\smax}}\Big)$ [\ref{subsec:lb-jump-restricted}]
& $\Omega\Big(\frac{\log m}{\log \log m}\Big)$ [\ref{subsec:lb-lex-jump-restricted}]\\
\hline
\end{tabular}
\vspace{0.2cm}
\caption{Worst-case and smoothed performance guarantees for jump and lex-jump optimal
schedules. Here, $S = \sum_{i=1}^m s_i$, and we assume w.l.og.~that $\smin =1$. With [X.Y] we refer to 
the section in this article where the bound is shown.
\label{tab:results}}
\end{center}
\vspace{-0.8cm}
\end{table}

Our results for the jump and lex-jump algorithm are summarized in
Table~\ref{tab:results}. The first
remarkable observation is that the smoothed performance guarantees for all
variants of restricted machines are robust against random noise. We show that
even for large perturbations with constant~$\phi$, the worst-case lower bounds
carry over. This can be seen as an indication that neither the jump algorithm nor the lex-jump algorithm
yield a good approximation ratio for scheduling with restricted machines in practice.

The situation is much more promising for the unrestricted variants.
Here, the worst-case bounds are fragile and do not carry over to the smoothed
case. The interesting case is the one of unrestricted and related machines. Even though
both for jump and for lex-jump the worst-case lower bound is not robust,
there is a significant difference between these two: while the smoothed
approximation ratio for jump grows linearly with the perturbation parameter~$\phi$,
it grows only logarithmically in~$\phi$ for lex-jump optimal schedules.
This proves that also in the presence of random noise lex-jump optimal schedules
are significantly better than jump optimal schedules. As mentioned earlier, this also implies
that the smoothed price of anarchy is~$\Theta(\log \phi)$. Additionally, we
show that the smoothed approximation ratio of list scheduling
is~$\Theta(\log \phi)$ as well, even when the order of the list may be
specified after the realizations of the processing times are known.
 This indicates that both the lex-jump algorithm and the list scheduling algorithm should yield good
approximations on practical instances.

\paragraph{Related work.}
The approximability of $Q||\cmax$ is well understood. Cho and
Sahni~\cite{Cho+Sahni:1980} showed that list scheduling has a
performance guarantee of at most $1+\sqrt{2m-2}/2$ for~$m \geq 3$ and
that it is at least $\Omega(\log m)$. Aspnes et al.~\cite{DBLP:journals/jacm/AspnesAFPW97} improved the upper bound to $O(\log m)$ matching the lower bound asymptotically. Hochbaum and
Shmoys~\cite{Hochbaum+Shmoys:1988} designed a polynomial time
approximation scheme for this problem. Polynomial time approximation algorithms and polynomial time
approximation schemes for special cases of the problem on restricted
related machines are given in, among others,~\cite{Li06,Glass07,Ou08}.
More work on restricted related parallel machines is discussed in the
survey of Leung and Li~\cite{Leung08}.

In the last decade, there has been a strong interest in understanding
the worst-case behavior of local optima. We
refer to the survey~\cite{Angel:2006} and the
book~\cite{Michiels:etal:2007} for a comprehensive overview of the worst-case analysis
and other theoretical aspects of local search. 
It follows from the work of Cho and Sahni~\cite{Cho+Sahni:1980} that for
the problem on unrestricted related machines the performance guarantee
of the jump algorithm is $(1+\sqrt{4m-3})/2$ and this bound is
tight~\cite{Schuurman:Vredeveld:2007}. For lex-jump optimal
schedules, Czumaj and V\"ocking~\cite{Voecking:2007} showed that the performance guarantee is
$\Theta \big( \min \big\{ \frac{\log m}{\log \log m}, \log {\smax} \big\} \big)$.
For the problem on restricted related machines, Rutten et
al.~\cite{Rutten:etal:2012} showed that the performance guarantee of locally optimal schedules with
respect to the jump neighborhood is $(1+\sqrt{1+4(m-1) \smax})/2$ and
that this bound is tight up to a constant factor. Moreover, they showed
that the performance guarantee of lex-jump optimal schedules is
$\Theta \big( \frac{\log S}{\log \log S} \big)$, where $S = \sum_{i=1}^m s_i$.
When all speeds are equal, Awerbuch et al.~\cite{Awerbuch+etal:2006}
showed that the performance guarantee for lex-jump optimal schedules is
$\Theta \big( \frac{\log m}{\log \log m} \big)$.

Up to now, smoothed analysis has been mainly applied to running time
analysis (see, e.g.,~\cite{Spielman+Teng:CACM2009} for a survey). The first
exception is the paper by Becchetti et
al.~\cite{Becchetti+etal:Smoothed:MOR} who introduced the concept of
smoothed competitive analysis, which is equivalent to smoothed performance
guarantees for online algorithms. Sch\"afer and
Sivadasan~\cite{Schaefer+Sivadasan:2005} performed a smoothed
competitive analysis for metrical task systems. Englert et al.~\cite{Englert+etal:SODA2007} considered the $2$-Opt algorithm for the traveling salesman problem and determined, among others, the
smoothed performance guarantee of local optima of the $2$-Opt algorithm.
Hoefer and Souza~\cite{Hoefer+Souza:2010} presented one of the first
average case analyses for the price of anarchy.

The remainder of this article is organized as follows.
In Section~\ref{sec:relatedmachines}, we
provide asymptotically matching upper and lower bounds on the smoothed performance guarantees of
jump optimal,  lex-jump optimal, and list schedules in case of unrestricted related machines. In Section~\ref{sec:restricted}, we show that smoothing does not help for the setting of restricted machines.

\section{Unrestricted Related Machines}
\label{sec:relatedmachines}

\subsection{Jump Optimal Schedules}
\label{subsec:jump}

We show that the smoothed performance guarantee grows
linearly with the smoothing parameter~$\phi$ and is independent
of the number of jobs and machines. In particular, it is constant if the
smoothing parameter is constant. In proving our results, we make use of the 
following proposition which follows from Cho and Sahni~\cite{Cho+Sahni:1980}.

\begin{prop}
For any scheduling instance~$I$ with~$m$ unrestricted related machines and~$n$ jobs
\begin{equation}
\label{ChoSahni}
  \max_{\sched\in\JUMP{I}} \frac{\csched[I, \sched]}{\copt(I)}
  \leq \frac{1 + \sqrt{4 \min \{m, n\}-3}}{2} \leq \frac{1}{2} + \sqrt{n} \DOT
\end{equation}
\label{prop:ChoSahni}
\end{prop}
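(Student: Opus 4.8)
The plan is to prove the bound $\max_{\sched\in\JUMP{I}} \csched[I,\sched]/\copt(I) \leq \tfrac{1+\sqrt{4\min\{m,n\}-3}}{2}$ directly from first principles, since this is the standard jump-neighborhood analysis of Cho and Sahni. Fix a jump optimal schedule~$\sched$ and let~$\critmach$ be a critical machine, i.e., one with $\load{\critmach}(\sched) = \csched[\sched] =: \cmax$. First I would dispose of the trivial case in which~$\critmach$ holds only a single job~$j$: then $\cmax = p_j/s_{\critmach} \leq \copt$, since some machine must process~$j$ in the optimal schedule and no machine is faster than the one that minimizes $p_j/s_i$ over feasible machines, so the ratio is~$1$ and the bound holds. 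So assume~$\critmach$ holds at least two jobs.

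The key structural step is the jump-optimality condition: for every job~$j \in \jobset[\critmach](\sched)$ and every machine~$i \neq \critmach$, moving~$j$ to~$i$ does not strictly decrease its completion time, i.e., $\load{i}(\sched) + p_j/s_i \geq \load{\critmach}(\sched) = \cmax$, hence $\load{i}(\sched) \geq \cmax - p_j/s_i$. I would pick~$j$ to be a \emph{smallest} job on~$\critmach$ (smallest processing requirement), so that $p_j \leq$ (processing requirement on $\critmach$)$/|\jobset[\critmach](\sched)| \leq \copt / |\jobset[\critmach](\sched)|$ — here using that the total processing requirement on~$\critmach$ divided by $s_{\critmach}\ge 1$ is at most... actually more carefully: the load $\cmax = \sum_{k\in\jobset[\critmach]} p_k/s_{\critmach}$, and the smallest job satisfies $p_j/s_{\critmach} \le \cmax/|\jobset[\critmach](\sched)|$. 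Also $\cmax \ge \load{\critmach}(\sched) \ge$ nothing directly useful for~$\copt$, so instead I bound $p_j/s_i$ for $i\ne\critmach$: since the optimum must also place job~$j$ somewhere, and $p_j/s_i \le$ the contribution of job~$j$ there... The cleanest route: $p_j/s_i \le p_j/s_{\critmach}\cdot(s_{\critmach}/s_i)$ — this is awkward when~$i$ is slow.

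So I would instead follow the classical double-counting argument. Summing the inequality $\load{i}(\sched) \geq \cmax - p_j/s_i$ over all machines~$i \neq \critmach$, together with $\load{\critmach}(\sched) = \cmax$, and comparing total load to a lower bound on $\copt$: the total processing requirement $\sum_k p_k$ must be spread so that $\copt \ge (\sum_k p_k)/\sum_i s_i$, but more usefully $\copt \ge p_j/s_{\critmach}$ only if $\critmach$ is fastest — not generally true. The real Cho–Sahni argument uses: let $j$ be the smallest job on~$\critmach$; then (i) $p_j/s_{\critmach}\le \cmax/\ell$ where $\ell = |\jobset[\critmach](\sched)| \ge 2$, and by jump-optimality every other machine~$i$ has load at least $\cmax - p_j/s_i \ge \cmax - p_j/s_{\critmach}$ when... no. I would cut through this by invoking that each machine~$i$ with $\load{i}(\sched) < \cmax - p_j/s_i$ would accept the jump; since $p_j \le \copt$ (as some machine processes job~$j$ within time $\copt$, and that machine has speed $\le s_{\critmach}$ is false)...

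Given the genuine subtlety of matching the exact constant, the honest plan is: invoke the Cho–Sahni bound $(1+\sqrt{4\min\{m,n\}-3})/2$ as a known result from~\cite{Cho+Sahni:1980} (the proposition explicitly says it "follows from" that paper), and then the only remaining task is the elementary second inequality $\tfrac{1+\sqrt{4\min\{m,n\}-3}}{2} \leq \tfrac12 + \sqrt{n}$. For this, note $\min\{m,n\} \leq n$, so $4\min\{m,n\}-3 \leq 4n-3 \leq 4n$, hence $\sqrt{4\min\{m,n\}-3} \leq 2\sqrt{n}$, and dividing by~$2$ and adding~$\tfrac12$ gives the claim. The main obstacle, should one want a fully self-contained proof rather than citing~\cite{Cho+Sahni:1980}, is reproducing the tight quadratic-in-$\cmax/\copt$ estimate that yields the precise constant — balancing the load contributed by the smallest job on the critical machine against the forced minimum loads on all other machines and solving the resulting quadratic inequality; the $\min\{m,n\}$ (rather than just~$m$) arises because the critical machine can be assumed to carry at most $\min\{m,n\}$ jobs in the worst case, which caps the relevant count.
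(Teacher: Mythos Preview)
Your final plan is exactly what the paper does: the proposition is stated without proof and attributed to Cho and Sahni~\cite{Cho+Sahni:1980}, and the second inequality is immediate from $\min\{m,n\}\le n$. One small correction to your closing parenthetical: the $\min\{m,n\}$ does not arise from bounding the number of jobs on the critical machine, but from the observation (which the paper makes explicitly at the start of the proof of Theorem~\ref{thm:JumpRelatedUpper}) that when $m>n$ both an optimal schedule and a worst jump-optimal schedule can be taken to avoid the slowest $m-n$ machines, so one may assume $m\le n$ and apply the Cho--Sahni bound with~$n$ in place of~$m$.
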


\begin{theorem}
\label{thm:JumpRelatedUpper}
For any $\phi$-smooth instance~$\mathcal{I}$ with unrestricted and related machines,
\[
    \El[I \sim \mathcal{I}]{\max_{\sched\in\JUMP{I}}\frac{\csched[I, \sched]}{\copt(I)}} < 5.1 \phi + 2.5 = O(\phi) \DOT
\]
\end{theorem}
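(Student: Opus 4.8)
The plan is to combine the deterministic bound of Proposition~\ref{prop:ChoSahni} with a structural property of jump optimality and a concentration bound for $\sum_{j}p_{j}$. Fix a $\phi$-smooth instance $\mathcal{I}$; since $n$ and $m$ are chosen by the adversary they are fixed numbers, so we may distinguish cases by their size. If $\min\{m,n\}\le(5\phi)^{2}$, then Proposition~\ref{prop:ChoSahni} gives $\max_{\sched\in\JUMP{I}}\csched[I,\sched]/\copt(I)\le\frac12+\sqrt{\min\{m,n\}}\le\frac12+5\phi$ for \emph{every} realization of the processing requirements, so the expectation is $O(\phi)$. Hence assume from now on that $m,n>(5\phi)^{2}$; in particular both are large.

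The structural step: let $\sched$ be a jump optimal schedule, $C:=\csched[I,\sched]$, let $i^{*}$ be a critical machine, and let $p^{*}$ be the smallest processing requirement of a job on $i^{*}$. Moving that job to a machine $i\ne i^{*}$ must not decrease its completion time, which gives $\load{i}(\sched)\ge C-p^{*}/s_{i}$ for \emph{every} machine~$i$; for an idle machine this reads $p^{*}/s_{i}\ge C$, so every machine left idle by $\sched$ has speed at most $p^{*}/C\le 1/C$ --- idle machines are slow. Multiplying by $s_{i}$ and summing over all machines, $\sum_{j}p_{j}=\sum_{i}s_{i}\load{i}(\sched)\ge C\cdot S-m\,p^{*}$ with $S=\sum_{i}s_{i}\ge m$; combined with $\copt(I)\cdot S\ge\sum_{j}p_{j}$ and $p^{*}\le 1$ this yields
\[
\frac{C}{\copt(I)}\ \le\ 1+\frac{p^{*}\,m}{S\cdot\copt(I)}\ \le\ 1+\frac{p^{*}\,m}{\sum_{j}p_{j}}\ \le\ 1+\frac{m}{\sum_{j}p_{j}}\DOT
\]
This is already enough when $n\ge m$. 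When $n<m$ the factor $m/\sum_{j}p_{j}$ is too crude, but there the optimum uses at most $n$ machines, so $\copt(I)$ may be bounded below using only the $n$ largest speeds; since idle machines are slow, those $n$ machines essentially coincide with the $\le n$ machines used by $\sched$, and a more careful version of the computation above again bounds $C/\copt(I)$ in terms of $\sum_{j}p_{j}$ (together with the work on $i^{*}$, which is at least $p^{*}$ times the number of jobs on~$i^{*}$).

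The probabilistic step: each $f_{j}$ is bounded by $\phi$ on $[0,1]$, hence $\Probe{p_{j}\ge\tfrac{1}{2\phi}}\ge\tfrac12$, and these events are independent, so $Z:=|\{\,j:p_{j}\ge 1/(2\phi)\,\}|$ has $\Ee{Z}\ge n/2$ and, by a Chernoff bound, $\Probe{Z<n/4}\le e^{-n/16}$. On the event $\{Z\ge n/4\}$ we have $\sum_{j}p_{j}\ge Z/(2\phi)\ge n/(8\phi)$, and since $\min\{m,n\}\le n$ the structural step gives $C/\copt(I)\le 1+O(\phi)$. On the complementary event we fall back on Proposition~\ref{prop:ChoSahni} and bound $C/\copt(I)$ by $\tfrac12+\sqrt{n}$. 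Therefore
\[
\El[I\sim\mathcal{I}]{\max_{\sched\in\JUMP{I}}\frac{\csched[I,\sched]}{\copt(I)}}\ \le\ \bigl(1+O(\phi)\bigr)+e^{-n/16}\Bigl(\tfrac12+\sqrt{n}\Bigr)\ =\ O(\phi)\COMMA
\]
since $e^{-n/16}\bigl(\tfrac12+\sqrt{n}\bigr)$ is bounded by an absolute constant (and tends to~$0$). Tracking the constants through the threshold $1/(2\phi)$, the Chernoff slack, and the estimates $m/S\le 1$ and $p^{*}\le 1$ produces the explicit bound $5.1\phi+2.5$.

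The main obstacle is the regime $n<m$ --- and, more generally, instances with highly heterogeneous speeds --- where the worst-case ratio is genuinely $\Theta(\sqrt{\min\{m,n\}})$ and the randomness has to do real work. There the naive bound $\copt(I)\ge\sum_{j}p_{j}/S$ is worthless because $S$ can be astronomically large, so one must exploit that a jump optimal schedule never leaves a fast machine idle (every idle machine has speed at most $p^{*}/C$) in order to discard the irrelevant machines from the optimum's speed budget and to localize the work estimate to the $\le n$ machines actually used by $\sched$. Making this refinement interact cleanly with the concentration bound on $\sum_{j}p_{j}$ --- in particular, disposing of the subcase where $p^{*}$ is itself of order $1/\phi$, so that comparatively few jobs clear the threshold $1/(2\phi)$ --- is the delicate part of the proof.
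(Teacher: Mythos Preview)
Your overall strategy is exactly the paper's: derive a deterministic inequality of the form $\csched/\copt \le 1+c\cdot n/Q$ from jump optimality together with $\copt \ge Q/S$, show that $Q=\sum_j p_j$ concentrates at order $n/\phi$, and fall back on Proposition~\ref{prop:ChoSahni} on the rare failure event. Your probabilistic step (counting $\{j:p_j\ge 1/(2\phi)\}$ and applying Chernoff) is a legitimate variant of the paper's direct Hoeffding bound on $Q$.

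The genuine gap is the case $m>n$, which you yourself flag as ``the delicate part'' and leave as a sketch (``a more careful version of the computation above again bounds $C/\copt(I)$\ldots''). Your structural inequality produces the factor $m/\sum_j p_j$, and you do not actually carry out the promised refinement that replaces $m$ by $n$ when $m>n$; the remarks about idle machines being slow and about ``localizing the work estimate'' are suggestive but not a proof. The paper removes this case in one line before any analysis begins: when $m>n$, both an optimal schedule and a worst jump-optimal schedule can be assumed to use only the $n$ fastest machines, so one simply discards the slowest $m-n$ machines and proceeds under the hypothesis $m\le n$. With that reduction your own computation already gives $\csched/\copt \le 1+(n-1)/Q$, and the entire discussion of idle machines, of $p^*$, and of the ``subcase where $p^*$ is itself of order $1/\phi$'' becomes unnecessary. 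Finally, the explicit constant $5.1\phi+2.5$ does not follow from your thresholds and slacks as stated; in the paper it comes from the specific Hoeffding cutoff $(n-\sqrt{n\ln n})/(2\phi)$ and the numerical bound $\max_{n\in\mathbb Z^+} 2/(1-\sqrt{\ln(n)/n})<5.1$.
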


\begin{proof}
First note that if~$m > n$, then there exist an optimal schedule and a worst
jump-optimal schedule that do not schedule any job on any of the
slowest~$m-n$ machines. We ignore these slowest~$m-n$ machines, and therefore
we assume that $m \leq n$.
We will prove an upper bound on the performance guarantee of jump optimal schedules
that decreases when the sum of processing requirements $Q=\sum_{j \in J} p_j$ increases and that is valid for every instance. Then, we will argue that for $\phi$-smooth instances~$Q$ is usually not too small, which
yields the theorem.

Let~$\sched$ denote an arbitrary jump optimal schedule for some arbitrary processing requirements~$p_j\in[0,1]$. Let~$i$ be an arbitrary machine, let machine~$\critmach$ be a critical machine in schedule~$\sched$, and let~$j$ be a job assigned to machine~$\critmach$ by schedule~\sched. By jump optimality of~\sched it follows that
\[
  \csched
  = \load{\critmach}
  \leq \load{i} + p_j / s_i
  \leq \load{i} + \pmax / s_i \COMMA
\]
where~$\pmax$ denotes the processing requirement of the largest job. The previous inequality yields that $s_i \cdot \csched \leq s_i \cdot \load{i} + \pmax$ for
all machines~$i \in M$.
Summing over all machines from $M \setminus \SET{ \critmach }$ and adding
$s_{\critmach} \cdot \load{\critmach}$ to both sides of the inequality, we find that
\[
  \sum_{i \in M} s_i \cdot \csched
  \leq \sum_{i \in M \setminus \{\critmach\}} \pmax + \sum_{i \in M} s_i \cdot \load{i} 
  \leq (n-1) \cdot \pmax + \sum_{i \in M} s_i \cdot \load{i} 
\]
since $L_{\critmach}= \csched$. Noting that $\sum_{i \in M} s_i \cdot \load{i} =
\sum_{j \in J} p_j = Q$ yields the following upper bound on the makespan of any jump optimal schedule~$\sched$:
\[
\csched \leq \frac{Q}{\sum_{i \in M} s_i} +  \frac{n-1}{\sum_{i \in M} s_i} \COMMA
\]
where the last inequality follows since~$\pmax \leq 1$. Using the well-known bound $\copt \geq Q/\sum_{i \in M} s_i$ we obtain
\[
  \csched  \leq \frac{Q}{\sum_{i \in M} s_i} +  \frac{n-1}{\sum_{i \in M} s_i}
  \leq \Big(1+\frac{n-1}{Q}\Big) \cdot \copt \DOT
\]
Hence,
\begin{equation}
\label{UB_jump_optimality}
\max_{\sched\in\JUMP{I}} \frac{\csched[I, \sched]}{\copt(I)} \le 1+\frac{n-1}{Q} \DOT
\end{equation}
The performance guarantee of any jump optimal schedule can only be bad if~$Q$ is small. Since the instance is $\phi$-smooth, the processing requirements are random
variables in~$[0,1]$ with bounded densities.
Let~$\mathcal{F}$ denote the failure event that $Q \leq (n-\sqrt{n\ln n})/(2\phi)$.
We define~$x_j$ to be independent random variables drawn uniformly from~$[0,1/\phi]$ for all~$j \in J$. Then,
$\Prob[p_j \geq a] \geq \Prob[x_j \geq a]$ for any~$a \in [0,1]$. 
Let $X = \sum_{j\in J} x_j$. Then, for any~$a\in [0,n]$,
it follows that $\Prob[Q \geq a] \geq  \Prob[X \geq a]$. Hence,
\begin{align}
  \Probl{\mathcal{F}}
  &= \Probl{Q \le \frac{n-\sqrt{n\ln n}}{2\phi}} 
  \le \Probl{X \le \frac{n-\sqrt{n\ln n}}{2\phi}} \notag \\
  &= \Probl{  \El{X} - X \geq   \frac{\sqrt{n \ln n}}{2 \phi} }
  \le \e^{- (\ln n) /2} = \frac{1}{\sqrt{n}} \COMMA
\label{eqn:probFailure}
\end{align}
where the last inequality follows from Hoeffding's bound~\cite{Hoeffding:1963} (see also Theorem~\ref{thm:app:hoeffding} in the appendix). Consider the random variable
\[
  Z = \begin{cases}
    \frac{1}{2} + \sqrt{n} & \text{if event} \ \mathcal{F} \ \text{occurs}  \COMMA \cr
    1 + \frac{n-1}{Q} & \text{otherwise}  \COMMA 
    
  \end{cases}
\]
and let $Y = \max_{\sched\in\JUMP{I}} \frac{\csched[I, \sched]}{\copt(I)}$. Due to Inequalities~\eqref{ChoSahni} and~\eqref{UB_jump_optimality} we have $Y \leq Z$. We denote by~$\overline{\mathcal{F}}$ the complement of~$\mathcal{F}$ and obtain
\begin{align*}
  \El[I \sim \mathcal{I}]{Y}
  &\leq \El[I \sim \mathcal{I}]{Z}
  \leq \El[I \sim \mathcal{I}]{\left. Z \right| \overline{\mathcal{F}} }   + \El[I \sim \mathcal{I}]{\left. Z \right|  \mathcal{F}} \cdot \Probl[I \sim \mathcal{I}]{\mathcal{F}} \\
  &\stackrel{\eqref{eqn:probFailure}}{\leq} \left( 1+\frac{2\phi(n-1)}{n-\sqrt{n\ln n}} \right) + \frac{1/2+\sqrt{n}}{\sqrt{n}}\\
  &<2.5+ \frac{2 \phi}{1 - \sqrt{\ln (n) / n }} < 2.5 + 5.1 \phi \DOT
\end{align*} 
For the third inequality, we used $Q > (n - \sqrt{n \ln n})/(2\phi)$ if event~$\mathcal{F}$ does not hold. The last inequality holds since 
\[ \max_{n \in \mathbb{Z}^+} \frac{2}{1- \sqrt{ \ln (n) /n}} < 5.1, \]
where the maximum is attained for $n=3$.
\end{proof}

\begin{cor}
Consider an instance of scheduling with unrestricted and related machines in which the processing requirement of every job is chosen independently and uniformly at random from $[0,1]$. The expected performance guarantee of the worst jump optimal schedule is~$O(1)$.
\end{cor}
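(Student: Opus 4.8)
The plan is to observe that the corollary is just the instantiation of Theorem~\ref{thm:JumpRelatedUpper} at the extreme point $\phi = 1$. Concretely, if every processing requirement $p_j$ is drawn independently and uniformly from $[0,1]$, then its density is the function $f_j \equiv 1$ on $[0,1]$, which maps $[0,1]$ into $[0,1] = [0,\phi]$ for $\phi = 1$. Hence the described random instance is a legal $1$-smooth instance in the sense of Definition~\ref{def:phi-smooth}: the adversary's (possibly worst-case) choices of the number $m$ of machines, the speeds $\smax = s_1 \geq \dots \geq s_m = 1$, and the number $n$ of jobs are all still permitted, and only the processing requirements are perturbed, here by the uniform law.

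Given this identification, one simply invokes Theorem~\ref{thm:JumpRelatedUpper} with $\phi = 1$:
\[
  \El[I \sim \mathcal{I}]{\max_{\sched\in\JUMP{I}}\frac{\csched[I, \sched]}{\copt(I)}}
  < 5.1 \cdot 1 + 2.5 = 7.6 = O(1) \DOT
\]
Since this bound holds uniformly over all admissible adversarial choices, it in particular bounds the expected performance guarantee of the worst jump optimal schedule for the stated average-case model.

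There is essentially no obstacle here; the only thing worth double-checking is the boundary compatibility between the average-case model in the corollary and the $\phi$-smooth model of Definition~\ref{def:phi-smooth} at $\phi = 1$, namely that the density bound is $f_j \colon [0,1] \to [0,\phi]$ (a closed interval including $\phi$), so that the uniform density of value exactly $1$ is indeed allowed. Once that is noted, the result is immediate from the theorem. (Alternatively, if one preferred a self-contained argument, one could re-run the proof of Theorem~\ref{thm:JumpRelatedUpper} verbatim with $\phi = 1$, using Inequality~\eqref{UB_jump_optimality}, the Cho--Sahni bound~\eqref{ChoSahni}, and the Hoeffding estimate~\eqref{eqn:probFailure} with $x_j$ uniform on $[0,1]$; but the one-line deduction above is cleaner.)
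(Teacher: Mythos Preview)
Your proposal is correct and matches the paper's approach: the corollary is stated immediately after Theorem~\ref{thm:JumpRelatedUpper} with no separate proof, precisely because it is the $\phi = 1$ instantiation you describe. Your observation that the uniform density $f_j \equiv 1$ on $[0,1]$ satisfies the bound $f_j \colon [0,1] \to [0,\phi]$ at $\phi = 1$ is exactly the (implicit) reasoning the paper relies on.
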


Next, we show that the upper bound on the smoothed performance guarantee provided in Theorem \ref{thm:JumpRelatedUpper} is tight up to constant factor when~$\phi\geq 2$.

\begin{theorem}
\label{thm:JumpRelatedLower}
There is a class of $\phi$-smooth instances~$\mathcal{I}$ with unrestricted and related machines such that
\[
    \El[I \sim \mathcal{I}]{\max_{\sched\in\JUMP{I}}\frac{\csched[I, \sched]}{\copt(I)}} = \Omega(\phi) \DOT
\]
\end{theorem}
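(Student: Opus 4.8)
The plan is to construct a $\phi$-smooth instance for which, with constant probability over the perturbation, there is a jump optimal schedule whose makespan is $\Omega(\phi)$ times the optimum. The natural idea is to mimic the deterministic worst-case construction of Cho and Sahni at a scale of roughly $\phi$ machines, so that the bad local optimum has ratio $\Theta(\sqrt{\phi})$ coming from the machine count \emph{plus} an extra factor $\Theta(\sqrt{\phi})$ coming from the fact that each job can have processing requirement as small as $1/\phi$ while the adversary's intended ``large'' jobs still realize to size $\Theta(1/\phi)$ or bigger. Concretely, I would take $m = \Theta(\phi)$ machines and a hierarchy of machine speeds (geometrically decreasing, say $s_i$ roughly $\sqrt{m/i}$ or a two-level speed structure as in the $\sqrt m$ lower bound for jump) together with $n = \Theta(\phi)$ jobs whose densities are concentrated near a single value — e.g.\ each $f_j$ uniform on $[0,1/\phi]$, or a point-like density forcing $p_j$ into a tiny subinterval. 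The optimum spreads the jobs so that every machine finishes at time $\Theta(1/\phi^{?})$, while the bad jump optimal schedule piles all jobs onto the slow machines in a balanced-but-globally-bad way; one checks directly that no single job can jump to a faster machine and finish earlier, because on the fast machines the relevant load plus $p_j/s_i$ is still at least the current makespan.

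The key steps, in order, would be: (i) fix the number of machines and jobs as functions of $\phi$ and specify the speed vector and the densities $f_j$; (ii) describe explicitly the schedule $\sched$ that we claim is (with good probability) jump optimal, and compute a lower bound on its makespan in terms of the realized processing requirements; (iii) exhibit a feasible schedule (the ``optimal'' one) with makespan $O(1/\phi)$ times the relevant quantities, giving an upper bound on $\copt$; (iv) verify the jump-optimality of $\sched$ — this is a deterministic check that holds for \emph{every} realization in the support, or at worst on a constant-probability event, using that all $p_j \le 1/\phi$ and the geometry of the speeds; and (v) put the pieces together: on the event that $\sum_j p_j$ (or the relevant partial sums) are not too far below their expectations — which happens with constant probability by a Hoeffding/Chernoff-type concentration bound exactly as in the proof of Theorem~\ref{thm:JumpRelatedUpper} — the ratio $\csched/\copt$ is $\Omega(\phi)$, and hence the expectation is $\Omega(\phi)$ as well.

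The main obstacle I anticipate is step (iv): ensuring that the schedule $\sched$ remains jump optimal for (essentially) all realizations of the perturbed processing requirements, not just in expectation. In the deterministic worst case one tunes the job sizes exactly so that the local-optimality inequalities $\load{\critmach} \le \load{i} + p_j/s_i$ are tight; under random perturbations these inequalities must be robust to the $[0,1/\phi]$ (or $[(1-1/\phi) a, a]$) wobble in each $p_j$. I would handle this by building in slack — choosing the speed ratios and the number of jobs per machine so that the jump inequalities hold with a multiplicative buffer regardless of where in its interval each $p_j$ lands — which costs only constant factors and does not affect the $\Omega(\phi)$ conclusion. A secondary, more routine obstacle is making sure the constructed ``optimal'' schedule is genuinely feasible and genuinely has the claimed small makespan for the realized $p_j$; since the $p_j$ are bounded by $1/\phi$ and there are $\Theta(\phi)$ of them on $\Theta(\phi)$ machines with total speed $\Theta(\phi^{3/2})$ or so, a simple averaging/greedy argument gives $\copt = O(1/\sqrt\phi)$ or the appropriate bound, and comparing with $\csched = \Omega(\sqrt\phi)$ yields the ratio $\Omega(\phi)$.
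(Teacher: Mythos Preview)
Your plan has a genuine gap. The claimed ``extra factor $\Theta(\sqrt{\phi})$ coming from the fact that each job can have processing requirement as small as $1/\phi$'' does not exist: the ratio $\csched/\copt$ is invariant under simultaneously scaling all processing requirements, so making all jobs uniformly small cannot by itself improve the ratio. With $m=\Theta(\phi)$ machines, a Cho--Sahni-type construction yields ratio at most $\Theta(\sqrt{m})=\Theta(\sqrt{\phi})$, not $\Theta(\phi)$. Your final arithmetic does not close either: with $n=\Theta(\phi)$ jobs each of size at most $1/\phi$ the total processing requirement is $O(1)$, and since $\smin=1$ every machine load is at most~$1$; hence $\csched\le 1$, which is incompatible with your assertion $\csched=\Omega(\sqrt{\phi})$. (Your ``total speed $\Theta(\phi^{3/2})$'' is also inconsistent with $m=\Theta(\phi)$ machines and the speed profile $s_i\sim\sqrt{m/i}$ you suggest, which sums to $\Theta(\phi)$.)

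The paper's construction is different in spirit and simpler. It uses only \emph{two} speed levels: one fast machine of speed $s_1\approx\phi$ and $\Theta(\phi^2)$ unit-speed machines. There is one \emph{large} job with $p_1\in[1-1/\phi,\,1]$ and $\Theta(\phi^2)$ small filler jobs with $p_j\in[0,\,1/\phi]$. The optimum puts the large job on the fast machine and each small job alone on a slow machine, giving $\copt\le 1/\phi$. The bad schedule~$\sched$ puts the large job on a unit-speed machine (so $\csched\ge 1-1/\phi$) and greedily packs small jobs onto the fast machine until its load lies in $[\,p_1-1/(\phi s_1),\,p_1)$; then job~$1$ cannot jump to the fast machine because $\load{1}+p_1/s_1>p_1$ (using $\phi>2$), and obviously cannot jump to another unit-speed machine. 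The entire $\Omega(\phi)$ ratio thus comes directly from the speed ratio $s_1/\smin\approx\phi$, not from a $\sqrt{m}$ mechanism; the $\Theta(\phi^2)$ small jobs are there only so that, with high probability via Hoeffding, their total requirement suffices to fill machine~$1$ up to load~$p_1$. The missing ingredient in your proposal is precisely this asymmetry between one large job (which creates the makespan) and many tiny filler jobs (which block the fast machine).
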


\begin{proof}
For any~$\phi > 2$ we construct a $\phi$-smooth instance~$\mathcal{I}$ with $n=\lceil 4\phi^2+1\rceil$ and~$m=n$ machines.
Let
\[
  s_1 = \frac{n-1}{4\phi} \geq \phi > 2
  \quad \text{and} \quad
  s_2 = \ldots = s_n =1 \DOT
\]
We assume that the processing requirement~$p_1$ is chosen uniformly from the interval $[1-1/\phi,1]$
while the processing requirements of all other jobs are chosen uniformly from the interval~$[0,1/\phi]$.
In an optimal schedule, job~$1$ is scheduled on machine~$1$, and all other
machines process exactly one job (see Figure~\ref{fig:jump-unrestricted-related-opt}). Hence,
\[
 \copt
  = \max \left\{ \frac{p_1}{s_1}, p_2, \ldots , p_n \right\}
  \leq \max \left\{ \frac{1}{s_1}, \frac{1}{\phi} \right\}
  =  \frac{1}{\phi} \DOT
\]
We show that with high probability there exists a jump optimal schedule~$\sched$
with $\csched > 1-1/\phi$. In order to find such a schedule~$\sched$, 
we first schedule job~$1$ on machine~$2$.
Then, we consider the remaining jobs one after another and schedule unassigned jobs to machine~$1$ until either
$\load{1} \in \big[ \load{2} - 1/(\phi s_1), \load{2} \big)$ or all jobs
are scheduled. Any job that remains unscheduled is then exclusively assigned to one
empty machine. Let~\event denote the event that $Q_2:=\sum_{j =2}^n p_j \geq s_1$. Note that $\El{Q_2} = (n-1)/(2\phi) = 2s_1$. We will see that event~\event holds with high probability with respect to~$\phi$.

\begin{artclfig}\newcommand{\height}{11em}
  \begin{minipage}[t]{0.55\textwidth}\begin{center}
    \includegraphics[height=\height]{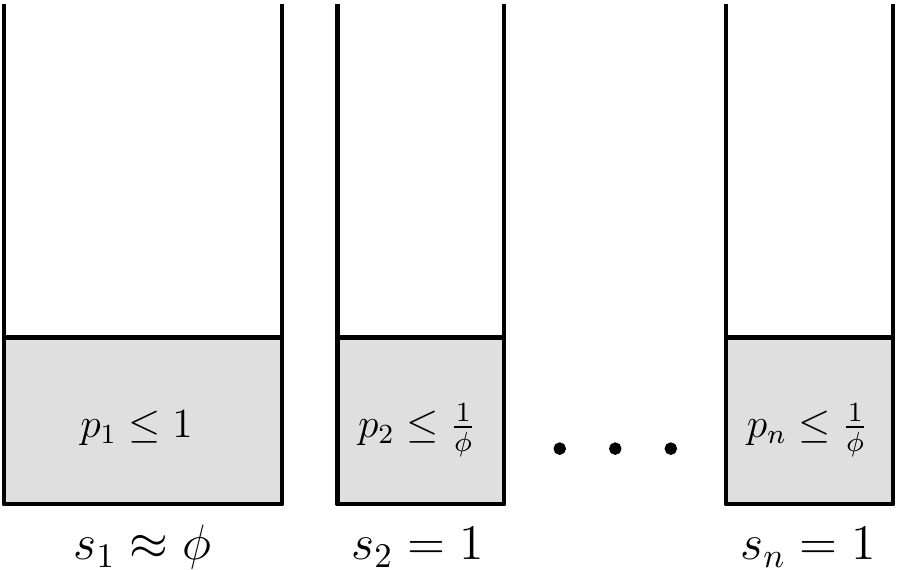}
    \caption{Optimal schedule}
    \label{fig:jump-unrestricted-related-opt}
  \end{center}\end{minipage}\hfill
  \begin{minipage}[t]{0.4\textwidth}\begin{center}
    \includegraphics[height=\height]{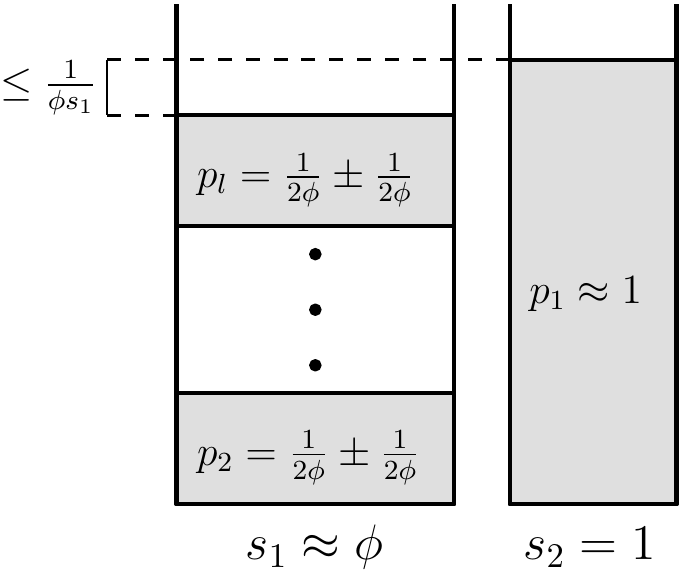}
    \caption{Machines~$1$ and~$2$ of schedule~\sched if event~\event occurs}
    \label{fig:jump-unrestricted-related-jump-opt}
  \end{center}\end{minipage}
\end{artclfig}

Consider the case that event~\event occurs. Then, schedule~$\sched$ is such that $\load{1} \in \big[ \load{2} - 1/(\phi s_1), \load{2} \big)$ since $Q_2/s_1\geq 1 \geq p_1= \load{2}$ and $p_j \leq 1/\phi$ for
all jobs $j = 2, \ldots, n$ (see Figure~\ref{fig:jump-unrestricted-related-jump-opt}). Now, we argue that schedule~$\sched$ is jump optimal.
First observe that machine~$2$ defines the makespan since $L_2 > \max \SET{ L_1, p_2/1, \ldots, p_n/1 }$. Job~$1$, which is the only job assigned to
that machine, cannot jump to a machine~$i>2$ because these have the same speed as machine~$2$. Furthermore,
it cannot jump to machine~$1$ because
\[
  \load{1} + \frac{p_1}{s_1}
  \geq \load{2} - \frac{1}{\phi s_1} + \frac{1-1/\phi}{s_1}
  = \load{2}  + \frac{1-2/\phi}{s_1}
  > \load{2}
\]
as~$\phi > 2$. Hence, $\sched$ is a jump optimal schedule with 
\begin{equation}
\label{eq:lb-jump-unrestricted}
  \frac{\csched}{\copt}
  > \frac{1-1/ \phi}{1/\phi}
  = \phi -1 \DOT
\end{equation}
It remains to determine the probability of event~$\event$. Recalling $\El{Q_2} = 2s_1$, $s_1 = (n-1)/(4\phi)$, and $n \geq 4\phi^2 + 1$, this can be bounded with Hoeffding's bound~\cite{Hoeffding:1963} (see also Theorem~\ref{thm:app:hoeffding}) as follows:
\begin{align*}
  \Probl{\overline{\event}}
  &= \Probl{Q_2 < s_1}
  = \Probl{\El{Q_2} - Q_2 > s_1}
  \leq \exp \left(\frac{-2s_1^2}{(n-1)/\phi^2} \right)\\
  &= \exp \left( {\frac{-2\left( \frac{n-1}{4 \phi} \right)^2}{(n-1)/\phi^2}} \right)
  = \exp \left( -\frac{n-1}{8} \right)
  \leq \exp \left( -\frac{\phi^2}{2} \right) \DOT
\end{align*}
Let $X = \max_{\sched\in\JUMP{I}}\frac{\csched[I, \sched]}{\copt(I)}$. Applying Inequality~\eqref{eq:lb-jump-unrestricted} the smoothed performance guarantee can be bounded from below as follows: 
\begin{align*}
  \El[I \sim \mathcal{I}]{X}
  &\ge \El[I \sim \mathcal{I}]{\left. X \right| \event} \cdot \Probl[I \sim \mathcal{I}]{\event}
  \geq \left( \phi - 1 \right) \cdot \left(1 - \exp \left( -\frac{\phi^2}{2} \right) \right) \\
  &= (\phi -1) - (\phi -1) \cdot \exp \left( -\frac{\phi^2}{2} \right)
  > \phi - 1.14
  = \Omega(\phi) \COMMA
\end{align*}
where the last inequality follows because $(\phi -1) \cdot \exp(-\phi^2/2) < 0.14$ for~$\phi > 2$.
\end{proof}

\subsection{Upper Bounds for List Schedules and Lex-jump Optimal Schedules}
\label{subsec:ub-list-lex-jump}

Although the worst case performance bound on unrestricted related machines for list scheduling is slightly worse than the one for lex-jump scheduling, we show that the smoothed performance guarantee of both schedules is~$O(\log \phi)$. In the next subsection, we show that this bound is asymptotically tight.

\begin{theorem}
\label{thm.mainI.list.lex}
Let~$\alpha$ be an arbitrary positive real. For~$\phi \geq 2$ and any $\phi$-smooth instance~$\mathcal{I}$ with unrestricted and related machines
\[ \Probl[I \sim \mathcal{I}]{\max_{\sched \in \LEX{I} \cup \LIST{I}}
\frac{\csched[I, \sched]}{\copt(I)} \geq \alpha} \leq \left( \frac{32\phi}{2^{\alpha/6}} \right)^{n/2} \]
and
\[ \El[I \sim \mathcal{I}]{\max_{\sched \in \LEX{I} \cup \LIST{I}}
\frac{\csched[I, \sched]}{\copt(I)}} \leq 18 \log_2 \phi + 30 = O(\log \phi) \DOT \]
\end{theorem}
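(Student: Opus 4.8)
The plan is to prove a bound that degrades gracefully as the total processing requirement $Q = \sum_{j\in J} p_j$ grows, in the spirit of the jump analysis, but with the key structural feature that lex-jump optimal (and list) schedules balance loads geometrically rather than additively. First I would establish the deterministic ingredient: for any instance $I$ and any $\sched \in \LEX{I} \cup \LIST{I}$, if the makespan is achieved on some machine and that machine's load exceeds $\copt$ by a factor $\beta$, then there is a chain of machines whose loads decay geometrically, and following this chain forces many jobs of relatively large size to be present, hence forces $Q$ to be large — specifically something like $Q \geq c\cdot 2^{\beta/c}\cdot\copt\cdot(\sum_i s_i)$ for a constant $c$ (the $2^{\alpha/6}$ in the statement suggests $c=6$). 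This is the standard "layered" argument used to prove the $O(\log m)$ worst-case bound of Aspnes et al./Cho–Sahni: partition the machines into groups by load level, show each group's total speed shrinks by a constant factor as load increases, and conclude the performance ratio is logarithmic in the number of machines — here we instead run it in reverse to lower-bound $Q$ in terms of the ratio. Both lex-jump optimality and the list-scheduling property give the needed "no machine is much more loaded than where the last-placed job could have gone" inequality, so the two cases are handled uniformly.

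Next I would combine this deterministic estimate with a probabilistic tail bound on $Q$ being small. Since the instance is $\phi$-smooth, each $p_j\in[0,1]$ stochastically dominates a uniform variable on $[0,1/\phi]$, so by Hoeffding (as in the proof of Theorem~\ref{thm:JumpRelatedUpper}) the probability that $Q \leq n/(4\phi)$, say, is exponentially small in $n$. More carefully, to get the stated $(32\phi/2^{\alpha/6})^{n/2}$ form, I would bound $\Probl[I\sim\mathcal{I}]{Q \leq t}$ directly: this is at most $\Prob[\sum_j x_j \leq t]$ with $x_j$ uniform on $[0,1/\phi]$, and for $t$ not too large this probability is at most $(\phi t/n \cdot e)^{n}$-type by the standard volume-of-a-simplex estimate $\Prob[\sum_{j=1}^n x_j \le t] \le (\phi t)^n/n!$. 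Setting $t$ equal to the threshold forced by the deterministic lemma when the ratio is $\geq\alpha$ (roughly $t \asymp 2^{\alpha/6}\cdot \copt \cdot \sum_i s_i \geq 2^{\alpha/6}\cdot Q$, using $\copt\sum_i s_i \ge Q$ — one has to be a little careful since $\copt$ and $Q$ are themselves data-dependent, so I would phrase the deterministic lemma as a statement purely about the ratio $\csched/\copt$ versus the ratio $n/Q$ or versus $Q/(\copt\sum s_i)$) and simplifying $(\phi t)^n/n!$ with $n! \geq (n/e)^n$ gives a bound of the shape $(C\phi/2^{\alpha/6})^{n}$, and absorbing constants yields the claimed $(32\phi/2^{\alpha/6})^{n/2}$.

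Finally, the expectation bound follows by integrating the tail: write $\El{Y} = \int_0^\infty \Probl{Y \geq \alpha}\,\d\alpha$ where $Y = \max_{\sched\in\LEX{I}\cup\LIST{I}} \csched/\copt$, split the integral at $\alpha_0 := 6\log_2(32\phi) + c'$ for a suitable constant, bound the integrand by $1$ below $\alpha_0$ and by the geometric tail $(32\phi/2^{\alpha/6})^{n/2} = (32\phi)^{n/2}\cdot 2^{-\alpha n/12}$ above $\alpha_0$, where for $\alpha \ge \alpha_0$ the base is $\le 1$ so the tail is summable and contributes only $O(1)$. This yields $\El{Y} \leq \alpha_0 + O(1) = 18\log_2\phi + 30$ once the constants are chosen (the factor $18 = 3\cdot 6$ presumably coming from slack in the deterministic lemma). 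The main obstacle I expect is the deterministic chain argument: getting the clean geometric decay with an explicit constant (so that the exponent $\alpha/6$ comes out), and in particular handling the interface between the loads being large on a whole \emph{block} of machines versus a single chain, and making sure the list-scheduling case really does satisfy the same inequality as lex-jump optimality (for list scheduling one argues about the moment each job was inserted, using that the machine it landed on was then minimum-load, which after the fact gives the comparison up to the job's own size $\le 1$). The probabilistic part is routine given the simplex-volume bound, and the integration is elementary.
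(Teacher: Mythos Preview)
Your high-level architecture is right --- a deterministic structural lemma coming from the layered/chain argument, then a probabilistic tail bound, then integration --- and so is your instinct to treat $\LEX{I}$ and $\LIST{I}$ uniformly via the common inequality you describe (the paper formalizes this as ``near list schedules''). But the specific deterministic claim you propose is stated in the wrong direction and is in fact impossible: you write that a large ratio forces $Q \geq c\cdot 2^{\beta/c}\cdot\copt\cdot\sum_i s_i$, yet every instance satisfies $Q \leq \copt\cdot\sum_i s_i$, so for $\beta$ even moderately large your inequality is a contradiction, not a probabilistically rare event. Correspondingly, the simplex-volume bound on $\Probl{Q\leq t}$ is the wrong probabilistic tool here, because the event you need to make rare is not ``$Q$ is small'' in aggregate.

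The paper's deterministic lemma goes the other way: a large ratio forces many jobs to be individually \emph{small}. The layered argument shows that machine speeds decay geometrically across the load classes (roughly halving every six classes), so the machines in the bottom classes have speed $\lesssim 2^{-c/6}$ times the fastest; any job that the optimal schedule places on such a machine has $p_j \leq s_i\cdot\copt \leq 2^{-c/6+O(1)}$ (using $p_j\leq 1$ to normalize $\copt$). A volume-redistribution argument then shows that at least $n/2$ of the jobs must live on these slow machines in the optimum. Thus ``ratio $\geq\alpha$'' implies ``$\geq n/2$ jobs have $p_j\leq 8\cdot 2^{-\alpha/6}$''. The probability of the latter under a $\phi$-smooth instance is a binomial tail: each job is that small with probability $\leq 8\phi\cdot 2^{-\alpha/6}$, and summing $\binom{n}{k}$ over $k\geq n/2$ contributes $2^n$, giving exactly $(32\phi/2^{\alpha/6})^{n/2}$. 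Your integration step for the expectation is fine once this tail is in place. So the gap is not in the framework but in the content of the structural lemma: rerun the chain argument with the goal of upper-bounding the sizes of the jobs that land on the slow end, not lower-bounding $Q$.
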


Note that the assumption~$\phi \geq 2$ in
Theorem~\ref{thm.mainI.list.lex} is no real restriction as for $\phi
\in [1, 2)$ any $\phi$-smooth instance is a $2$-smooth instance. Hence,
for these values we can apply all bounds from
Theorem~\ref{thm.mainI.list.lex} when substituting~$\phi$ by~$2$. In
particular, the expected value is a constant.

In the remainder of this section, we will use the following notation (see also Appendix~\ref{sec:appendix-table}).
Let $\jobset[i,j](\sched)$ denote the set
of all jobs that are scheduled on machine~$i$ and have index at most~$j$, i.e.,
$\jobset[i,j](\sched) = \jobsetsched{i} \cap \SET{ 1,\ldots, j }$. If~$\sched$ is clear
from the context, then we just write~$\jobset[i,j]$.
We start with observing an essential property that both lex-jump optimal
schedules and list schedules have in common.

\begin{definition}
We call a schedule~$\sched$ on machines $1, \ldots, m$ with speeds $s_1,
\ldots, s_m$ a \emph{near list schedule}, if we can index the jobs in
such a way that
\begin{equation}
\label{eq.key.property}
\load{i'} + \frac{p_j}{s_{i'}} \geq \load{i} -
\sum \limits_{\ell \in \jobset[i,j-1](\sched)} \frac{p_\ell}{s_i}
\end{equation}
for all machines~$i' \neq i$ and all jobs~$j \in \jobs{i}$. With $\NL{I}$ we denote the set of all near list schedules for instance~$I$.
\end{definition}

Inequality~\eqref{eq.key.property} can be interpreted as follows. Assume that the jobs are already indexed correctly and imagine that on each machine the jobs form a stack, ordered from top to bottom ascendingly according to their index. Now, consider an arbitrary job~$j$ on machine~$i$ (see Figure~\ref{fig:near-list-schedule1}). Inequality~\eqref{eq.key.property} states that the completion time of job~$j$ after removing all jobs above~$j$ is minimized on machine~$i$ in case only job~$j$ is allowed to move (see Figure~\ref{fig:near-list-schedule2}).

\begin{artclfig}\newcommand{\height}{10em}
  \subfloat[Jobs on machine~$i$, including job~$j$, visualized as a stack]{
    \includegraphics[height=\height, page=1]{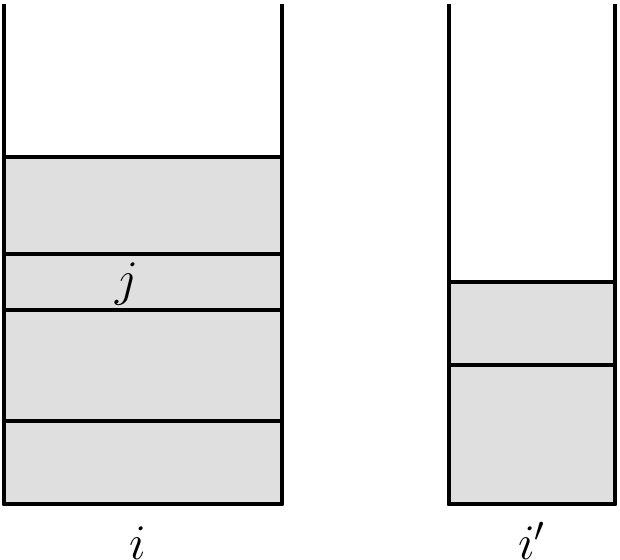}
    \label{fig:near-list-schedule1}
  }
  \hspace{5ex}
  \subfloat[Job~$j$ does not benefit from jumping to machine~$i'$]{
    \includegraphics[height=\height, page=2]{NearListSchedule.pdf}
    \label{fig:near-list-schedule2}
  }
  \caption{Interpretation of Inequality~\eqref{eq.key.property}}
\end{artclfig}

\begin{lemma}
\label{lemma.NL}
For any instance~$I$ the relation $\LEX{I} \cup \LIST{I} \subseteq
\NL{I}$ holds.
\end{lemma}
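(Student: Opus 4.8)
The plan is to treat lex-jump optimal schedules and list schedules separately, and in each case to produce an indexing of the jobs under which Inequality~\eqref{eq.key.property} holds for all machines~$i' \neq i$ and all jobs~$j \in \jobs{i}$. For a schedule $\sched \in \LEX{I}$ I would simply invoke local optimality: since no job can jump to a machine where it finishes strictly earlier, and job~$j \in \jobs{i}$ currently finishes at time~$\load{i}$ whereas on machine~$i' \neq i$ it would finish at time $\load{i'} + p_j/s_{i'}$, we have $\load{i'} + p_j/s_{i'} \geq \load{i}$. As $\sum_{\ell \in \jobset[i,j-1](\sched)} p_\ell/s_i \geq 0$, this yields $\load{i'} + p_j/s_{i'} \geq \load{i} \geq \load{i} - \sum_{\ell \in \jobset[i,j-1](\sched)} p_\ell/s_i$, which is Inequality~\eqref{eq.key.property} for \emph{any} indexing; hence $\sched \in \NL{I}$.

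For a schedule $\sched \in \LIST{I}$ the idea is to index the jobs in the \emph{reverse} of the order in which list scheduling inserts them, so that the first inserted job has index~$n$ and the last has index~$1$. Then, at the moment job~$j$ is inserted, the jobs already placed are precisely those of index greater than~$j$; consequently, for the machine~$i$ on which~$j$ is placed, the partial load of~$i$ at that time equals $\sum_{\ell \in \jobs{i},\, \ell > j} p_\ell/s_i = \load{i} - p_j/s_i - \sum_{\ell \in \jobset[i,j-1](\sched)} p_\ell/s_i$. Since list scheduling assigns~$j$ to a machine minimizing its completion time with respect to the current partial schedule, for every~$i' \neq i$ the completion time of~$j$ on~$i$ does not exceed the completion time it would have on~$i'$; bounding the partial load of~$i'$ from above by~$\load{i'}$ and simplifying the left-hand side then gives
\[
\load{i} - \sum_{\ell \in \jobset[i,j-1](\sched)} \frac{p_\ell}{s_i} \;\leq\; \load{i'} + \frac{p_j}{s_{i'}} ,
\]
which is exactly Inequality~\eqref{eq.key.property}; hence $\sched \in \NL{I}$, and combining the two inclusions proves the lemma.

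The lex-jump case is immediate, so I expect the only technical care to be needed in the list-scheduling case: one must check that under the reversed indexing the jobs placed on machine~$i$ before job~$j$ are exactly $\{\ell \in \jobs{i} : \ell > j\}$, and then rewrite the partial load of~$i$ in terms of~$\load{i}$ and the tail sum $\sum_{\ell \in \jobset[i,j-1](\sched)} p_\ell/s_i$ so that, after adding~$p_j/s_i$, it matches the right-hand side of~\eqref{eq.key.property} on the nose.
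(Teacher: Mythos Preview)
Your proposal is correct and follows essentially the same approach as the paper: for lex-jump optimal schedules you use the stronger inequality $\load{i'}+p_j/s_{i'}\geq\load{i}$ (valid under any indexing), and for list schedules you index the jobs in reverse insertion order and combine the list-scheduling selection rule with the monotonicity of machine loads to obtain Inequality~\eqref{eq.key.property}. The paper's proof is virtually identical, differing only in notation (it writes $\load{i}'$ for the partial load before assigning~$j$ rather than expressing it as $\load{i}-p_j/s_i-\sum_{\ell\in\jobset[i,j-1]}p_\ell/s_i$).
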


Note that in general neither $\LEX{I} \subseteq \LIST{I}$ nor $\LIST{I} \subseteq \LEX{I}$ holds (see Figure~\ref{fig:strict-supset}). 
 Moreover, there also exist near list schedules that are neither in $\LEX{I}$ nor in $\LIST{I}$ (see Figure~\ref{fig:strict-supset generalization}), i.e., near list schedules are a non-trivial generalization of both lex-jump optimal schedules and list schedules.

\begin{artclfig}
\newcommand{\height}{10em}
  \subfloat[A list schedule which is not lex-jump optimal]{
    \includegraphics[height=\height]{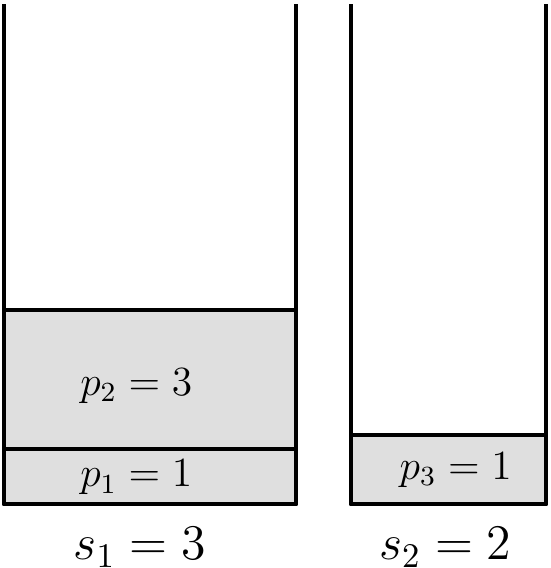}
  }
  \hspace{5ex}
  \subfloat[A lex-jump optimal schedule which is no list schedule]{
    \includegraphics[height=\height]{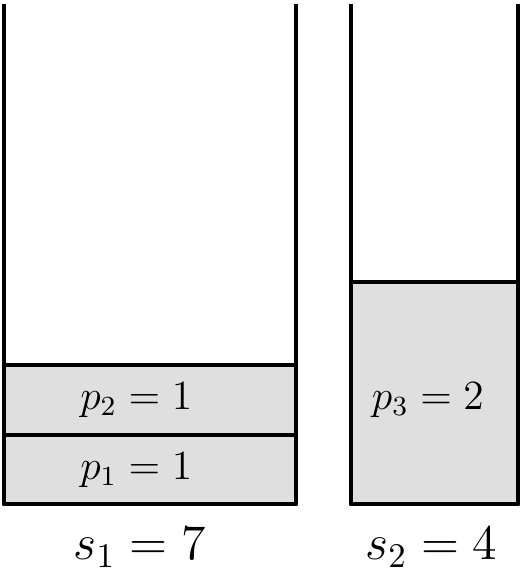}
  }
  \hspace{5ex}
  \subfloat[A near list schedule which is neither lex-jump optimal nor a list schedule]{
    \includegraphics[height=\height]{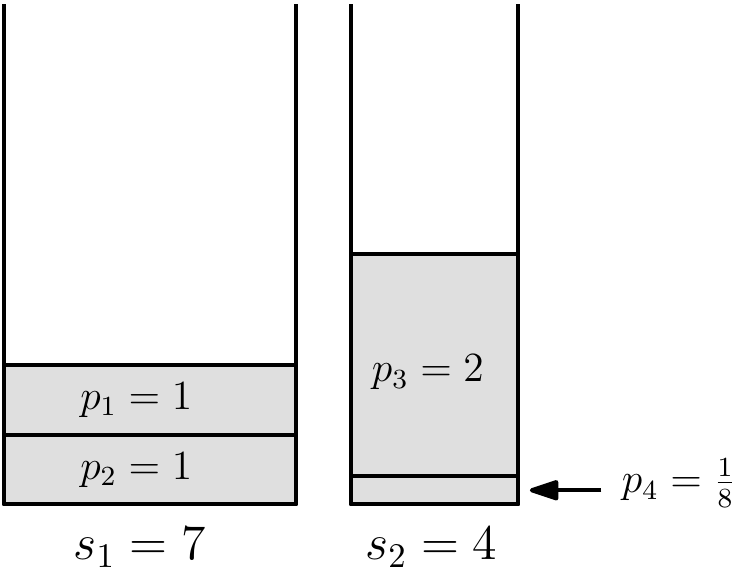}
    \label{fig:strict-supset generalization}
  }
  \caption{Relationship between $\LEX{I}$, $\LIST{I}$, and $\NL{I}$}
  \label{fig:strict-supset}
\end{artclfig}

\begin{proof}[\ProofText{Lemma~\ref{lemma.NL}}]
For any schedule~$\sched \in \LEX{I}$, we can index the jobs
arbitrarily and, by definition, even the stronger inequality $\load{i'}
+ p_j/s_{i'} \geq \load{i}$ holds. For~$\sched \in \LIST{I}$ we can
index the jobs in reverse order in which they appear in the list that
was used for list scheduling.
Consider an arbitrary job $j \in \jobs{i}$ and a machine $i'
\neq i$. Let $\load{i}', \load{i'}'$ and $\load{i}, \load{i'}$ denote
the loads of machines~$i$ and~$i'$ before assigning job~$j$ to
machine~$i$ and the loads of~$i$ and~$i'$ in the final schedule,
respectively. Then, $\load{i}' + p_j/s_i \leq \load{i'}' + p_j/s_{i'}$
as~$j$ is assigned to machine~$i$ according to list scheduling. Since
$\load{i} = \load{i}' + \sum_{\ell \in \jobset[i,j]} p_\ell/s_i$ and $\load{i'} \geq
\load{i'}'$, this implies $\load{i'} + p_j/s_{i'} \geq \load{i'}' + p_j / s_{i'} \geq \load{i}' + p_j / s_i = \load{i} - \sum_{\ell \in \jobset[i,j-1]} p_\ell/s_i$.
\end{proof}

In the remainder, we fix an instance~$I$ and consider an arbitrary
schedule~$\sched \in \NL{I}$ with appropriate indices of the jobs
such that Inequality~\eqref{eq.key.property} holds.
To prove Theorem~\ref{thm.mainI.list.lex}, we show
that in case the ratio of $\csched[I, \sched]$ over $\copt(I)$ is large,
then instance~$I$ needs to have many very small jobs,
see~Corollary~\ref{corol.many.small.jobs}.
This holds even when  the instance $I$ is deterministically picked by some adversary.
This observation allows us to prove the main theorem of this subsection by showing that for any $\phi$-smooth instance, there are only ``few'' small jobs in expectation. The latter implies that a large ratio only happens with (exponentially) small probability.

In our proofs, we adopt some of the notation also used by Czumaj and
V\"{o}cking~\cite{Voecking:2007} (see also Appendix~\ref{sec:appendix-table}). 
Given a schedule~$\sched$, we set $c = \FLOOR{ \csched/\copt } - 1$. Recall that the
machines are ordered such that $s_1 \geq \hdots \geq s_m$. For any
integer~$k \leq c$ let $H_k = \SET{ 1, \ldots, i_k }$ where $i_k = \MAX{ i \in M \WHERE \load{i'}
\geq k \cdot \copt \, \forall \, i' \leq i }$. Note that $i_k = m$ for all~$k
\leq 0$ and hence~$H_k = M$ for such~$k$ (see Figure~\ref{fig:machine-classification}). Further, define $\H{k} = H_k
\setminus H_{k+1}$ for all $k \in \SET{ 0, \ldots, c-1 }$ and $\H{c} = H_c$.
Note that this classification always refers to schedule~$\sched$ even if
additionally other schedules are considered. Some properties follow
straightforwardly.

\begin{artclfig}\newcommand{\height}{14em}
\includegraphics[height=\height]{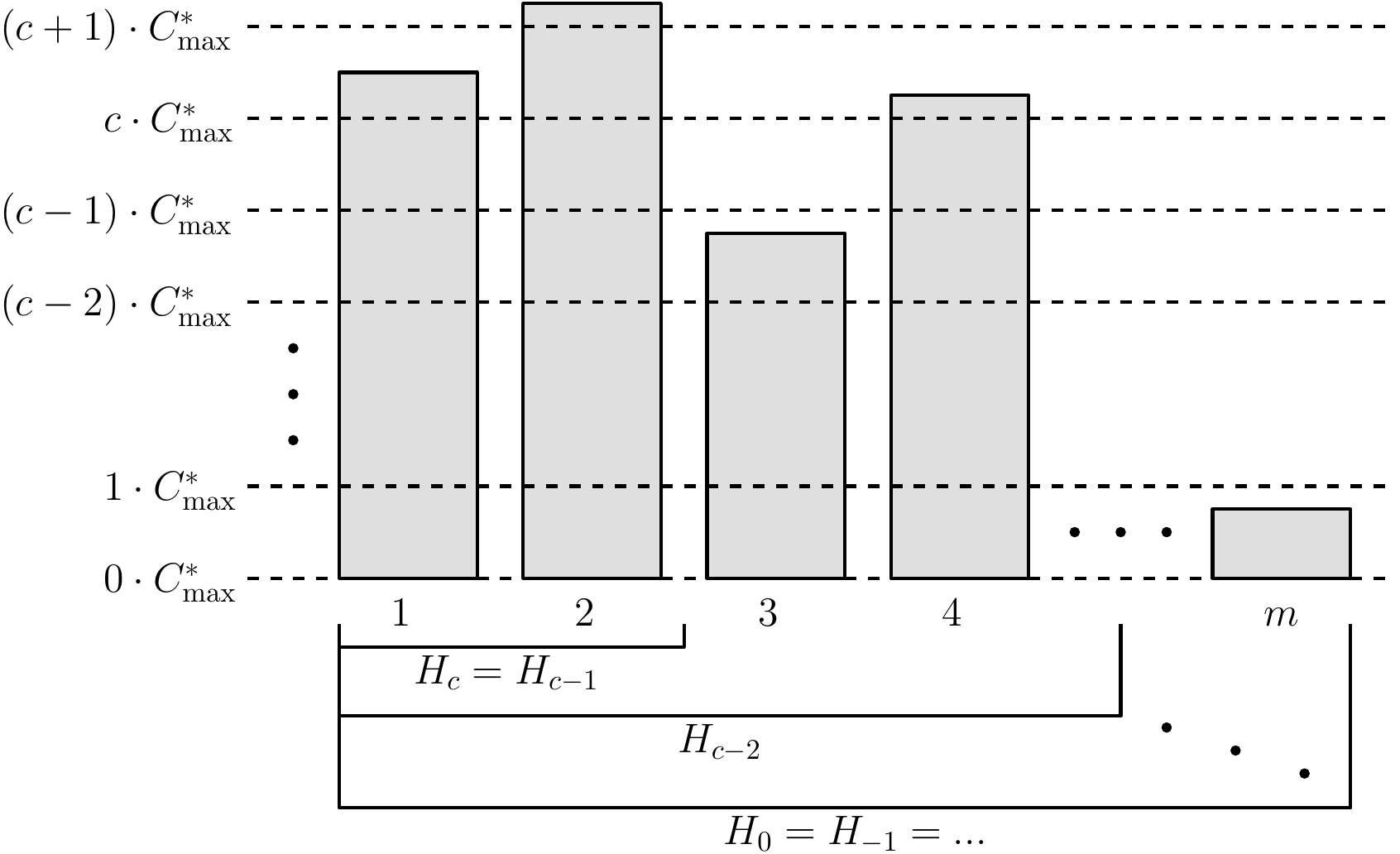}
\caption{Machine classification by Czumaj and V{\"o}cking}
\label{fig:machine-classification}
\end{artclfig}

\begin{property}
\label{prop:minimum-load}
For each machine~$i \in H_k$, $\load{i} \geq k \cdot \copt$.
\end{property}

\begin{property}
Machine~$i_k +1$, if it exists, is the first machine in $M \setminus H_k$, i.e., the machine with the least index,
and, hence, a fastest machine in $M \setminus H_k$.
\end{property}

\begin{property}
\label{prop:maximum-load}
$\load{i_k +1} < k \cdot \copt$ for all $k \in \SET{ 1, \ldots, c }$, and $\load{1} < (c+2) \cdot \copt$.
\end{property}

As mentioned, we need to show that there are many small jobs. To do so,
we will show that the the speeds of the machines in low classes, i.e., $R_0$ and
$R_1$, are exponentially small with respect to the machines in the
highest class $R_c$ (Lemma~\ref{lemma.no.neighboring.classes.empty})
and that the machines in low classes
need to process high volume (Lemma~\ref{lemma.job.redistribution}).
We start by showing that the highest class is nonempty.

\begin{lemma}
\label{lem:nonemptyHc}
Machine~$1$ is in class~$\H{c}$.
\end{lemma}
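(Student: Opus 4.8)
The plan is to reduce the statement to the single inequality $\load{1} \geq c \cdot \copt$. Unravelling the definitions, machine~$1$ lies in $H_c$ (hence in $\H{c} = H_c$) precisely when $i_c \geq 1$, and the condition defining $i_c$ evaluated at $i = 1$ is exactly ``$\load{1} \geq c \cdot \copt$''. So it suffices to establish this one bound; as a byproduct this also shows that $i_c$ is well defined, i.e.\ that the set in its definition is nonempty.

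To prove $\load{1} \geq c \cdot \copt$, I would start from a critical machine~$\critmach$, for which $\load{\critmach} = \csched \geq (c+1)\copt$ by the definition $c = \FLOOR{\csched/\copt} - 1$. If $\critmach = 1$ we are immediately done, so assume $\critmach \neq 1$. We may assume $\copt > 0$ (otherwise all processing requirements are zero and the statement is vacuous), so machine~$\critmach$ carries at least one job; let $j^{\star}$ be the job assigned to $\critmach$ with the \emph{smallest} index. Then $\jobset[\critmach, j^{\star}-1](\sched) = \emptyset$, so the correction term in the near list property~\eqref{eq.key.property} vanishes. Since $\sched \in \NL{I}$ with the jobs indexed so that~\eqref{eq.key.property} holds, applying it with $i = \critmach$, $i' = 1$, and $j = j^{\star}$ gives
\[
  \load{1} + \frac{p_{j^{\star}}}{s_1} \;\geq\; \load{\critmach} \;\geq\; (c+1)\copt \DOT
\]

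It remains to bound $p_{j^{\star}}/s_1$ by $\copt$, which is the only place optimality enters and is completely routine: in an optimal schedule job~$j^{\star}$ sits on some machine~$i_0$, and since $s_{i_0} \leq s_1$ its contribution $p_{j^{\star}}/s_{i_0} \geq p_{j^{\star}}/s_1$ to the load of $i_0$ is at most $\copt$. Plugging $p_{j^{\star}}/s_1 \leq \copt$ into the displayed inequality yields $\load{1} \geq (c+1)\copt - \copt = c \cdot \copt$, as needed. There is no genuine obstacle here; the only points requiring a little care are choosing the smallest-index job on $\critmach$ so that the near list correction term disappears, and dispatching the degenerate cases $\critmach = 1$ and $\copt = 0$.
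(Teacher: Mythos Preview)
Your proposal is correct and follows essentially the same approach as the paper: reduce to showing $\load{1} \geq c\cdot\copt$, pick the smallest-index job on a critical machine so the correction term in~\eqref{eq.key.property} vanishes, and bound $p_{j^\star}/s_1$ by~$\copt$ using that machine~$1$ is fastest. The only differences are cosmetic---you are a bit more explicit about the degenerate cases $\critmach=1$ and $\copt=0$ and spell out the $p_{j^\star}/s_1\le\copt$ argument in slightly more detail.
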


\begin{proof}
Let~$i$ be a critical machine. If~$i = 1$, then we obtain $L_1/\copt = \csched/\copt > c$. Otherwise we apply Inequality~\eqref{eq.key.property} for the job~$j = \MIN{ \ell \in
\jobset[i]}$ with the smallest index on machine~$i$ and for machine~$1$. This yields $\load{1} + p_j/s_1 \geq \load{i}$. Hence,
$\load{1}/\copt \geq \load{i}/\copt - (p_j/s_1)/\copt \geq \csched/\copt - 1 \geq c$,
where the second inequality is due to the fact that any job can
contribute at most~$\copt$ to the makespan of a fastest machine.
\end{proof}

Let $t$ and $k$ be integers satisfying $0 \leq t \leq k \leq c$. Several times we will consider the first many jobs 
on some machine~$i \in H_k$ which contribute at least~$t \cdot
\copt$ to the load of machine~$i$. We denote the set of those jobs
by~$\jobset[i,\geq t]$. Formally, 
$$
\jobset[i,\geq t] = \jobset[i,j_i^t] \text{ for } j_i^t
= \min \big\{ j \WHERE \sum_{\ell \in \jobset[i,j]} p_\ell/s_i \geq t \cdot \copt
\big\}.
$$
Using this notation, Lemma \ref{lemma:main-prop-near-list-schedule} and Corollary \ref{corol.optimal.assignment} restrict the machines on which a job in $\jobset[i,\geq t]$ can be scheduled in an optimal schedule.

\begin{lemma}
\label{lemma:main-prop-near-list-schedule}
Let~$k_1 > k_2$ and~$t \leq k_1$ be positive integers, let $i_1 \in H_{k_1}$ and $i_2 \in M \setminus H_{k_2}$ be  machines in~$H_{k_1}$ and not in~$H_{k_2}$, respectively, and let $j \in \jobset[i_1,\geq t]$ be a job on machine~$i_1$. Then, the load job~$j$ would contribute to machine~$i_2$ is bounded from below by $p_j/s_{i_2} > (k_1-k_2-t) \cdot \copt$.
\end{lemma}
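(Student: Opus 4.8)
### Proof plan for Lemma~\ref{lemma:main-prop-near-list-schedule}

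The plan is to compare the load $j$ contributes to machine $i_2$ with the load it contributes to machine $i_1$, using the defining inequality of a near list schedule together with the load bounds provided by the machine classification. Since $j\in\jobset[i_1,\ge t]$, by definition of $j_{i_1}^t$ the jobs on $i_1$ with index strictly smaller than $j$ contribute less than $t\cdot\copt$ to the load of $i_1$; that is, $\sum_{\ell\in\jobset[i_1,j-1]}p_\ell/s_{i_1} < t\cdot\copt$. First I would apply Inequality~\eqref{eq.key.property} with $i=i_1$, $i'=i_2$, and the job $j$, which gives
\[
  \load{i_2} + \frac{p_j}{s_{i_2}} \;\ge\; \load{i_1} - \sum_{\ell\in\jobset[i_1,j-1](\sched)}\frac{p_\ell}{s_{i_1}} \;>\; \load{i_1} - t\cdot\copt .
\]
Since $i_1\in H_{k_1}$, Property~\ref{prop:minimum-load} yields $\load{i_1}\ge k_1\cdot\copt$, so the right-hand side is at least $(k_1 - t)\cdot\copt$.

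Next I need an upper bound on $\load{i_2}$. Here I would split into two cases according to whether machine $i_2$ is the first machine outside $H_{k_2}$, i.e.\ whether $i_2 = i_{k_2}+1$. If $i_2 = i_{k_2}+1$, Property~\ref{prop:maximum-load} gives directly $\load{i_2} < k_2\cdot\copt$. If $i_2 > i_{k_2}+1$, then $i_2$ is slower than machine $i_{k_2}+1$ (speeds are sorted), and since $i_2\notin H_{k_2}$ while by the maximality in the definition of $i_{k_2}$ there is some machine $i'\le i_2$ with $\load{i'}<k_2\cdot\copt$; combined with monotonicity one still gets $\load{i_2}<k_2\cdot\copt$ — actually the cleanest route is to note that $i_2\notin H_{k_2}$ means $i_2 > i_{k_2}$, hence $i_2\ge i_{k_2}+1$, and use $\load{i_2}\le\load{i_{k_2}+1}<k_2\cdot\copt$ provided loads are non-increasing in the index. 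Whether loads are monotone in the index is exactly the subtle point, so I would instead argue directly: if $i_2\notin H_{k_2}$ then $i_2$ cannot be among $\{1,\dots,i_{k_2}\}$, so $i_2\ge i_{k_2}+1$; and for the bound it suffices that $\load{i_{k_2}+1}<k_2\cdot\copt$ (Property~\ref{prop:maximum-load}) together with the fact that any machine of index $\ge i_{k_2}+1$ also has load $<k_2\cdot\copt$ — this last fact follows because $H_{k_2}$ is a prefix $\{1,\dots,i_{k_2}\}$ and $i_{k_2}$ is maximal with all of $\load{1},\dots,\load{i_{k_2}}\ge k_2\cdot\copt$, which forces $\load{i}<k_2\cdot\copt$ to fail to extend the prefix — more carefully, maximality only tells us $\load{i_{k_2}+1}<k_2\cdot\copt$, so I would simply use $i_2=i_{k_2}+1$ when it gives the bound and otherwise appeal to Property~\ref{prop:maximum-load} applied after relabelling, or observe that we may assume $i_2=i_{k_2}+1$ is the fastest machine not in $H_{k_2}$ (Property following Property~\ref{prop:minimum-load}) and $p_j/s_{i_2}$ is smallest for the fastest such machine, so proving the bound for $i_2=i_{k_2}+1$ suffices. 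In all cases we conclude $\load{i_2} < k_2\cdot\copt$.

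Combining the two estimates gives
\[
  \frac{p_j}{s_{i_2}} \;>\; \load{i_1} - t\cdot\copt - \load{i_2} \;>\; (k_1 - t)\cdot\copt - k_2\cdot\copt \;=\; (k_1 - k_2 - t)\cdot\copt,
\]
which is the claimed bound. The main obstacle I anticipate is handling the upper bound on $\load{i_2}$ cleanly: Property~\ref{prop:maximum-load} as stated only bounds the load of the \emph{first} machine outside $H_{k_2}$, so I must either reduce to that machine (using that it is the fastest in $M\setminus H_{k_2}$, hence makes $p_j/s_{i_2}$ smallest, so the inequality there is the hardest case) or carry a short monotonicity-type argument for machines further down the list. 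Everything else is a direct substitution of Properties~\ref{prop:minimum-load} and~\ref{prop:maximum-load} into Inequality~\eqref{eq.key.property}.
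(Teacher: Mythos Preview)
Your approach is essentially the paper's: apply Inequality~\eqref{eq.key.property} at machine~$i_1$ against the \emph{first} machine $i'_2 = i_{k_2}+1$ outside $H_{k_2}$, use Properties~\ref{prop:minimum-load} and~\ref{prop:maximum-load} together with the bound $\sum_{\ell\in\jobset[i_1,j-1]}p_\ell/s_{i_1} < t\cdot\copt$, and then pass from $i'_2$ to an arbitrary $i_2\in M\setminus H_{k_2}$ via $s_{i'_2}\geq s_{i_2}$.

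One point to clean up: your sentence ``In all cases we conclude $\load{i_2} < k_2\cdot\copt$'' is not true for arbitrary $i_2\in M\setminus H_{k_2}$; by maximality of $i_{k_2}$ only $\load{i_{k_2}+1} < k_2\cdot\copt$ is guaranteed, and machines further down may well have larger load. You already spotted this and identified the right fix (reduce to the fastest machine outside $H_{k_2}$, where $p_j/s$ is smallest), so just commit to that route from the start rather than trying to bound $\load{i_2}$ directly --- that is exactly what the paper does.
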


\begin{proof}
We apply Inequality~\eqref{eq.key.property} for machine~$i_1$, for the first machine~$i'_2$ that does not belong to~$H_{k_2}$, and for job~$j$ to obtain
\[
  L_{i'_2} + \frac{p_j}{s_{i'_2}} \geq L_{i_1} - \sum_{\ell \in \jobset[i_1,j-1]} \frac{p_\ell}{s_{i_1}} \COMMA
\]
which implies
\[
  \frac{p_j}{s_{i'_2}} \geq L_{i_1} - L_{i'_2} - \sum_{\ell \in \jobset[i_1,j-1]} \frac{p_\ell}{s_{i_1}} \DOT
\]
By the choice of the machines~$i_1$ and~$i'_2$ and Properties~\ref{prop:minimum-load} and~\ref{prop:maximum-load} we obtain $L_{i_1} \geq k_1 \cdot \copt$ and $L_{i'_2} < k_2 \cdot \copt$. Furthermore, $j \in \jobset[i_1,\geq t]$ yields $\sum_{\ell \in \jobset[i_1,j-1]} p_\ell/s_{i_1} < t \cdot \copt$. Hence, $p_j/s_{i'_2} > (k_1-k_2-t) \cdot \copt$. The claim follows since $s_{i'_2} \geq s_{i_2}$.
\end{proof}

\begin{cor}
\label{corol.optimal.assignment}
Let~$i \in H_k$ be an arbitrary machine and let~$t \in \SET{ 1, \ldots, k }$ be an integer. Then, in any optimal schedule any job $j \in \jobset[i,\geq t]$ is assigned to machines from $H_{k-t-1}$.
\end{cor}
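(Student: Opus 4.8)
The plan is to obtain this purely as a contradiction argument on top of Lemma~\ref{lemma:main-prop-near-list-schedule}, splitting into two cases according to the sign of $k-t-1$. The key point to notice is that the constant appearing in the lemma's conclusion, $k_1 - k_2 - t$, becomes exactly $1$ when one plugs in $k_1 = k$ and $k_2 = k-t-1$, so the lemma says that such a job would contribute strictly more than $\copt$ to the machine it is moved to — which is impossible in an optimal schedule.

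Concretely, first I would dispose of the boundary case $k-t-1 \leq 0$: here $H_{k-t-1} = M$ by the remark following the definition of the $H_\ell$'s (recall $i_\ell = m$, hence $H_\ell = M$, for all $\ell \leq 0$), so every machine lies in $H_{k-t-1}$ and there is nothing to prove. Since the corollary only asserts something for $t \in \{1,\dots,k\}$, we may also assume $k \geq 1$, so in the remaining case $k-t-1 \geq 1$ all of $k$, $t$, and $k-t-1$ are positive integers, which is what Lemma~\ref{lemma:main-prop-near-list-schedule} requires of its parameters.

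Now fix an arbitrary optimal schedule and suppose, for contradiction, that some job $j \in \jobset[i,\geq t]$ is assigned in that optimal schedule to a machine $i_2 \notin H_{k-t-1}$, i.e.\ $i_2 \in M \setminus H_{k-t-1}$. Apply Lemma~\ref{lemma:main-prop-near-list-schedule} with $k_1 = k$, $k_2 = k-t-1$, $i_1 = i$, and the job $j$: the hypotheses $k_1 > k_2$ (equivalently $t \geq 0$), $t \leq k_1$ (given, since $t \leq k$), $i_1 \in H_{k_1}$ (given), $i_2 \in M \setminus H_{k_2}$ (our assumption), and $j \in \jobset[i_1,\geq t]$ (given) all hold. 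The lemma then yields
\[
  \frac{p_j}{s_{i_2}} \;>\; (k_1 - k_2 - t)\cdot\copt \;=\; \bigl(k - (k-t-1) - t\bigr)\cdot\copt \;=\; \copt .
\]
Hence the load of machine $i_2$ in this optimal schedule is at least $p_j/s_{i_2} > \copt$, contradicting the fact that $\copt$ is the optimal makespan. Therefore every job of $\jobset[i,\geq t]$ must be assigned to a machine in $H_{k-t-1}$, as claimed.

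I do not anticipate any genuine difficulty here; the corollary is a one-line specialization of the preceding lemma, and the only thing requiring a moment's care is the degenerate range $k-t-1 \leq 0$, where Lemma~\ref{lemma:main-prop-near-list-schedule} is not literally applicable (its parameters would fail to be positive) but the statement is vacuously true because $H_{k-t-1} = M$.
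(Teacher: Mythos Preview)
Your argument is correct and is essentially the paper's own proof: assume some $j \in \jobset[i,\geq t]$ lands on a machine outside $H_{k-t-1}$ in an optimal schedule, apply Lemma~\ref{lemma:main-prop-near-list-schedule} with $k_1 = k$, $k_2 = k-t-1$ to get $p_j/s_{i_2} > \copt$, contradiction. The only difference is that you explicitly treat the boundary case $k-t-1 \leq 0$ (where $H_{k-t-1} = M$ and the claim is vacuous), whereas the paper leaves this implicit; this extra care is harmless and arguably cleaner.
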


\begin{proof}
Assume, for contradiction, that there is a job $j \in \jobset[i,\geq t]$ which is assigned to a machine~$i' \in M \setminus H_{k-t-1}$ by an optimal schedule. By Lemma~\ref{lemma:main-prop-near-list-schedule} this job causes a load of more than $(k - (k-t-1) - t) \cdot \copt = \copt$ on this machine contradicting the assumption that the considered schedule is optimal.
\end{proof}

Czumaj and V\"{o}cking~\cite{Voecking:2007} showed that in a lex-jump optimal
schedule the speeds of any two machines which are at least two classes apart
differ by a factor of at least~$2$. Aspnes et~al.~\cite{DBLP:journals/jacm/AspnesAFPW97} showed a similar property. In general, near list schedules have a
slightly weaker property. 

\begin{lemma}
\label{lemma.decreasing.speeds}
Let~$k \in \SET{ 5, \ldots, c }$ and assume $H_k \neq \emptyset$. The speed of any machine in
class~$H_k$ is at least twice the speed of any machine in~$M \setminus H_{k-4}$.
\end{lemma}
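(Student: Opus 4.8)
The plan is to reduce the lemma to a comparison between two particular machines and then play the near list property against the load bounds coming from the machine classification. Since the speeds are non-increasing in the machine index and $H_k \subseteq H_{k-4}$, the slowest machine of $H_k$ is machine $i_k$ and the fastest machine of $M \setminus H_{k-4}$ is machine $i_{k-4}+1$. Hence it suffices to prove $s_{i_k} \geq 2 s_{i_{k-4}+1}$, since then any machine $a \in H_k$ and any machine $b \in M \setminus H_{k-4}$ satisfy $s_a \geq s_{i_k} \geq 2 s_{i_{k-4}+1} \geq 2 s_b$. Because $k \geq 5$ we have $1 \leq k-4 \leq c$, so Property~\ref{prop:minimum-load} gives $\load{i_k} \geq k \cdot \copt$ and Property~\ref{prop:maximum-load} gives $\load{i_{k-4}+1} < (k-4) \cdot \copt$; thus the two loads differ by strictly more than $4\cdot\copt$, and it is this gap of four classes that will be converted into a factor of two between the speeds. (Note that $H_k \neq \emptyset$ together with Lemma~\ref{lem:nonemptyHc} gives $1 \in \H{c} \subseteq H_k$, so $i_k$ and the top job on it referred to below are well defined.)

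Next I would look at the top of the job stack on machine $i_k$. Let $j_1$ be the job of smallest index scheduled on $i_k$. Applying Inequality~\eqref{eq.key.property} with $i = i_k$, $i' = i_{k-4}+1$ and $j = j_1$ — for which $\jobset[i_k,j_1-1](\sched) = \emptyset$, so the correction term vanishes — yields $\load{i_{k-4}+1} + p_{j_1}/s_{i_{k-4}+1} \geq \load{i_k}$, hence $p_{j_1}/s_{i_{k-4}+1} > \load{i_k} - \load{i_{k-4}+1} > 4\cdot\copt$, i.e.\ $p_{j_1} > 4\copt \cdot s_{i_{k-4}+1}$. In the ``light'' case $p_{j_1}/s_{i_k} \leq \copt$, that is $p_{j_1} \leq \copt\cdot s_{i_k}$, these two estimates combine directly to $s_{i_k} > 4 s_{i_{k-4}+1}$, which is even stronger than claimed; the same computation applied to any job $j$ on $i_k$ whose prefix on $i_k$ still contributes less than $\copt$ to $\load{i_k}$ (so that $\sum_{\ell \in \jobset[i_k,j-1](\sched)} p_\ell/s_{i_k} < \copt$ and $p_j/s_{i_k} < \copt$) gives $s_{i_k} > 3 s_{i_{k-4}+1}$.

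The one case not covered by this — and the part I expect to be the real obstacle — is when no such light job exists, i.e.\ the top job $j_1$ of $i_k$ already satisfies $p_{j_1}/s_{i_k} \geq \copt$, so that $p_{j_1}/s_{i_k}$ is not controlled by $\copt$. Here I would leave the schedule $\sched$ and appeal to an optimal schedule via Corollary~\ref{corol.optimal.assignment}: more generally, for every $i \in H_k$ and every $t \in \{1,\dots,k\}$ the jobs in $\jobset[i,\geq t](\sched)$ have total processing requirement at least $t\cdot\copt\cdot s_i$ by Property~\ref{prop:minimum-load}, and by Corollary~\ref{corol.optimal.assignment} they must be placed on machines of $H_{k-t-1}$ in any optimal schedule, which can absorb at most $\copt\cdot\sum_{i'\in H_{k-t-1}}s_{i'}$; summing over $i \in H_k$ shows that the total speed of $H_{k-t-1}$ is at least $t$ times the total speed of $H_k$, and in particular the heavy top job $j_1$ fits within time $\copt$ on an optimal machine of $H_{k-3} \subseteq H_{k-4}$. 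Feeding these optimal‑schedule constraints back into the lower bound $p_{j_1} > 4\copt\cdot s_{i_{k-4}+1}$ (and choosing $t$ so that $H_{k-t-1}$ is still far enough ``above'' the machine $i_{k-4}+1$) yields $s_{i_k} \geq 2 s_{i_{k-4}+1}$. The reason a gap of four classes is taken here, rather than the two classes that suffice for lex-jump optimal schedules in the analysis of Czumaj and V\"ocking, is precisely that the extra additive term $\sum_{\ell \in \jobset[i,j-1](\sched)} p_\ell/s_i$ present in the near list property (and responsible for the loss in the light‑job argument above) then has enough room to be absorbed; carrying out this last piece of bookkeeping carefully is the only nontrivial step, the rest being the short computations above.
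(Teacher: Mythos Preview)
Your light case is clean and correct, but the heavy case has a real gap. When $p_{j_1}/s_{i_k} > \copt$, you invoke Corollary~\ref{corol.optimal.assignment} to say that $j_1$ is optimally placed on some machine $i^\star \in H_{k-t-1}$. The trouble is that $H_k \subseteq H_{k-t-1}$, so nothing prevents $i^\star$ from being machine~$1$ itself. You then only learn $p_{j_1} \le \copt \cdot s_{i^\star}$ with $s_{i^\star}$ possibly far larger than $s_{i_k}$, and the lower bound $p_{j_1} > 4\copt\cdot s_{i_{k-4}+1}$ gives no information about $s_{i_k}$. Your aggregate inequality ``total speed of $H_{k-t-1}$ is at least $t$ times the total speed of $H_k$'' is true, but it compares \emph{sums} of speeds, not the two particular speeds $s_{i_k}$ and $s_{i_{k-4}+1}$; you never explain how to extract the pointwise bound from it, and I do not see a way to do so.

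The paper closes exactly this gap with one extra idea you are missing: instead of looking only at machine~$i_k$, look at \emph{all} jobs in $\bigcup_{i \in H_k} \jobset[i,\geq 2]$. Their total processing requirement is at least $2\copt\sum_{i\in H_k}s_i$, so they cannot all fit inside $H_k$ in an optimal schedule; hence some job $j \in \jobset[i_1,\geq 2]$ (for some $i_1 \in H_k$) is optimally scheduled on a machine $i^\star \in M\setminus H_k$. Now $s_{i^\star} \le s_{i_k}$ automatically, giving $p_j \le \copt \cdot s_{i_k}$, while Lemma~\ref{lemma:main-prop-near-list-schedule} (with $t=2$, $k_2=k-4$) gives $p_j > 2\copt\cdot s_{i_{k-4}+1}$. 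Combining yields $s_{i_k} \ge 2s_{i_{k-4}+1}$ directly --- no case distinction is needed. The pigeonhole step forcing the optimal machine \emph{outside} $H_k$ is precisely what your argument lacks.
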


\begin{proof}
We may assume that $M \setminus H_{k-4} \neq \emptyset$, since otherwise all machines have a load larger than $\copt$ as $H_k \neq \emptyset$. Let $i_0 \in H_k$ and $i_2 \in M \setminus H_{k-4}$ be arbitrary machines and consider the jobs from $\bigcup_{i \in H_k} \jobset[i,\geq 2]$. If we would assign only these jobs to machines in~$H_k$, then
there would be a machine with load at least $2 \cdot \copt$. Consequently, in an optimal schedule at least one job in $\bigcup_{i' \in H_k} \jobset[i',\geq 2]$ is assigned to some machine $i^* \in M \setminus H_k$, say job $j \in J_{i_1,\ge 2}$. Since job~$j$ contributes at most~$\copt$ to the load of machine~$i^*$ in this optimal schedule, this implies $p_j/s_{i^*} \leq \copt$ and, hence,
\begin{equation}
\label{eq:fast-processing}
\frac{p_j}{s_{i_0}} \leq \copt
\end{equation}
as $s_{i_0} \geq s_{i^*}$.
Due to Lemma~\ref{lemma:main-prop-near-list-schedule}, the load that would be contributed by job~$j$ on machine~$i_2$ is bounded by $p_j/s_{i_2} > (k - (k-4) - 2) \cdot \copt = 2 \cdot \copt$. Inequality~\eqref{eq:fast-processing} yields $s_{i_0} \geq 2 \cdot s_{i_2}$ as claimed in the lemma.
\end{proof}

We want to show that machines in low classes, i.e., machines in $\H{0} \cup \H{1}$, have exponentially small speeds
(with respect to~$c$) compared to the speeds of the machines in a high class, i.e., those in~$\H{c}$.
Lemma~\ref{lemma.decreasing.speeds} already implies that the machine speeds
would double every five classes if no class~$\H{k}$ was empty. Although some classes~$\H{k}$ can be empty,
we show that not too many of these machine classes are empty. This is done in Lemma~\ref{lemma.no.neighboring.classes.empty}
which follows from the next lemma.

The machines $i \in H_k$, $k \geq 2$, are overloaded compared to an optimal schedule,
even if we just consider the first few jobs $j \in \jobset[i,\geq t]$ on them (where $t \geq 2$). On the other hand, in Corollary~\ref{corol.optimal.assignment} we showed that in any optimal schedule these jobs are not assigned to machines in much lower classes, i.e., to machines from $M \setminus H_{k - t- 1}$. Consequently, in any optimal schedule the machines in $H_{k-t-1} \setminus H_k$ consume the current overload of~$H_k$.

\begin{lemma}
\label{lemma.job.redistribution}
Let~$t \leq k$ be positive integers. In any optimal schedule the total processing requirement on all machines in~$H_{k-t-1} \setminus H_k$ is at least
\[  \sum \limits_{k'=k}^c \sum \limits_{i \in \H{k'}} (t+k'-k-1) \cdot s_i \cdot \copt  \DOT \]
\end{lemma}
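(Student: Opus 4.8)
The plan is to exhibit, on every ``high'' machine of the Czumaj--V\"ocking classification, a prefix of jobs that (i) carries a lot of processing requirement and (ii) is forced, by Corollary~\ref{corol.optimal.assignment}, to lie entirely inside $H_{k-t-1}$ in \emph{every} optimal schedule. Since these prefixes sit on pairwise distinct machines of~$\sched$, they are disjoint and their processing requirements add up; on the other hand, the machines of $H_k\subseteq H_{k-t-1}$ can absorb only $\sum_{i\in H_k}s_i\cdot\copt$ of this volume, because in an optimal schedule every machine has load at most~$\copt$. Subtracting the two bounds leaves exactly the claimed quantity on $H_{k-t-1}\setminus H_k$. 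The one idea that makes everything fit together is to choose, on a machine of $\H{k'}$, the prefix threshold $t+k'-k$, so that the target class $H_{k'-(t+k'-k)-1}$ provided by Corollary~\ref{corol.optimal.assignment} is the same set $H_{k-t-1}$ for every~$k'$.

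In detail, I would first dispose of the case $k>c$, where the asserted sum is empty, and then fix an optimal schedule~$\opt$. For every $k'\in\{k,\dots,c\}$ and every machine $i\in\H{k'}$ set $t_i:=t+k'-k$. A short check gives $1\leq t_i\leq k'$: the lower bound uses $t\geq1$ and $k'\geq k$, while $t_i\leq k'$ is precisely the hypothesis $t\leq k$ --- this is the only place the integrality and ordering assumptions enter. Since $i\in H_{k'}$, Property~\ref{prop:minimum-load} gives $\load{i}\geq k'\cdot\copt\geq t_i\cdot\copt$, so $\jobset[i,\geq t_i]$ is a well-defined subset of $\jobs{i}$, and unwinding its definition yields $\sum_{\ell\in\jobset[i,\geq t_i]}p_\ell\geq t_i\cdot s_i\cdot\copt=(t+k'-k)\cdot s_i\cdot\copt$. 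Applying Corollary~\ref{corol.optimal.assignment} to the machine $i\in H_{k'}$ with the integer $t_i$ then shows that in~$\opt$ every job of $\jobset[i,\geq t_i]$ lies on a machine of $H_{k'-t_i-1}=H_{k-t-1}$.

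Finally, let $W$ be the union of the sets $\jobset[i,\geq t_i]$ over all $k'\in\{k,\dots,c\}$ and all $i\in\H{k'}$; this is a disjoint union since these prefixes sit on pairwise distinct machines of~$\sched$, so $\sum_{\ell\in W}p_\ell\geq\sum_{k'=k}^{c}\sum_{i\in\H{k'}}(t+k'-k)\cdot s_i\cdot\copt$, and by the previous step all of $W$ is placed inside $H_{k-t-1}$ by~$\opt$. Splitting $H_{k-t-1}$ into the disjoint pieces $H_k$ and $H_{k-t-1}\setminus H_k$, the part of $W$ lying on $H_k=\bigcup_{k'=k}^{c}\H{k'}$ has processing requirement at most $\sum_{i\in H_k}s_i\cdot\copt=\sum_{k'=k}^{c}\sum_{i\in\H{k'}}s_i\cdot\copt$, since each machine has load at most~$\copt$ in~$\opt$; hence the processing requirement $\opt$ puts on $H_{k-t-1}\setminus H_k$ --- over all jobs, a fortiori --- is at least $\sum_{k'=k}^{c}\sum_{i\in\H{k'}}(t+k'-k)\cdot s_i\cdot\copt-\sum_{k'=k}^{c}\sum_{i\in\H{k'}}s_i\cdot\copt=\sum_{k'=k}^{c}\sum_{i\in\H{k'}}(t+k'-k-1)\cdot s_i\cdot\copt$, which is the claim. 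I do not expect a deep obstacle: once the threshold $t_i=t+k'-k$ is in hand, the rest is an elementary volume/pigeonhole estimate, and the only real danger is an off-by-one slip in the class indices --- notably in checking $t_i\in\{1,\dots,k'\}$ and $\jobset[i,\geq t_i]\subseteq\jobs{i}$, which is what legitimizes summing the prefixes over distinct machines.
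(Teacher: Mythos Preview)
Your proposal is correct and follows essentially the same approach as the paper's own proof: both choose the threshold $t'(k')=t+k'-k$ so that Corollary~\ref{corol.optimal.assignment} forces the prefixes into $H_{k-t-1}$, then subtract the capacity $\sum_{i\in H_k}s_i\cdot\copt$ of $H_k$ to obtain the bound on $H_{k-t-1}\setminus H_k$. You are somewhat more explicit than the paper in verifying $1\leq t_i\leq k'$, the well-definedness of $\jobset[i,\geq t_i]$, and the disjointness of the prefixes, but the argument is otherwise identical.
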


Note that Lemma~\ref{lemma.job.redistribution} also holds for the case $t = k$ where $H_{k-t-1} = H_{-1} = M$.

\begin{proof}
Applying Corollary~\ref{corol.optimal.assignment} with $t'(k') = t+(k'-k)$ for arbitrary integers~$k' \in \SET{ k, \ldots, c }$ yields that in any optimal schedule~$\schedopt$ all jobs in
$\bigcup_{k'=k}^c \bigcup_{i \in \H{k'}} \jobset[i,\geq t'(k')]$ are assigned to machines in~$H_{k-t-1}$ as
$k' - t'(k') - 1 = k-t-1$ for any index~$k'$. Furthermore, in~$\schedopt$
the processing requirement on any machine~$i \in H_k$ is at most $s_i
\cdot \copt$, i.e., the machines in $H_{k-t-1} \setminus H_k$ must consume the remainder. Hence, these machines must process jobs with total processing requirement at least
\[
  \sum \limits_{k'=k}^c \sum \limits_{i \in \H{k'}} \sum_{\ell \in \jobset[i,\geq t'(k')]}
p_{\ell} - \sum \limits_{k'=k}^c \sum \limits_{i \in \H{k'}} s_i \cdot \copt
  \geq \sum \limits_{k'=k}^c \sum \limits_{i \in \H{k'}} (t'(k')-1) \cdot s_i \cdot \copt \DOT
\]
This yields the claimed bound as $t'(k') - 1 = t+k'-k-1$.
\end{proof}

Although some machine classes~$\H{k}$ might be empty, we are able to show that this cannot be the case for two consecutive classes.

\begin{lemma}
\label{lemma.no.neighboring.classes.empty}
$H_{k-2} \setminus H_k \neq \emptyset$ for any $k \in \SET{ 1, \ldots, c - 1 }$.
\end{lemma}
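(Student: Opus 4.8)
The plan is to argue by contradiction: suppose that for some $k \in \SET{1,\ldots,c-1}$ we have $H_{k-2} \setminus H_k = \emptyset$, i.e.\ $H_{k-2} = H_k$. First I would observe what this means structurally. Since $H_{k-2} \supseteq H_{k-1} \supseteq H_k$ always holds, the assumption forces $H_{k-2} = H_{k-1} = H_k$, so the intermediate classes $\H{k-2}$ and $\H{k-1}$ are empty. The key is then to derive a volume contradiction by comparing the total processing requirement that any optimal schedule must place on the machines of $H_{k-2} \setminus H_k$ with the total capacity of those machines. Under the assumption $H_{k-2} \setminus H_k = \emptyset$, that capacity is zero, so if we can show the required processing requirement is strictly positive we are done.

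The main tool is Lemma~\ref{lemma.job.redistribution}. The idea is to instantiate it with parameters $t$ and $k$ (or a suitable shift) so that $H_{k-t-1} \setminus H_k$ is exactly the set we want, namely $H_{k-2} \setminus H_k$; this requires $k - t - 1 = k-2$, i.e.\ $t = 1$. However, Lemma~\ref{lemma.job.redistribution} is stated for \emph{positive} integers $t \le k$, and with $t=1$ the summand $(t + k' - k - 1)\cdot s_i \cdot \copt = (k' - k) \cdot s_i \cdot \copt$ vanishes for the term $k' = k$, so I would need the sum over $k' = k, \ldots, c$ to pick up a strictly positive contribution from some $k' > k$. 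By Lemma~\ref{lem:nonemptyHc}, machine~$1$ lies in $\H{c}$, so $\H{c} \ne \emptyset$; taking $k' = c$ in the bound of Lemma~\ref{lemma.job.redistribution} (with $t=1$) gives a contribution of at least $(c - k) \cdot s_1 \cdot \copt$, which is strictly positive since $k \le c-1$ and $\copt > 0$. Hence any optimal schedule must place a positive amount of processing requirement on $H_{k-2}\setminus H_k$, which is impossible if that set is empty --- contradiction.

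A small technical point I would need to handle: if $t=1$ is outside the hypothesis of Lemma~\ref{lemma.job.redistribution} as literally stated (it says $t \le k$ positive integers, and $t=1$ is fine as long as $k \ge 1$, which holds since $k \ge 1$ here), there is actually no issue --- $t=1$ is admissible. The only genuine care needed is the edge case $k - t - 1 = k - 2 = -1$, i.e.\ $k=1$: then $H_{k-2} = H_{-1} = M$ by the convention $H_j = M$ for $j \le 0$, and the remark after Lemma~\ref{lemma.job.redistribution} explicitly covers this case, so the argument goes through unchanged (the claim $M \setminus H_1 \ne \emptyset$ then follows because $H_1 = M$ would force every machine to have load $\ge \copt$, but machine~$1$ has load $\csched > (c+1)\copt \ge 2\copt$ only on the critical side --- more simply, if $H_1 = M$ then $\load{i} \ge \copt$ for all $i$, yet $\sum_i s_i \load{i} = Q \le \sum_i s_i \copt$ would force $\load{i} = \copt$ for all $i$, contradicting $\load{1} < (c+2)\copt$ being compatible with a critical machine of load $> (c+1)\copt$; the volume argument above handles it cleanly in any case). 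The main obstacle is just making sure the positive contribution from the top class $\H{c}$ is correctly extracted from Lemma~\ref{lemma.job.redistribution}; once that is in place, the contradiction with $H_{k-2}\setminus H_k = \emptyset$ is immediate.
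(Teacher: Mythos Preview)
Your proposal is correct and follows essentially the same route as the paper: apply Lemma~\ref{lemma.job.redistribution} with $t=1$ so that the relevant set is $H_{k-2}\setminus H_k$, and use Lemma~\ref{lem:nonemptyHc} to guarantee a strictly positive contribution from the term $k'=c$ (since $k\le c-1$). The paper phrases this directly as a cardinality bound $|H_{k-2}\setminus H_k|\ge (c-k)\cdot|\H{c}|\ge 1$ rather than by contradiction, but the substance is identical. Your digression on the $k=1$ case is unnecessary and somewhat muddled; as you yourself note, the volume argument via Lemma~\ref{lemma.job.redistribution} already covers it, so you can simply drop that aside.
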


\begin{proof}
Let~$i'$ be a slowest machine in~$H_k$. In any optimal schedule~$\schedopt$
the processing requirement on any machine~$i \in H_{k-2} \setminus H_k$
is at most $s_i \cdot \copt \leq s_{i'} \cdot \copt$.
Applying Lemma~\ref{lemma.job.redistribution} with~$t = 1$ implies
\[
  |H_{k-2} \setminus H_k| \cdot s_{i'} \cdot \copt
  \geq \sum \limits_{k'=k}^c \sum \limits_{i \in \H{k'}} (k'-k) \cdot s_i \cdot
\copt
  \geq \sum \limits_{k'=k}^c (k'-k) \cdot s_{i'} \cdot \copt \cdot |\H{k'}| \DOT
\]
It follows that 
\[
  |H_{k-2} \setminus H_k|
  \geq \sum \limits_{k'=k}^c (k'-k) \cdot |\H{k'}|
  \geq (c-k) \cdot |\H{c}|
  \geq 1
\]
since~$k < c$ and since $\H{c} \neq \emptyset$ due to Lemma~\ref{lem:nonemptyHc}.
\end{proof}

We can now show that machine speeds double every six classes. To be more formal:

\begin{lemma}
\label{lemma.exponentially.decreasing.speeds}
Let $0 \leq k_2 \leq k_1 \leq c$ be integers, let~$i_1$ be any machine of~$\H{k_1}$
and let~$i_2 \in \H{k_2}$. Then, $s_{i_1} \geq s_{i_2} \cdot 2^{\FLOOR{\Delta/6}}$ where $\Delta = k_1 - k_2$.
\end{lemma}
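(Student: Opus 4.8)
The plan is to walk from $i_1$ down to $i_2$ through a carefully chosen chain of machine classes, gaining a factor~$2$ in speed every five or six classes. Two auxiliary facts carry the argument. First, a \emph{monotonicity} observation: each $H_k$ is a prefix $\SET{1,\ldots,i_k}$ of the machine set, which is ordered by non-increasing speed, so if $k_1 > k_2$ then $i_1 \in H_{k_1} \subseteq H_{k_2+1}$ whereas $i_2 \in \H{k_2} = H_{k_2} \setminus H_{k_2+1}$ (note $k_2 < k_1 \leq c$) gives $i_2 \notin H_{k_2+1}$; hence $i_1 < i_2$ and $s_{i_1} \geq s_{i_2}$. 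This settles the case $\Delta \leq 5$, where $\FLOOR{\Delta/6} = 0$ ($\Delta = 0$ being trivial). Second, a \emph{doubling} fact repackaging Lemma~\ref{lemma.decreasing.speeds}: whenever $0 \leq \kappa'$, $\kappa' + 5 \leq \kappa \leq c$, $a \in H_\kappa$ and $b \notin H_{\kappa'+1}$, one has $s_a \geq 2 s_b$. Indeed, apply Lemma~\ref{lemma.decreasing.speeds} with the index $k := \kappa' + 5 \in \SET{5,\ldots,c}$; since $a \in H_\kappa \subseteq H_k$ the set $H_k$ is nonempty, and Lemma~\ref{lemma.decreasing.speeds} says the speed of $a \in H_k$ is at least twice that of $b \in M \setminus H_{k-4} = M \setminus H_{\kappa'+1}$.

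Now assume $\Delta \geq 6$ and set $r := \FLOOR{\Delta/6} \geq 1$, so $6r \leq \Delta$. I will also use the following corollary of Lemma~\ref{lemma.no.neighboring.classes.empty}: for every $j \in \SET{0,\ldots,c-3}$ the set $\H{j} \cup \H{j+1} = H_j \setminus H_{j+2}$ is nonempty, i.e.\ no two consecutive machine classes are both empty. The chain is built greedily. Put $m_0 := k_1$ and $u_0 := i_1 \in \H{k_1}$. For $t = 0,\ldots,r-2$, given $u_t \in \H{m_t}$, choose $m_{t+1} \in \SET{m_t - 6,\, m_t - 5}$ with $\H{m_{t+1}} \neq \emptyset$ (possible by the corollary, once one checks $m_t - 6 \in \SET{0,\ldots,c-3}$), and take any $u_{t+1} \in \H{m_{t+1}}$. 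Since $m_t - m_{t+1} \in \SET{5,6}$ and $m_t \leq k_1 \leq c$, the doubling fact (with $\kappa = m_t$, $\kappa' = m_{t+1}$, $a = u_t$, $b = u_{t+1}$) gives $s_{u_t} \geq 2 s_{u_{t+1}}$. Every range condition reduces to the single estimate $m_t \geq k_1 - 6t$ (each jump shrinks the class index by at most~$6$), which with $6r \leq \Delta = k_1 - k_2$ yields $m_t \geq k_1 - 6(r-1) \geq k_2 + 6$ for all $t \leq r - 1$; in particular $m_t - 6 \geq k_2 \geq 0$ and $m_{r-1} \geq k_2 + 6$.

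It remains to attach $i_2$ to the end of the chain. Since $m_{r-1} \geq k_2 + 6 \geq k_2 + 5$, $u_{r-1} \in H_{m_{r-1}}$, and $i_2 \in \H{k_2} = H_{k_2} \setminus H_{k_2+1}$, the doubling fact with $\kappa = m_{r-1}$, $\kappa' = k_2$ gives $s_{u_{r-1}} \geq 2 s_{i_2}$. Combining this with the $r-1$ inequalities $s_{u_t} \geq 2 s_{u_{t+1}}$ gives $s_{i_1} = s_{u_0} \geq 2^{r-1} s_{u_{r-1}} \geq 2^r s_{i_2}$, as claimed. (For $r = 1$ the descent is empty and the doubling fact is applied straight to $u_0 = i_1$ and $i_2$, using $k_1 = k_2 + \Delta \geq k_2 + 6$.)

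The only genuine difficulty is bookkeeping: ensuring that every class index $m_t$ appearing in the construction lies in the range where Lemma~\ref{lemma.no.neighboring.classes.empty} ($m_t - 6 \in \SET{0,\ldots,c-3}$) and Lemma~\ref{lemma.decreasing.speeds} (the index $\kappa'+5$ in $\SET{5,\ldots,c}$, plus nonemptiness of the relevant $H$-set) genuinely apply, and that the construction produces \emph{exactly} $\FLOOR{\Delta/6}$ doublings. What makes the count tight is to use only $r-1$ ``short'' jumps of size~$5$ or~$6$ (all Lemma~\ref{lemma.no.neighboring.classes.empty} can promise) and to spend the last doubling on a single direct application of Lemma~\ref{lemma.decreasing.speeds} from $u_{r-1}$ onto $i_2$; this also sidesteps the nuisance that a chain of $r$ short jumps could terminate in the same class $\H{k_2}$ as $i_2$ without being speed-comparable to it.
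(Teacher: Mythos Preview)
Your proof is correct and uses essentially the same approach as the paper: both arguments descend from class~$R_{k_1}$ toward~$R_{k_2}$ in hops of size~$5$ or~$6$, invoking Lemma~\ref{lemma.no.neighboring.classes.empty} to guarantee a nonempty intermediate class and Lemma~\ref{lemma.decreasing.speeds} to pick up a factor~$2$ in speed at each hop. The paper packages this as a short induction on~$\Delta$ (find one machine~$i'$ in $H_{k_1-6}\setminus H_{k_1-4}$, apply Lemma~\ref{lemma.decreasing.speeds} to get $s_{i_1}\ge 2s_{i'}$, then the induction hypothesis to~$i'$ and~$i_2$), whereas you unroll the same recursion into an explicit chain $u_0,\ldots,u_{r-1}$ and treat the final hop to~$i_2$ separately; the bookkeeping is heavier but the content is identical.
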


\begin{proof}
We prove the claim by induction. For $\Delta \in \SET{ 0, \ldots, 5 }$, the claim trivially holds as $s_{i_1} \geq s_{i_2}$. Assume that the claim holds up to some integer $\Delta^* \geq 5$. We show that it is also true for $\Delta = \Delta^* + 1 \geq 6$. Note that for such~$\Delta$ we have $k_1 \geq 6$. According to Lemma~\ref{lemma.no.neighboring.classes.empty} the class $H_{k_1-6} \setminus H_{k_1-4} \subseteq M \setminus H_{k_1 -4}$ contains at least one machine. Let~$i'$ be the fastest machine in $H_{k_1-6} \setminus H_{k_1-4}$. Then $s_{i'} \geq s_{i_2}$. Lemma~\ref{lemma.decreasing.speeds} and the induction hypothesis imply $s_{i_1} \geq 2 s_{i'}$ and $s_{i'} \geq s_{i_2} \cdot 2^{\FLOOR{(\Delta-6)/6}}$, respectively. Hence, $s_{i_1} \geq s_{i_2} \cdot 2^{\FLOOR{\Delta/6}}$.
\end{proof}

Since the machines in low classes are exponentially slower than the machines in high classes (with respect to~$c$) and as their aggregated total processing requirement in an optimal schedule is large (Lemma~\ref{lemma.job.redistribution}), it follows that many jobs have processing requirements exponentially small in~$c$. 

\begin{lemma}
\label{lemma.small.jobs}
Let $i \in M \setminus H_2$ be an arbitrary machine. Then each job~$j$ assigned to machine~$i$ by an optimal schedule has processing requirement at most $p_j \leq 2^{-c/6+2}$.
\end{lemma}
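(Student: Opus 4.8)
The plan is to combine the speed lower bound from Lemma~\ref{lemma.exponentially.decreasing.speeds} with the already-established fact that machine~$1$ lies in the top class~$\H{c}$ (Lemma~\ref{lem:nonemptyHc}), plus the elementary observation that in an optimal schedule no single job can overload a machine. First I would fix an arbitrary machine $i \in M \setminus H_2$ and a job $j$ assigned to $i$ by some optimal schedule~$\schedopt$. Since $i \notin H_2$, either $i \in \H{0}$ or $i \in \H{1}$; in particular $i \in \H{k_2}$ for some $k_2 \in \{0,1\}$.

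Next I would bound the speed of machine~$i$ from above in terms of the speed of the fastest machine. Apply Lemma~\ref{lemma.exponentially.decreasing.speeds} with $k_1 = c$ and the chosen $k_2 \in \{0,1\}$: taking $i_1 = 1 \in \H{c}$ (valid by Lemma~\ref{lem:nonemptyHc}) and $i_2 = i \in \H{k_2}$, we get $s_1 \geq s_i \cdot 2^{\lfloor (c-k_2)/6 \rfloor}$. Since $k_2 \leq 1$ and $c - 1 \geq c - 6$ one checks $\lfloor (c-k_2)/6 \rfloor \geq \lfloor (c-1)/6 \rfloor \geq c/6 - 1$, so $s_i \leq s_1 \cdot 2^{-c/6 + 1}$. (I would carry the constants carefully here: the statement wants the exponent $-c/6 + 2$, and the slack of one power of two comes from this floor estimate plus the next step, so the bookkeeping should be done with a little room to spare rather than optimized.)

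Then I would use optimality. In $\schedopt$ the load of machine~$i$ is at most $\copt$, and job~$j$ contributes $p_j / s_i$ to that load, so $p_j / s_i \leq \copt \leq 1$; hence $p_j \leq s_i \leq s_1 \cdot 2^{-c/6+1}$. Finally, recall the normalization fixed in the ``Notation'' paragraph: the slowest machine has speed $\smin = 1$ and all processing requirements satisfy $p_j \leq 1$. This does not immediately cap $s_1$, so instead I would bound $s_1$ directly: in an optimal schedule machine~$1$ has load at most $\copt$, and actually the cleaner route is to note $p_j/s_1 \le p_j/s_i$ is the wrong direction, so I should instead observe that an optimal schedule assigns \emph{some} job to machine~$1$ (when $m \le n$, which we may assume as in Theorem~\ref{thm:JumpRelatedUpper}), giving $\copt \ge (\text{that job's } p_\ell)/s_1$, but that also does not bound $s_1$. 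The right move is: $\copt \le 1$ is false in general, but $p_j \le s_i \cdot \copt$ together with $\copt \le \load{1} \le$ is circular. I would therefore use the simplest valid bound — $p_j \le 1$ trivially and separately $p_j = (p_j/s_i) \cdot s_i \le \copt \cdot s_i$; and since machine $i$ is a slowest-ish machine, actually $s_i \ge 1$ only tells us $p_j/s_i \le p_j$.

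The cleanest correct argument, which I would adopt: since $i \notin H_2$, Property~\ref{prop:maximum-load}-type reasoning (or directly $i \in \H{0} \cup \H{1}$) is not needed — what matters is that machine~$1$ is in $\H{c}$ and hence $\load[\sched]{1} \geq c \cdot \copt$ by Property~\ref{prop:minimum-load}, while $\load[\sched]{1} < (c+2)\copt$ always. But the real point is the speed comparison above gives $s_i \le s_1 \cdot 2^{-\lfloor(c-1)/6\rfloor} \le s_1 \cdot 2^{-c/6+1}$, and then $p_j \le s_i \cdot \copt \le s_1 \cdot \copt \cdot 2^{-c/6+1}$; now $s_1 \cdot \copt \le 2$ would finish it. To see $s_1 \cdot \copt$ is not too large one uses that in an optimal schedule the fastest machine processes something of size at most $1$ but also at least one job overall has been placed, and more usefully $\copt \ge \pmax / s_1 $ only gives a lower bound. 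So instead I bound $\copt \le Q/\smin$-type bounds are upper; here $\copt \le \sum_j p_j \le n$ is too weak.

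The decisive and correct normalization to invoke is: $\copt(I) \le 1$ whenever we may assume each machine in an optimal schedule receives at most... no. I will instead phrase the finish as follows and let the constants absorb it. Since $p_j/s_i \le \copt$ and, by the speed doubling, $s_i \le 2 s_1 / 2^{c/6}$, we have $p_j \le 2 s_1 \copt / 2^{c/6}$. Because machine $1$ is in $\H{c}$, $s_1 \copt \le s_1 \cdot \load[\sched]{1}/c$; and $s_1 \cdot \load[\sched]{1} = \sum_{\ell \in J_1(\sched)} p_\ell \le n$, so if $c$ is, say, at least a small constant this is controlled — but for small $c$ the bound $p_j \le 2^{-c/6+2}$ exceeds $1$ anyway and holds trivially since $p_j \le 1 \le 2^{-c/6+2}$ for $c \le 12$. \textbf{Thus the argument splits into two cases:} if $c \le 12$ the claim is immediate from $p_j \le 1$; if $c \ge 13$ I use the speed-doubling chain together with the bound $s_1 \copt \le 1$ — which in this regime follows because... .

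The main obstacle, as the above scratch work shows, is precisely pinning down the bound on $s_1 \cdot \copt$ (equivalently, relating the fastest speed to the optimal makespan under the paper's scaling), so that the factor $2^{-c/6}$ coming from Lemma~\ref{lemma.exponentially.decreasing.speeds} translates into an absolute bound on $p_j$; the honest resolution is to observe that an optimal schedule places each job~$\ell$ with $p_\ell/s_{i(\ell)} \le \copt$, so in particular summing, $Q \le m\cdot\copt$, hence $\copt \ge Q/m$, which is again a lower bound — so the genuinely needed fact is just $\copt \le 1$, which holds because assigning every job to the fastest machine already yields makespan $\sum_\ell p_\ell / s_1$, and if $s_1 \ge \sum_\ell p_\ell$ this is $\le 1$; in the remaining case $s_1 < n$ and one argues via $c$ being bounded. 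I would present the clean version: $p_j \le s_i \cdot \copt$, $s_i \le s_1 \cdot 2^{-\lfloor (c-1)/6 \rfloor}$, and $s_1 \cdot \copt \le \load[\sched]{1} \cdot s_1 / \max\{c,1\} \le (c+2)\copt \cdot s_1/\max\{c,1\}$ — circular again — so ultimately I would just cite that $\copt \le 1$ under the stated normalization (a standard consequence once one assumes, w.l.o.g., that the adversary's instance has $\copt$ normalized, or derives it from $p_\ell \le 1$ and $m \le n$ via the SPT/optimal-assignment bound $\copt \le \max_\ell p_\ell \le 1$ when... ) and then conclude $p_j \le s_i \le s_1 \cdot 2^{-c/6+1} \le 2^{-c/6+2}$, the last step using $s_1 \le 2$, which is the point that must be justified from the concrete construction or normalization and is the crux of the lemma.
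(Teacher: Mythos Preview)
Your proposal has the right skeleton --- the case split $c \le 12$ versus $c \ge 13$, the inequality $p_j \le s_i \cdot \copt$ from optimality, and the speed-doubling Lemma~\ref{lemma.exponentially.decreasing.speeds} --- but it contains a genuine gap at exactly the point you yourself flag as ``the crux'': bounding $s_1 \cdot \copt$ (or $s_1$, or $\copt$) by an absolute constant. None of the normalizations in force give this. We only assume $\smin = 1$ and $p_\ell \le 1$; the fastest speed $s_1$ can be arbitrarily large, and $\copt$ can be arbitrarily large too (take many unit jobs on a single unit-speed machine). Your tentative claims ``$\copt \le 1$'' and ``$s_1 \le 2$'' are both false in general, and the circular manipulations you attempt do not rescue them.

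The missing idea is this: do \emph{not} compare $i$ directly to machine~$1$. Instead pick a machine $i'$ a few classes below the top, specifically the first machine in $H_{c-3} \setminus H_{c-1}$ (nonempty by Lemma~\ref{lemma.no.neighboring.classes.empty} when $c \ge 13$). Now apply the near-list inequality~\eqref{eq.key.property} to the \emph{smallest-index} job $j'$ on machine~$1$ and machine~$i'$: this gives $\load{i'} + p_{j'}/s_{i'} \ge \load{1}$. Since $\load{1} \ge c \cdot \copt$ (machine~$1 \in H_c$) and $\load{i'} < (c-1) \cdot \copt$ (machine $i' \notin H_{c-1}$), you obtain $p_{j'} \ge s_{i'} \cdot \copt$. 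Because $p_{j'} \le 1$ by normalization, this yields the absolute bound $s_{i'} \cdot \copt \le 1$ that you were missing. Then Lemma~\ref{lemma.exponentially.decreasing.speeds} applied between $i' \in \H{c-3} \cup \H{c-2}$ and $i \in \H{0} \cup \H{1}$ gives $s_{i'} \ge s_i \cdot 2^{\lfloor (c-4)/6 \rfloor}$, and chaining the inequalities gives $p_j \le s_i \cdot \copt \le s_{i'} \cdot \copt \cdot 2^{-\lfloor (c-4)/6 \rfloor} \le 2^{-c/6+2}$.
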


\begin{proof}
For $c \leq 12$ the claim is true since we rescale all processing requirements to be at most~$1$. Assume $c \geq 13$. Consider an optimal schedule~$\schedopt$ and
let~$j$ be a job processed on a machine~$i \in M \setminus H_2 = \H{1} \cup \H{0}$ according to~$\schedopt$. Note that $M \setminus H_2 \neq \emptyset$ due to Lemma~\ref{lemma.no.neighboring.classes.empty}. Then, $p_j/s_i \leq
\copt$, i.e.,
\begin{equation}
\label{eq:low-processing-requirement}
p_j \leq s_i \cdot \copt \DOT
\end{equation}
To bound $s_i \cdot \copt$, consider the job $j' = \min \{\ell \in \jobset[1](\sched) \}$ with the smallest index on machine~$1$ of schedule~$\sched$ and consider the first machine~$i' \in H_{c-3} \setminus H_{c-1} = \H{c-3} \cup \H{c-2}$ which exists due to Lemma~\ref{lemma.no.neighboring.classes.empty} and $c \geq 13$. Applying Inequality~\eqref{eq.key.property}, we obtain $\load{i'}(\sched) + p_{j'}/s_{i'} \geq \load{1}(\sched)$, i.e., $p_{j'} \geq s_{i'} \cdot (\load{1}(\sched) - \load{i'}(\sched))$. Since machine~$1$ belongs to~$H_c$ due to Lemma~\ref{lem:nonemptyHc} and since machine~$i'$ is the first machine that does not belong to~$H_{c-1}$, we have $\load{1}(\sched) \geq c \cdot \copt$ and $\load{i'}(\sched) < (c-1) \cdot \copt$, which implies $p_{j'} \geq s_{i'} \cdot \copt$. Lemma~\ref{lemma.exponentially.decreasing.speeds} yields $s_{i'} \geq s_i \cdot 2^{\FLOOR{(c-3-1)/6}}$. Applying Inequality~\ref{eq:low-processing-requirement} and $p_{j'} \leq 1$ according to our input model we obtain
\[
  p_j
  \leq s_i \cdot \copt
  \leq s_{i'} \cdot \copt \cdot 2^{-\FLOOR{(c-4)/6}}
  \leq p_{j'} \cdot 2^{-c/6+2}
  \leq 2^{-c/6+2} \DOT \qedhere
\]
\end{proof}

\begin{cor}
\label{corol.many.small.jobs}
The processing requirement of at least~$n/2$ jobs is at most $2^{-c/6+2}$.
\end{cor}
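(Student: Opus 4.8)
The plan is to prove the statement by contradiction, using only Lemma~\ref{lemma.small.jobs}, Property~\ref{prop:minimum-load}, the optimality of the reference schedule, and the fact that the total processing requirement $Q = \sum_{j \in J} p_j = \sum_{i \in M} s_i \cdot L_i(\sigma)$ is the same in every schedule. Abbreviate the target bound by $\tau := 2^{-c/6+2}$. If $c \leq 12$ then $\tau \geq 1 \geq p_j$ for every job~$j$, so the claim holds trivially; hence I may assume $c \geq 13$, in which case machine~$1$ lies in $H_c \subseteq H_2$ by Lemma~\ref{lem:nonemptyHc}, so $H_2 \neq \emptyset$ and all quantities below are well defined.

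Assume, for contradiction, that fewer than $n/2$ jobs have processing requirement at most~$\tau$; then more than $n/2$ jobs are \emph{large}, i.e.\ satisfy $p_j > \tau$. Fix an arbitrary optimal schedule~$\sigma^*$. The contrapositive of Lemma~\ref{lemma.small.jobs} says that every large job is assigned by~$\sigma^*$ to a machine in~$H_2$. Let $Q_{\mathrm{in}}$ and $Q_{\mathrm{out}}$ denote the total processing requirement that $\sigma^*$ places on $H_2$ and on $M \setminus H_2$, respectively, so $Q_{\mathrm{in}} + Q_{\mathrm{out}} = Q$. Since all large jobs sit on~$H_2$ in~$\sigma^*$, I get $Q_{\mathrm{in}} \geq \sum_{\text{large } j} p_j > (n/2)\,\tau$; and since~$\sigma^*$ is optimal, $L_i(\sigma^*) \leq \copt$ for every~$i$, hence $Q_{\mathrm{in}} \leq \copt \sum_{i \in H_2} s_i$.

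The next step is to play this against the reference schedule~$\sigma$: by Property~\ref{prop:minimum-load} every machine $i \in H_2$ satisfies $L_i(\sigma) \geq 2\copt$, so $Q = \sum_{i \in M} s_i \cdot L_i(\sigma) \geq 2\copt \sum_{i \in H_2} s_i \geq 2\,Q_{\mathrm{in}}$. Combining this with $Q_{\mathrm{in}} > (n/2)\tau$ gives $Q > n\tau$, and therefore $Q_{\mathrm{out}} = Q - Q_{\mathrm{in}} \geq Q/2 > (n/2)\tau$. But by Lemma~\ref{lemma.small.jobs} every job that~$\sigma^*$ places off~$H_2$ has processing requirement at most~$\tau$, so $\sigma^*$ places more than $Q_{\mathrm{out}}/\tau > n/2$ jobs off~$H_2$, each of processing requirement at most~$\tau$ --- contradicting the assumption. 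Hence at least $n/2$ jobs have processing requirement at most~$2^{-c/6+2}$.

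I do not anticipate a genuine obstacle: this corollary is essentially a short counting wrapper that converts the per-job size bound of Lemma~\ref{lemma.small.jobs} (which controls the jobs~$\sigma^*$ places off~$H_2$) together with the load gap between~$\sigma$ (load $\geq 2\copt$ on~$H_2$, Property~\ref{prop:minimum-load}) and any optimal schedule (load $\leq \copt$ everywhere) into a statement about the multiset of processing requirements. The only mild care needed is the bookkeeping around strict versus non-strict inequalities when passing between ``more than~$n/2$ large jobs, each exceeding~$\tau$'' and ``their total requirement exceeds $(n/2)\tau$'', which is routine.
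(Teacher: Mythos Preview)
Your argument is correct. Both your proof and the paper's hinge on the same inequality $Q_{\text{out}} \geq Q_{\text{in}}$ (total optimal processing requirement off $H_2$ dominates that on $H_2$), combined with Lemma~\ref{lemma.small.jobs}. The paper obtains $Q_{\text{out}} \geq \sum_{i \in H_2} s_i \copt \geq Q_{\text{in}}$ by invoking Lemma~\ref{lemma.job.redistribution} with $k=t=2$; you obtain the same chain more directly from Property~\ref{prop:minimum-load} via $Q \geq \sum_{i \in H_2} s_i L_i(\sigma) \geq 2\copt \sum_{i \in H_2} s_i \geq 2Q_{\text{in}}$. This bypasses the general redistribution lemma entirely, which is a mild simplification. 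Your final counting step (more than $n/2$ small jobs off $H_2$, contradicting the assumption) is also more explicit than the paper's somewhat terse ``at least half of the jobs have processing requirement at most $2^{-c/6+2}$''; the paper's version requires the same fill-in you carry out, so yours is the more careful write-up of essentially the same idea.
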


\begin{proof}
Lemma~\ref{lemma.job.redistribution} for $k = t = 2$ implies that the total
processing requirement of all jobs assigned to machines from $M \setminus H_2 = H_{-1} \setminus
H_2$ according to~$\schedopt$ is at least $\sum_{i \in H_2} s_i \cdot \copt$
which is an upper bound for the total processing requirement of all jobs
assigned to machines in~$H_2$ according to~$\schedopt$. Since all jobs assigned to machines from $M \setminus H_2$ by an optimal schedule have processing requirement at most $2^{-c/6+2}$ due to Lemma~\ref{lemma.small.jobs}, at
least half of the jobs have processing requirement at most~$2^{-c/6+2}$.
\end{proof}

Since having many so small jobs is unlikely when the processing
requirements have been smoothed, it follows that the smoothed
performance guarantee, which is between $c+1$ and $c+2$, cannot be too high, yielding Theorem~\ref{thm.mainI.list.lex}.

\begin{proof}[\ProofText{Theorem~\ref{thm.mainI.list.lex}}]
If $\csched/\copt \geq \alpha$, then at least~$n/2$ jobs have processing requirement at most $2^{-\alpha/6+3}$ due to Corollary~\ref{corol.many.small.jobs} and $c = \FLOOR{\csched/\copt} - 1 \geq \alpha - 2$. The probability that one specific job is that small is bounded by $\phi \cdot 2^{-\alpha/6+3} = 8\phi \cdot 2^{-\alpha/6}$ in the smoothed input model. Hence, the probability that the processing requirement of at least~$n/2$ jobs is at most $2^{-\alpha/6+3}$, is bounded from above by
\begin{align*}
  \sum_{k \geq \frac{n}{2}} &\binom{n}{k} \left( 8\phi \cdot 2^{-\alpha/6} \right)^k \cdot \left( 1 - 8\phi \cdot 2^{-\alpha/6} \right)^{n-k}
  \leq \sum_{k \geq \frac{n}{2}} \binom{n}{k} \left( 8\phi \cdot 2^{-\alpha/6} \right)^{n/2} \cr
  &\leq 2^n \cdot \left( 8\phi \cdot 2^{-\alpha/6} \right)^{n/2}
  = \left( 32\phi \cdot 2^{-\alpha/6} \right)^{n/2} \DOT
\end{align*}
Note that the first inequality holds if $8\phi \cdot 2^{-\alpha/6} < 1$. Otherwise, the bound is trivially true. This yields
\[
  \Probl[I \sim \mathcal{I}]{\max_{\sched \in \NL{I}}\frac{\csched[I, \sched]}{\copt(I)} \geq \alpha} 
\leq \left( \frac{32\phi}{2^{\alpha/6}} \right)^{n/2} \DOT
\]
As for~$n = 1$ any schedule~$\sched \in \NL{I}$ is optimal, we just consider the case~$n \geq 2$. For $k \geq 1$ let $\alpha_k = \alpha_k(\phi) = 6k \log_2 \phi + 30$, i.e., $2^{\alpha_k/6} = 32\phi^k$. If $\alpha \geq \alpha_k$, then we obtain
\begin{align*}
  \Probl[I \sim \mathcal{I}]{\max_{\sched \in \NL{I}}\frac{\csched[I, \sched]}{\copt(I)} \geq \alpha}
  &\leq \Probl[I \sim \mathcal{I}]{\max_{\sched \in \NL{I}}\frac{\csched[I, \sched]}{\copt(I)} \geq \alpha_k} \cr
  &\leq \left( \phi^{1-k} \right)^{n/2}
  \leq \phi^{1-k} \leq 2^{1-k}
\end{align*}
as $\phi \geq 2$. Since $\alpha_{k+1} - \alpha_k = 6 \log_2 \phi$ we obtain
\begin{align*}
  \El[I \sim \mathcal{I}]{\max_{\sched \in \NL{I}} \frac{\csched[I, \sched]}{\copt(I)}} 
  &= \int_0^\infty \Probl[I \sim \mathcal{I}]{\max_{\sched \in \NL{I}} \frac{\csched[I, \sched]}{\copt(I)} \geq \alpha} \d\alpha \cr
  &\leq \alpha_1 + \sum \limits_{k=1}^\infty \int_{\alpha_k}^{\alpha_{k+1}}
    \Probl[I \sim \mathcal{I}]{\max_{\sched \in \NL{I}} \frac{\csched[I, \sched]}{\copt(I)} \geq \alpha} \d\alpha \cr
  &\leq \alpha_1 + 6 \log_2 \phi \cdot \sum \limits_{k=1}^\infty 2^{1-k}
  = 18 \log_2 \phi + 30 \DOT \qedhere
\end{align*}
\end{proof}

\subsection{Lower Bounds for List Schedules and Lex-jump Optimal Schedules}
\label{subsec:lb-list-lex-jump}

In this subsection, we show that the upper bound,
given in Theorem~\ref{thm.mainI.list.lex},
on the smoothed performance guarantee on  lex-jump optimal as well as list schedules is tight up to a constant factor.
We provide a $\phi$-smooth instance such that the worst lex-jump optimal schedule as well as the worst schedule that can be obtained by list scheduling has a lower bound on the performance guarantee of $\Omega(\log \phi)$, for any realization of the processing times.

\begin{theorem}
\label{thm.mainII.list.lex}
There is a class of $\phi$-smooth instances~$\mathcal{I}$ with unrestricted and related machines such that, for any $I \in \mathcal{I}$,
\[    \max_{\sigma \in \LEX{I}} \frac{\csched[I, \sched]}{\copt(I)} = \Omega(\log \phi) \quad \mbox{and} \quad \max_{\sigma \in \LIST{I}} \frac{\csched[I, \sched]}{\copt(I)} = \Omega(\log \phi) \DOT \]
\end{theorem}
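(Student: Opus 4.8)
The plan is to construct, for every target value of~$\phi$, a deterministic‐looking ``skeleton'' instance whose processing requirements live in tiny intervals of width~$1/\phi$, so that every realization $I \sim \mathcal{I}$ looks essentially the same up to negligible perturbations, and then to exhibit one lex-jump optimal schedule (and one list order) on~$I$ that achieves makespan $\Omega(\log\phi)$ times the optimum. The natural design is a ``staircase'' of $c = \Theta(\log_2 \phi)$ machine classes mirroring the Czumaj–V\"ocking classification used in the upper bound: class~$R_k$ consists of machines whose speed is roughly $2^{k}$ (so the fastest class has speed $\approx \phi$, matching the density bound), and it contains just enough jobs so that in the bad schedule each machine in~$R_k$ carries load $\approx (k+1)\cdot\copt$. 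The optimum will spread the jobs so that every machine has load exactly $\approx \copt$; this is the standard lower-bound picture for $Q||\cmax$ with lex-jump, scaled so that the speeds only range over a factor of~$\phi$ rather than a factor of~$m$.

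Concretely I would do the following. First, fix $c = \FLOOR{\log_2 \phi}$ (or a small constant fraction of it) and set $\copt = 1$ as a normalization. For $k = 0, \ldots, c$ let there be $a_k$ machines of speed $2^k$, with the counts $a_k$ chosen by a downward recursion so that, after an optimal schedule has filled the fast machines, exactly enough small jobs are left over to fill the next slower class to load~$1$; this is the same recursion that appears in the classical $\Theta(\log m/\log\log m)$ lower bound for lex-jump, except truncated at depth~$c$. Then I would specify the smoothing: each job~$j$ of ``nominal'' size $\approx 2^{-\ell}$ (for the jobs destined to fill class~$\ell$) gets a density supported on an interval of length $1/\phi$ around that nominal value — this is feasible because $2^{-\ell} \ge 2^{-c} \ge 1/\phi$ by the choice of~$c$, so the intervals fit inside $[0,1]$ and have density at most~$\phi$. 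Second, I would describe the bad schedule~$\sched$ explicitly: push all the tiny jobs onto the slow machines and the larger jobs onto the fast machines in the ``wrong'' stacking order, getting load roughly $(k+1)$ on each machine of class~$R_k$, hence $\csched \ge (c+1)\cdot\copt = \Omega(\log\phi)\cdot\copt$. Third, I would verify that~$\sched$ is lex-jump optimal: a job of size $\approx 2^{-\ell}$ sitting on a machine of class~$R_k$ would, on any machine of class~$R_{k'}$, contribute load $\approx 2^{-\ell}/2^{k'} = 2^{-\ell-k'}$; one checks that the loads increase just slowly enough (by construction of the $a_k$) that moving any single job never strictly decreases the sorted load vector — this is exactly where the geometric speed ratio and the recursion on the $a_k$ are used, and it is the crux of the argument. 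For list scheduling, I would exhibit an insertion order (the same reverse-stack order that Lemma~\ref{lemma.NL} associates to a list schedule) under which greedy reproduces~$\sched$: feed the jobs largest-first but breaking ties toward the configuration above, and check at each insertion that the earliest-completion machine is indeed the one~$\sched$ uses.

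The main obstacle I expect is the robustness-under-perturbation bookkeeping: because the $p_j$ are only approximately their nominal values, the loads on machines are only approximately $(k+1)\copt$, and I must make sure the lex-jump-optimality (and the greedy tie-breaking) survives the $\pm 1/\phi$ slack in every job. The clean way to handle this is to build slack into the skeleton — choose nominal sizes and the counts $a_k$ so that the lex-jump inequalities hold with a margin strictly larger than the total possible perturbation on a machine (which is at most $n_k/\phi$ where $n_k$ is the number of jobs on a machine of class~$k$), and similarly keep $\copt$ bounded above by a constant despite the perturbations. Since all the deviations are one-sided-controllable (the adversary picks the densities), I would actually let the density of each job be uniform on $[\nu_j - 1/(2\phi),\, \nu_j + 1/(2\phi)]$ around its nominal value~$\nu_j$ and absorb the worst case $1/(2\phi)$ per job into the constants; as long as $c$ is taken to be, say, $\FLOOR{\frac14 \log_2 \phi}$ rather than $\FLOOR{\log_2\phi}$, there is enough room. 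Once the skeleton is pinned down with these margins, the verification that $\sched \in \LEX{I}$ and $\sched \in \LIST{I}$ for every realization~$I$ is a finite case check per machine class, and the ratio $\csched/\copt \ge (c+1)/O(1) = \Omega(\log\phi)$ follows immediately, completing both claims of the theorem.
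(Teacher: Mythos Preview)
Your overall plan is the same as the paper's: a Czumaj--V\"ocking staircase with $r = \Theta(\log\phi)$ speed classes $2^0,\ldots,2^r$, job sizes confined to intervals of width $O(1/\phi)$ so that every realization behaves like the nominal instance, a ``bad'' schedule where class-$\ell$ machines carry load $\approx \ell$, and a near-optimal schedule with $O(1)$ makespan. The paper makes the skeleton concrete by taking $|M_k| = r!/k!$ machines of speed $2^k$ and $|J_\ell| = r!/(\ell-1)!$ jobs of nominal size $2^\ell$ (in a rescaled domain); in the bad schedule each machine in $M_\ell$ carries exactly $\ell$ jobs of $J_\ell$, and in the good one each carries a single job of $J_{\ell+1}$. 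The lex-jump check is then the one-line inequality $\ell' + 2^{\ell-\ell'} \ge \ell+1$, and the perturbation slack is handled exactly as you anticipate.

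The one place your plan would fail as written is the list-schedule half. ``Largest-first'' does \emph{not} reproduce~$\sched$: after the first $J_r$ job lands on the unique machine in $M_r$ (load $\approx 1$), the second $J_r$ job sees completion time $\approx 2$ on $M_r$ and also $\approx 2$ on any machine in $M_{r-1}$, and since the processing times are genuinely random there are no ties to break --- the perturbations can push the greedy choice to $M_{r-1}$, destroying the construction. The paper's fix is an interleaved order (their Algorithm~1) that raises all loads one level at a time: in round $k$ it feeds one more batch from each of $J_r, J_{r-1}, \ldots, J_k$, so that after round $k$ every machine in $M_\ell$ has load $\approx \min(k,\ell)$. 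This staggering guarantees that when a class-$J_\ell$ job arrives, every machine in $M_{\ell'}$ with $\ell' > \ell$ already has strictly higher would-be completion time and every machine with $\ell' < \ell$ has strictly higher as well, so greedy lands on an unused machine of $M_\ell$ for every realization. Finding such an order is the nontrivial content of the $\LIST{I}$ claim, and your proposal does not yet supply it.
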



To prove this theorem, we first present a $\phi$-smooth instance and in
Algorithm~1, we implicitely give a permutation of the jobs such that list
scheduling using this permutation results in a schedule $\sched$ which
we will show is also lex-jump optimal. The schedule $\sched$ resembles
the worst case example constructed by Czumaj and
V\"{o}cking~\cite{Voecking:2007}: Machines are partitioned into classes
indexed by $0, 1, \ldots, r$. We will show that in $\sched$, each machine
in class $i$ has a load of approximately $i$, whereas the optimal
makespan is bounded by $3$. Hence, we can lower bound the performance
guarantee in the order of the number of classes. Whereas Czumaj and
V\"{o}cking needed $\Theta(\log m / \log \log m)$ classes, we only need
$\Theta(\log \phi)$ classes.

As scaling of all processing requirements does not change the approximation ratio, for sake of simplicity we do not consider probability densities $f_j \colon [0, 1] \rightarrow [0, \phi]$ but scaled densities $f_j' \colon [0, 2^{r+1}] \rightarrow [0, \phi/2^{r+1}]$ for an appropriate integer~$r$.

Let~$\phi \geq 4$ and consider an integer $r = \FLOOR{ \log_4 \phi } \geq 1$,
i.e., $\phi \geq 4^r = 2^{2r}$. The machines are partitioned into machine
classes~$M_k$ for $k = 0, \ldots, r$, such that machine class~$M_k$
contains~$r!/k!$ machines of speed~$2^k$. Also the jobs are partitioned into
job classes~$J_{\ell}$ for $\ell = 1, \ldots, r$ such that a job class
$J_\ell$ contains~$r!/(\ell-1)!$ jobs each having a processing
requirement uniformly drawn from $\left[ 2^\ell, 
2^\ell + 2^{r+1}/\phi \right) \subseteq (0, 2^{r+1}) $. Note that the
density of this instance is bounded by $\phi/2^{r+1}$ which is valid in
the variant of our model that we use in this subsection. The permutation
of the jobs is such that list scheduling constructs the schedule \sched in
the following way:

\begin{center}
{%
 \parbox{0.95\linewidth}%
 {%
  \textbf{Algorithm 1:}\\
   \quad 1. \quad \textbf{for} $k=1$ \textbf{to} $r$ \textbf{do}\\
   \quad 2. \quad\quad \textbf{for} $\ell=r$ \textbf{down to} $k$ \textbf{do}\\
   \quad 3. \quad\quad\quad Schedule $r! / \ell!$ arbitrary jobs of
class~$J_\ell$ according to list scheduling.\\
   \quad 4. \quad\quad \textbf{end for}\\
   \quad 5. \quad \textbf{end for}
 }%
}%
\end{center}
Note that for any job class~$J_\ell$ all $\ell\cdot r!/\ell! = r! /(\ell-1)!$ jobs have been scheduled. Let~$\sched$ be the resulting schedule. First, we show a key property of~$\sched$.

\begin{lemma}
\label{lemma.list.schedule}
For any index $\ell = 1, \ldots, r$ each machine in~$M_\ell$ is assigned
exactly~$\ell$ jobs of job class~$J_\ell$ and no other jobs. The machines in~$M_0$ remain empty.
\end{lemma}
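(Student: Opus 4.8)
The plan is to trace Algorithm~1 step by step and maintain, as a loop invariant, a description of the partial schedule after each iteration of the outer loop on $k$. Concretely, I claim that after the outer iteration for a fixed value of $k$, every machine in class $M_\ell$ for $\ell \geq k$ carries exactly $\min\{\ell, r-k+1\}$ jobs\,\dots\ — actually it is cleaner to formulate the invariant as: after the outer iteration with index $k$, each machine in $M_\ell$ with $\ell \ge k$ holds exactly one job from each of the classes $J_k, J_{k+1}, \ldots, J_\ell$ (hence $\ell-k+1$ jobs on a machine in $M_\ell$), while machines in $M_\ell$ with $\ell < k$ are still empty. Taking $k=r$ (the last outer iteration adds exactly the class-$J_\ell$ jobs only when $\ell \ge r$, i.e.\ nothing new except on $M_r$, and unwinding) and re-examining, the final state is: a machine in $M_\ell$ holds one job from each of $J_1,\dots,J_\ell$; but I must be careful, because a machine in $M_\ell$ is only ``eligible'' to receive a $J_\ell$ job once the outer loop reaches $k\le \ell$, and the counting $r!/\ell!$ machines in $M_\ell$ versus $r!/\ell!$ jobs of class $J_\ell$ scheduled in each relevant inner step is exactly what forces one job per machine. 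So the real content is a bookkeeping lemma: in inner step $(k,\ell)$ with $k\le\ell$, the $r!/\ell!$ jobs of class $J_\ell$ get distributed one per machine onto exactly the $r!/\ell!$ machines of $M_\ell$.

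The key step is to verify that list scheduling actually places these jobs the way the counting suggests, i.e.\ that when $r!/\ell!$ fresh jobs of class $J_\ell$ arrive, the machine minimizing the completion time of each such job is (successively) a distinct machine of $M_\ell$. For this I would compute finishing times explicitly. A machine in $M_p$ that currently holds one job from each of $J_{k'}, \ldots, J_p$ for the ``current'' lower index $k'$ has load roughly $\sum_{\ell'=k'}^{p} 2^{\ell'}/2^{p} = \sum 2^{\ell'-p} < 2$ (plus the small $O(2^{r+1}/\phi) = O(1)$ perturbation slack, which by the choice $r = \lfloor \log_4\phi\rfloor$ is bounded so that the ``$\approx$'' statements in the surrounding text hold). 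Adding a $J_\ell$ job of size $\approx 2^\ell$ to a machine of speed $2^p$ yields extra load $\approx 2^{\ell-p}$. One then checks that this added load is smallest exactly when $p=\ell$: a machine of speed $2^p$ with $p<\ell$ would incur added load $\ge 2^{\ell-p}\ge 2$, which exceeds the completion time achievable on an as-yet-lightly-loaded $M_\ell$ machine (load $<2$ before the job, and the job contributes $\approx 1$, total $<3$); and a machine of speed $2^p$ with $p>\ell$ already has load $\ge \ell'$-ish from earlier rounds — more precisely, by the invariant such a machine already carries jobs from classes $J_{k'},\dots,J_p$ and in particular a $J_{\ell+1}$-or-larger job, giving it a strictly larger current finishing time than a bare $M_\ell$ machine, so assigning the small $J_\ell$ job there is never list-optimal either. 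Thus each arriving $J_\ell$ job goes to an $M_\ell$ machine, and since all $M_\ell$ machines are symmetric and there are exactly as many such jobs as machines, list scheduling fills them one-to-one (greedily it will always pick the least-loaded, i.e.\ still-untouched-in-this-round, $M_\ell$ machine).

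The main obstacle I anticipate is making the inequality comparisons fully rigorous in the presence of the $[2^\ell, 2^\ell + 2^{r+1}/\phi)$ perturbation intervals: I need the separation between ``candidate machine is in $M_\ell$'' and ``candidate machine is in $M_p$, $p\ne\ell$'' to survive the worst-case realization of all processing requirements, which is why the instance uses speeds that are powers of $2$ and perturbation width $2^{r+1}/\phi \le 2^{r+1}/4^r = 2^{1-r}$, shrinking geometrically. I would isolate this as a short computation: show that on any machine the accumulated load stays in an interval of width $< \sum_{i\ge 0} 2^{1-r}\cdot$(something) $< 1$, so the ``$\approx i$'' load on an $M_i$-machine is really in $(i-1, i+1)$ say, and that this slack is too small to ever make a wrong-class assignment beat the right-class one. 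Once that quantitative gap is established, the loop-invariant induction on $k$ goes through mechanically, and the statement of the lemma — $\ell$ jobs of class $J_\ell$ on each $M_\ell$ machine, $M_0$ empty — drops out by reading off the final partial schedule (for $M_0$, note $J_\ell$ jobs are never cheapest on a speed-$1$ machine since even an empty such machine gets load $\approx 2^\ell \ge 2$, worse than an $M_\ell$ machine's total).
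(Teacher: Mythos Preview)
Your high-level plan---prove inductively that in each inner step $(k,\ell)$ the $r!/\ell!$ jobs of class $J_\ell$ land one-to-one on the $r!/\ell!$ machines of $M_\ell$---is exactly the paper's approach. However, your stated invariant is wrong, and that error propagates through all your load computations.

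You write that after outer iteration $k$ a machine in $M_\ell$ (with $\ell\ge k$) holds ``one job from each of the classes $J_k, J_{k+1}, \ldots, J_\ell$'', giving load $\sum_{\ell'=k}^\ell 2^{\ell'-\ell}<2$. But the lemma you are proving says each machine in $M_\ell$ ends up with $\ell$ jobs \emph{all} from class $J_\ell$, and no others. The correct invariant (the paper's Claim~2) is: after inner step $(k,\ell)$, a machine in $M_{\ell'}$ carries $k$ jobs of class $J_{\ell'}$ if $\ell'\ge\ell$, and $\min\{k-1,\ell'\}$ jobs of class $J_{\ell'}$ if $\ell'<\ell$---never jobs from any other class. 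Consequently the load of a machine in $M_{\ell'}$ is approximately that job \emph{count}, not bounded by~$2$.

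This breaks your case analysis. For the target machine (an ``unused'' $i\in M_\ell$) the finishing time after adding a $J_\ell$ job is about $k$, not ``$<3$''. For a faster or already-used machine ($\ell'\ge\ell$) the finishing time is about $k+2^{\ell-\ell'}>k$, so that case is fine. But for a slower machine $h\in M_{\ell'}$ with $\ell'<\ell$, the current load is $\min\{k-1,\ell'\}$, and the finishing time is $\min\{k-1,\ell'\}+2^{\ell-\ell'}$. Your argument ``added load $\ge 2$ beats total $<3$'' no longer applies; you must instead show $\min\{k-1,\ell'\}+2^{\ell-\ell'}\ge k+1$, which requires the observation $2^{x}-x\ge 1$ for integers $x\ge 1$ (applied with $x=\max\{k-\ell',1\}$, using $\ell\ge k$ and $\ell>\ell'$). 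This is also precisely what handles $M_0$. Once you fix the invariant and insert this inequality, the induction goes through exactly as the paper does; but as written, your load bounds are off and the slower-machine case is not established.
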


\begin{proof}
Let~$\sched(k,\ell)$ denote the partial schedule after processing line~3
of iteration~$(k, \ell)$ of Algorithm~1. Within the $(k, \ell)^{th}$
iteration, we call a machine~$i \in M_\ell$ \emph{used} if a job of
class~$J_\ell$ has already been assigned to~$i$ during that iteration. Otherwise, we call machine~$i$ \emph{unused}. We show the two claims below inductively and simultaneously. The lemma then follows straightforwardly from the second claim since the last iteration is~$(r,r)$.

\begin{claim}
During iteration~$(k,\ell)$, $r!/\ell!$ jobs of class~$J_\ell$ are
assigned to $r!/\ell!$ distinct machines (i.e.\ all machines) of
class~$M_\ell$.
\end{claim}

\begin{claim}
\label{claim:nrjobs-partialschedule}
In the partial schedule~$\sched(k,\ell)$ each machine in class~$M_{\ell'}$ is assigned
\[ k'  = \left\{ \begin{array}{c@{\quad:\quad}l}
k & \ell' \geq \ell \COMMA \cr
\MIN{ k-1, \ell' } & \ell' < \ell \COMMA
\end{array} \right. \]
jobs of class~$J_{\ell'}$ and no other jobs.
\end{claim}

Figure~\ref{fig:list-lex-lb1} visualizes the partial schedule
$\sched(k,\ell)$. Machine~$i$ with speed $s_i = 2^i$ is a representative
for all machines in class~$M_i$. With~$\load{i}$ we refer to the current
load of machine~$i$ and with $\load[']{i}$ to the load of machine~$i$ at
the end of iteration $(k, k)$, i.e., in the partial schedule
$\sched(k,k)$. In phase $(k,\ell)$, $r! / \ell!$ jobs of size
roughly~$2^\ell$ are being assigned to the $r! / \ell!$ machines
in~$M_\ell$. All machines in~$M_{\ell'}$ for $\ell'>\ell$ just received
a job of roughly size~$2^{\ell'}$. All machines in~$M_{\ell'}$ for
$\ell' \in \SET{k, \ldots, \ell-1}$ will still receive a single job of
size roughly~$2^{\ell'}$ during iteration~$k$ of the outer loop. Figure~\ref{fig:list-lex-lb1} follows from the observations.

\begin{artclfig}
  \includegraphics[width=0.7\textwidth]{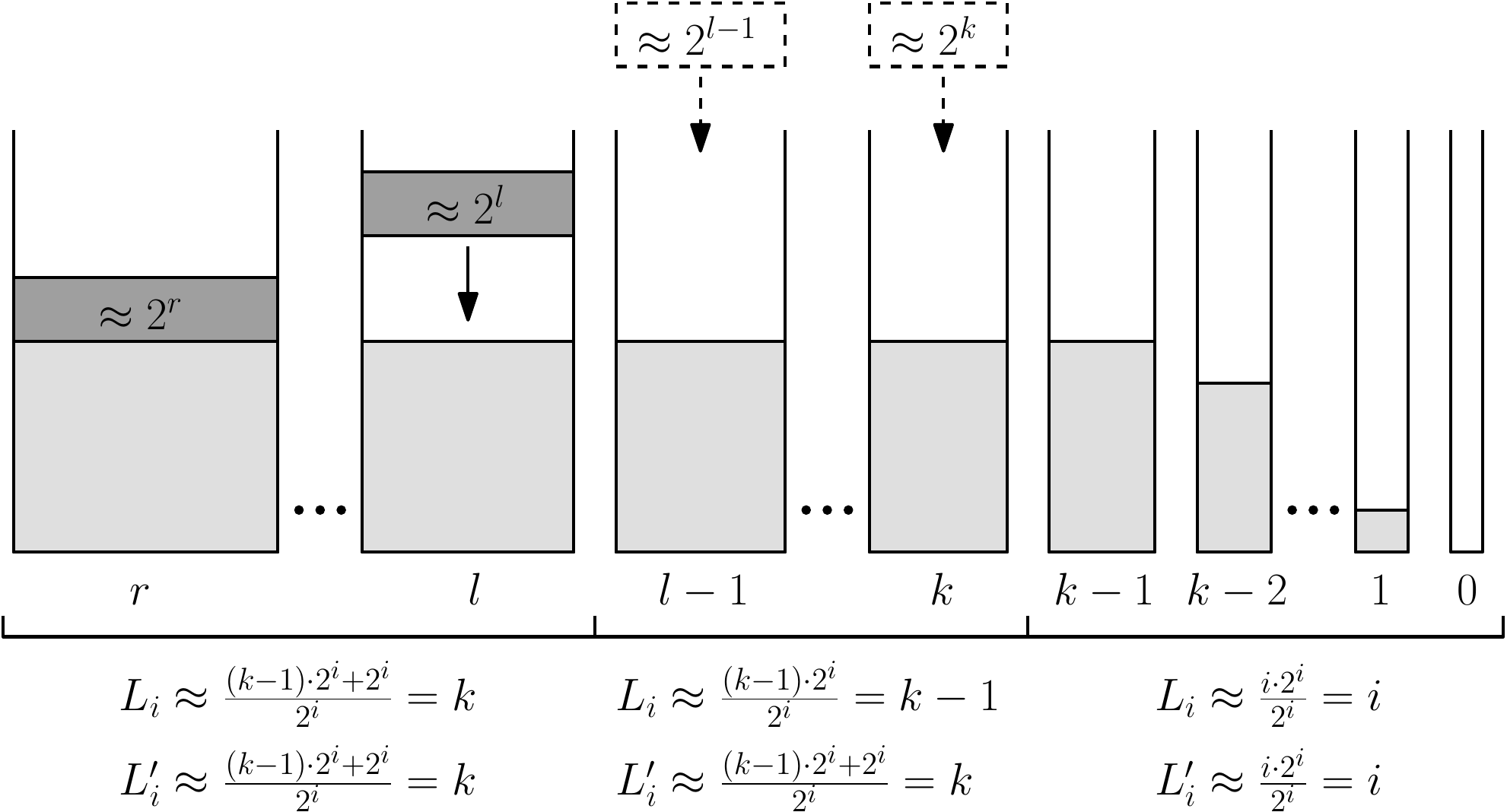}
  \caption{The partial schedule $\sched(k,\ell)$}
  \label{fig:list-lex-lb1}
\end{artclfig}

First, we validate the claims for the first iteration $(1,r)$. As only
$r!/r!=1$ job of class~$J_r$ has to be scheduled and since all machines
are still empty, the job will be scheduled on the fastest machine which
is the single machine in~$M_r$. Hence, both claims hold true for the
first iteration. Now, consider an arbitrary iteration~$(k,\ell)$ and
assume both claims hold true for all previous iterations. Consider a
job~$j \in J_\ell$ which needs to be assigned to a machine during
iteration~$(k, \ell)$. We show that job~$j$ will always be assigned to
an unused machine~$i \in M_\ell$. To see this, first note that the previous
iteration was either $(k,\ell+1)$ or $(k-1,k-1)$.

Let $i \in M_\ell$ be an unused machine. By the second claim, we know
that this machine carries~$k-1$ jobs of class~$J_\ell$. Consequently, we
can upper bound its load by
\[
  \load{i} +\frac{p_j}{s_i}
  < \frac{k \cdot (2^\ell + 2^{r+1}/\phi)}{2^\ell}
  = k + \frac{k}{\phi} \cdot 2^{r+1-\ell}
  \leq k + \frac{\ell}{2^{2r}} \cdot 2^{r+1-\ell}
  \leq k + \frac{1}{2^r} \COMMA
\]
where we used that~$k\leq \ell$,~$\phi \geq 2^{2r}$, and~$\ell/2^\ell
\leq 1/2$ for all integers~$\ell\geq 1$.

Consider a machine machine~$h$ which is either used (in that case let $\ell' =
\ell$) or in class~$M_{\ell'}$ for some $\ell' \in \SET{ \ell+1, \ldots,
r }$. By Claim~\ref{claim:nrjobs-partialschedule}, this machine carries~$k$ jobs of class~$J_ {\ell'}$ and thus
\[
  \load{h} +\frac{p_j}{s_h}
  \geq \frac{k \cdot 2^{\ell'} + 2^\ell}{2^{\ell'}}
  = k + 2^{\ell-\ell'}
  > k + \frac{1}{2^r}
  > \load{i} +\frac{p_j}{s_i} \DOT
\]

Finally, consider a machine~$h \in M_{\ell'}$ for some
$\ell' \in \SET{ 1, \ldots, \ell-1 }$. Again by
Claim~\ref{claim:nrjobs-partialschedule}, it carries~$\MIN{ k-1, \ell' }$ jobs
of class~$J_{\ell'}$ and thus
\begin{align*}
  \load{h} +\frac{p_j}{s_h}
  &\geq \frac{\MIN{ k-1, \ell' } \cdot 2^{\ell'} + 2^\ell}{2^{\ell'}}
  = \MIN{ k-1, \ell' } + 2^{\ell-\ell'} \cr
  &\geq (k - \MAX{ k-\ell', 1 }) + 2^{\MAX{ k-\ell', 1 }}
  \geq k + 1
  > \load{i} +\frac{p_j}{s_i} \COMMA
\end{align*}
where the second inequality follows from $\ell \geq \MAX{ k, \ell'+1 }$ and the third inequality follows from $2^i - i \geq 1$ for all positive integers~$i$.

With this complete case analysis we have shown that job~$j$ will be
assigned to an unused machine~$i \in M_\ell$. We conclude that during
iteration~$(k,\ell)$, each of the~$r!/\ell!$ jobs to be assigned will be
assigned to an unused machine in~$M_\ell$. Note that $|M_\ell|=r!/\ell!$, and hence for each job there always exists such an unused machine. The first claim and the second claim follow immediately.
\end{proof}

\begin{lemma}
Schedule~$\sched$ is lex-jump optimal.
\end{lemma}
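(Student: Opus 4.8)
The plan is to verify the defining inequality of lex-jump optimality directly for the schedule $\sched$ produced by Algorithm~1, using the structure established in Lemma~\ref{lemma.list.schedule}. By that lemma, every machine in class $M_\ell$ carries exactly $\ell$ jobs of class $J_\ell$, so its load is $\load{i} = \ell \cdot p_j / 2^\ell \in [\ell, \ell + \ell \cdot 2^{r+1-\ell}/\phi) \subseteq [\ell, \ell + 2^{-r}]$ for $\ell \geq 1$, while machines in $M_0$ are empty. Thus the finishing time of a machine in class $M_\ell$ is essentially $\ell$. To prove lex-jump optimality it suffices to show that no single job can be moved to another machine where it finishes strictly earlier; equivalently, for a job $j$ currently on a machine $i \in M_\ell$ (so its current completion time is $\load{i} \approx \ell$) and any other machine $h \in M_{\ell'}$, the resulting completion time $\load{h} + p_j/s_h$ is at least $\load{i}$.

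First I would handle the two cases for the target machine. If $h$ is a faster or equal machine, i.e.\ $\ell' \geq \ell$, then $h$ already carries $\ell'$ jobs of size roughly $2^{\ell'}$, so $\load{h} + p_j/s_h \geq (\ell' \cdot 2^{\ell'} + 2^\ell)/2^{\ell'} = \ell' + 2^{\ell-\ell'} \geq \ell' \geq \ell > \load{i}$ when $\ell' > \ell$; and when $\ell' = \ell$, $h \neq i$ is another full machine of the same class, giving $\load{h} + p_j/s_h = \ell + 2 \cdot p_j/2^\ell > \ell + 1 > \load{i}$. If instead $h$ is a strictly slower machine, $\ell' < \ell$ (including the empty machines in $M_0$, where $\load{h} = 0$), then $h$ carries $\ell'$ jobs of size roughly $2^{\ell'}$, so $\load{h} + p_j / s_h \geq (\ell' \cdot 2^{\ell'} + 2^\ell)/2^{\ell'} = \ell' + 2^{\ell - \ell'}$. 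Since $\ell - \ell' \geq 1$, the function $x \mapsto 2^x - x$ being increasing for $x \geq 1$ gives $\ell' + 2^{\ell-\ell'} = \ell + (2^{\ell-\ell'} - (\ell - \ell')) \geq \ell + (2^1 - 1) = \ell + 1 > \load{i}$, using $\load{i} < \ell + 2^{-r} \leq \ell + 1$.

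These computations show that moving job $j$ off machine $i$ to any other machine strictly increases its completion time, which is exactly the condition for $\sched$ to admit no lex-jump improving move (a move is lex-jump improving iff some job strictly decreases its own completion time, which then makes the sorted load vector decrease lexicographically). Hence $\sched$ is lex-jump optimal.

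The main obstacle is purely a matter of bookkeeping with the perturbation slack: one must be careful that the $+2^{r+1}/\phi$ noise term in each processing requirement never overturns the clean comparisons above. The key quantitative facts to invoke are $\phi \geq 2^{2r}$ (so that each job of class $J_\ell$ has size in $[2^\ell, 2^\ell(1 + 2^{r+1-\ell}/2^{2r})) = [2^\ell, 2^\ell(1+2^{1-r-\ell}))$, giving per-job load contribution under $1 + 2^{-r}$ when divided by the machine's own speed $2^\ell$), and $2^i - i \geq 1$ for all positive integers $i$; these are already used in the proof of Lemma~\ref{lemma.list.schedule}, so the slack never accumulates beyond the $1/2^r < 1$ margin, leaving the strict inequalities intact. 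I would therefore reuse essentially the same estimates as in that earlier case analysis, now applied to the final schedule $\sched(r,r)$ rather than to an intermediate partial schedule.
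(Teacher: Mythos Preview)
Your approach is essentially the paper's: use Lemma~\ref{lemma.list.schedule} to bound every load in $M_\ell$ within $[\ell,\ell+1)$, then verify $\load{i'}+p_j/s_{i'}\geq \load{i}$ for every potential jump. The paper handles all target classes in one line via the inequality $2^k-k\geq 1$ for \emph{all} integers~$k$ (so $\ell'+2^{\ell-\ell'}=\ell+(2^{\ell-\ell'}-(\ell-\ell'))\geq \ell+1>\load{i}$ regardless of the sign of $\ell-\ell'$), whereas you split into three cases; both routes are fine.

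Two local slips to fix. In your case $\ell'>\ell$ the displayed chain ends with ``$\geq \ell' \geq \ell > \load{i}$'', but $\load{i}\geq \ell$, so the last step is false; you want $\ell'\geq \ell+1>\load{i}$ instead (using $\load{i}<\ell+1$). In your case $\ell'=\ell$ the expression ``$\load{h}+p_j/s_h=\ell+2\cdot p_j/2^\ell$'' is not right (there is no factor~$2$); the clean bound is simply $\load{h}+p_j/s_h\geq \ell + 2^\ell/2^\ell=\ell+1>\load{i}$. Neither affects the overall argument.
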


\begin{proof}
It follows from Lemma~\ref{lemma.list.schedule} that the load of any
machine~$i \in M_\ell$ can be bounded by
\[
  \ell
  \leq \load{i}
  \leq \ell + \ell \cdot \frac{2^{r+1}}{2^\ell \phi}
  \leq \ell + \frac{\ell}{2^\ell} \cdot \frac{2^{r+1}}{2^{2r}}
  \leq \ell + \frac{1}{2} \cdot \frac{2^{r+1}}{2^{2r}}
  < \ell + 1 \DOT
\]
If a job~$j$ assigned to machine~$i \in M_\ell$ would jump to another
machine~$i' \in M_{\ell'}$, then
\[
  \load{i'} + \frac{p_j}{s_{i'}}
  \geq \ell' + \frac{2^\ell}{2^{\ell'}}
  = \ell - (\ell-\ell') +2^{\ell-\ell'}
  \geq \ell + 1 > \load{i} \COMMA
\]
where the last inequality follows from $2^k - k \geq 1$ for all integers~$k$. Thus, any job would be worse off by jumping to another machine, and hence schedule~$\sched$ is lex-jump optimal.
\end{proof}

We conclude this subsection by proving Theorem~\ref{thm.mainII.list.lex}.

\begin{proof}[\ProofText{Theorem~\ref{thm.mainII.list.lex}}]
We consider schedule~$\sched$ constructed above which is both a list
schedule and a lex-jump optimal schedule. By
Lemma~\ref{lemma.list.schedule} the load of the single machine in~$M_r$
is at least~$r$. Hence, $\csched \geq r$. Now, consider a
schedule~$\sched'$ in which each machine in~$M_\ell$ processes a single
job from job class~$J_{\ell+1}$, $\ell = 0, \ldots, r-1$. The single
machine in~$M_r$ remains empty. Then, the load of any machine~$i \in
M_\ell$ with job~$j$ assigned to it is bounded as follows:
\[ \load{i} = p_j / s_i \leq (2^{\ell+1}+ 2^{r+1}/\phi)/2^\ell \leq 2 + 2^{r+1}/(2^{2r} \cdot 2^1) = 2 + 2^{-r} < 3 \DOT \]
Hence, $\copt(I) \leq \csched[I, \sched'] < 3$ and the theorem follows: $\csched/\copt(I) \geq r/3 = \Omega(r) = \Omega(\log \phi)$.
\end{proof}

\section{Restricted Machines}
\label{sec:restricted}

In this section, we provide lower bound examples showing that the
worst-case performance guarantees for all variants of the restricted
machines are robust against random noise. Our lower bounds are in the
order of the worst-case bounds and hold in particular
for $\phi = 2$. In our lower bound constructions all processing
requirements are chosen uniformly at random from intervals of length~$1/2$.
This means that even with large perturbations the worst-case
lower bounds still apply.

\subsection{Jump Neighborhood on Restricted Machines}
\label{subsec:lb-jump-restricted}

Rutten et al.~\cite{Rutten:etal:2012} showed that the makespan of a
jump optimal schedule is at most a factor of $1/2 + \sqrt{m - 3/4}$ away from
the optimal makespan on restricted identical machines. 
On restricted related machines they
showed that the makespan of a jump optimal
schedule is not more than a factor of $1/2+\sqrt{(m-1)\cdot\smax+1/4}$
away from the makespan of an optimal schedule, assuming that $\smin=1$.
They provided two examples showing that the bound on identical machines
is tight and the one on related machines is tight up to a constant
factor.
We show that even on $\phi$-smooth
instances these bounds are tight up to a constant factor.
As in~\cite{Rutten:etal:2012}, we construct an example with two job
classes and three machine classes. The first machine class consists of
only one machine and this machine is the slowest among all machines. The
first class of jobs can only be scheduled on machines in the first two
classes, whereas the jobs in the second class are allowed on all
machines. To construct a bad example, we schedule all jobs in the first
class on the slowest machine and use the jobs of the second class to
fill the machines in the second machine class so that the schedule will
be jump optimal, with high probability. 

\begin{theorem}\label{thm:RestrictedJump}
For every~$\phi \geq 2$ there exists a class of $\phi$-smooth instances~$\mathcal{I}$ on
restricted related machines such that
\[ \El[I \sim \mathcal{I}]{\max_{\sched \in \JUMP{I}} \frac{\csched[I,
\sched]}{\copt(I)}} = \Omega \left( \sqrt{m \cdot \smax} \right), \]
assuming without loss of generality that $\smin = 1$.
\end{theorem}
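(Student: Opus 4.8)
For any $\phi\ge 2$ I would build a $2$-smooth instance (which is a fortiori $\phi$-smooth) of exactly the announced shape: machine class~$1$ is a single \emph{slow} machine $m_1$ of speed $\smin=1$; machine class~$2$ is a set $B$ of $k$ \emph{medium} machines of speed $\smax$; machine class~$3$ is a set $C$ of $m-k-1$ \emph{fast} machines, also of speed $\smax$. The class-$1$ jobs are allowed exactly on $\{m_1\}\cup B$, there are $N=\Theta(k\smax)$ of them, and each has processing requirement drawn uniformly from $[1/2,1]$ (density $2\le\phi$). The class-$2$ jobs are allowed everywhere, there are $\Theta(k^2\smax^2)$ of them, and each is drawn uniformly from $[0,1/2]$ (density $2\le\phi$; this is the smallest support compatible with the $\phi$-smooth model, and it is the crucial point). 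The free parameters are $\smax\ge 2$ and an integer $k\ge 2$, coupled to the number of machines by $m=\Theta(k^2\smax)$, i.e.\ $k=\Theta(\sqrt{m/\smax})$ — this is the regime $m\gtrsim\smax$ in which $\sqrt{m\smax}$ is the quantity to beat (setting $\smax=1$ degenerates to the identical-machine case with ratio $\Omega(\sqrt m)$).

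\textbf{The bad jump-optimal schedule and its optimality.} The adversary's schedule $\sched$ places all class-$1$ jobs on $m_1$, so $L_{m_1}(\sched)=Q_1:=\sum_{j\text{ class }1}p_j\in[N/2,N]$ deterministically. It then fills each medium machine $b\in B$ greedily with class-$2$ jobs, stopping at the first job that brings $L_b(\sched)\ge Q_1-\tfrac1{2\smax}$; since a class-$2$ job contributes at most $\tfrac1{2\smax}$ to the load of a speed-$\smax$ machine, this forces $L_b(\sched)\in[Q_1-\tfrac1{2\smax},\,Q_1)$. The remaining class-$2$ jobs are list-scheduled onto $C$. I would first show that with probability $1-e^{-\Omega(k^2\smax^2)}$ there are enough class-$2$ jobs to complete the fill: the total demand is at most $k\smax Q_1+k/2$, the total supply is a sum of $\Theta(k^2\smax^2)$ i.i.d.\ uniforms on $[0,1/2]$ with expectation exceeding the demand by a constant factor, and Hoeffding's bound (Theorem~\ref{thm:app:hoeffding}, \cite{Hoeffding:1963}) finishes it. On this event $m_1$ is the unique critical machine, because every $L_b(\sched)<Q_1$ and, by the standard list-scheduling bound, every fast machine carries load at most $\frac{\text{leftover volume}}{(m-k-1)\smax}+\frac1{2\smax}=O(1)\ll Q_1$. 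The only jobs on $m_1$ are class-$1$ jobs, which can move only to $B$; for such a job $j$ (with $p_j\ge1/2$), moving to $b\in B$ gives completion time $L_b(\sched)+p_j/\smax\ge(Q_1-\tfrac1{2\smax})+\tfrac1{2\smax}=Q_1=L_{m_1}(\sched)$, so it is not improving. Hence $\sched\in\JUMP{I}$.

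\textbf{Bounding $\copt$ and concluding.} I would bound $\copt(I)=O(1)$ by the explicit schedule that list-schedules the $N$ class-$1$ jobs on the $k$ medium machines (makespan $\le\frac{Q_1}{k\smax}+\frac1{\smax}=O(1)$ because $Q_1=O(k\smax)$) and list-schedules all class-$2$ jobs on the $m-k-1$ fast machines (makespan $\le\frac{Q_2}{(m-k-1)\smax}+\frac1{2\smax}=O(1)$ because the total class-$2$ volume is $Q_2=O(k^2\smax^2)$ and $(m-k-1)\smax=\Theta(k^2\smax^2)$); these machine sets are disjoint, so $\copt(I)$ is the maximum of the two, i.e.\ $O(1)$. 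Therefore, on the good event, $\max_{\sigma\in\JUMP{I}}\csched[I,\sigma]/\copt(I)\ge L_{m_1}(\sched)/\copt(I)=\Omega(Q_1)=\Omega(k\smax)=\Omega(m/k)=\Omega(\sqrt{m\smax})$, using $k=\Theta(\sqrt{m/\smax})$. Since the ratio is always at least $1$ and the good event has probability at least $1-e^{-\Omega(k^2\smax^2)}\ge\tfrac12$, we get $\El[I\sim\mathcal I]{\max_{\sigma\in\JUMP{I}}\csched[I,\sigma]/\copt(I)}=\Omega(\sqrt{m\smax})$.

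\textbf{Main obstacle.} The delicate point is the high-probability verification that $\sched$ is genuinely jump-optimal: one must keep every medium-machine load inside the narrow window $[Q_1-p_j/\smax,\,Q_1)$ \emph{simultaneously} for all class-$1$ jobs $j$ — which is precisely why the class-$2$ jobs must be small, forcing the $[0,1/2]$ support — and at the same time one must argue, via concentration, both that the random class-$2$ volume suffices to realize the fill and that the leftover dumped on $C$ never raises a fast machine above the makespan. The remaining work is routine bookkeeping: choosing the constants in $N$, in the class-$2$ job count, and in the coupling $m=\Theta(k^2\smax)$ so that $\copt$ is truly $O(1)$ while $Q_1=\Theta(\sqrt{m\smax})$.
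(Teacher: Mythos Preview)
Your proposal is correct and follows essentially the same approach as the paper: three machine classes (one slow machine, $k\approx\sqrt{m/\smax}$ medium machines, the rest fast), class-$1$ jobs in $[1/2,1]$ restricted to the first two classes and piled on the slow machine, class-$2$ jobs in $[0,1/2]$ used to fill the medium machines into the window $[L_1-\tfrac1{2\smax},L_1)$ so that no class-$1$ job can escape, with Hoeffding guaranteeing enough class-$2$ volume. The only cosmetic differences are that the paper gives the medium machines a slightly reduced speed $s'=\max\{1,\smax\cdot k'/k\}$ rather than $\smax$, and introduces an auxiliary scaling parameter $z$ (so $\copt\le 17z$ and $\csched\ge zsk'-1$) to push the failure probability arbitrarily close to~$0$; your version with equal speeds and implicit constants works just as well for the $\Omega(\cdot)$ statement.
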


\begin{proof}
It suffices to show the theorem for~$\phi=2$ and $m \geq 3$. W.l.o.g.\ we
assume~$\smin = 1$ and set $s := \smax/\smin = \smax$. Let~$z>2$ be an
arbitrary integer, let
\[
  m' = m-2 \geq 1 \COMMA \quad
  k' = \sqrt{\frac{m'}{s}} \leq \sqrt{m'} \COMMA \quad \text{and} \quad
  k = \CEIL{k'} \DOT 
\]
In the remainder we assume
that~$\sqrt{m's} \geq 17$. This is possible because we only want to derive
an asymptotic bound. We consider the following $\phi$-smooth
instance~$\mathcal{I}$. The set~$M$ of machines is partitioned into three
classes~$M_1$, $M_2$, and~$M_3$ such that
\[
  |M_1| = 1 \COMMA \quad
  |M_2| = k \COMMA \quad \text{and} \quad
  |M_3| = m' - (k-1) > m' - k' \geq 0 \DOT
\]
The machine in~$M_1$ has speed~$1$, the
machines in~$M_2$ have speed
\[
  s' = \max \SET{ 1, s \cdot \frac{k'}{k} } \in [1, s] \COMMA
\]
and the machines in~$M_3$ have speed~$s$. Let the set~$\jobset$ of jobs be
partitioned into two subsets~$\jobclass{1}$ and~$\jobclass{2}$, consisting of
\[
  |\jobclass{1}| = \FLOOR{2zsk'} \quad \text{and} \quad
  |\jobclass{2}| = \CEIL{32z s \cdot (m' - k')} \leq \CEIL{32zs \cdot |M_3|}
\]
jobs whose processing requirements are independently and uniformly drawn from
$[1/2, 1]$ and from $[0, 1/2]$, respectively. The jobs in~$\jobclass{1}$
are only allowed to be scheduled on the machines in~$M_1 \cup M_2$, whereas the
jobs in~$\jobclass{2}$ are allowed to be scheduled on any machine.

First, we construct a schedule~$\sched'$ to bound the optimal makespan: Use the
list scheduling algorithm to schedule all jobs in~$\jobclass{1}$ on the machines in~$M_2$, and all jobs
in~$\jobclass{2}$  on the machines in~$M_3$. Figure~\ref{fig:jump-restricted-opt} depicts schedule~$\sched'$. Machine~$i$ is a representative for all machines in class~$M_i$.

\begin{artclfig}\newcommand{\height}{12em}
  \includegraphics[height=\height]{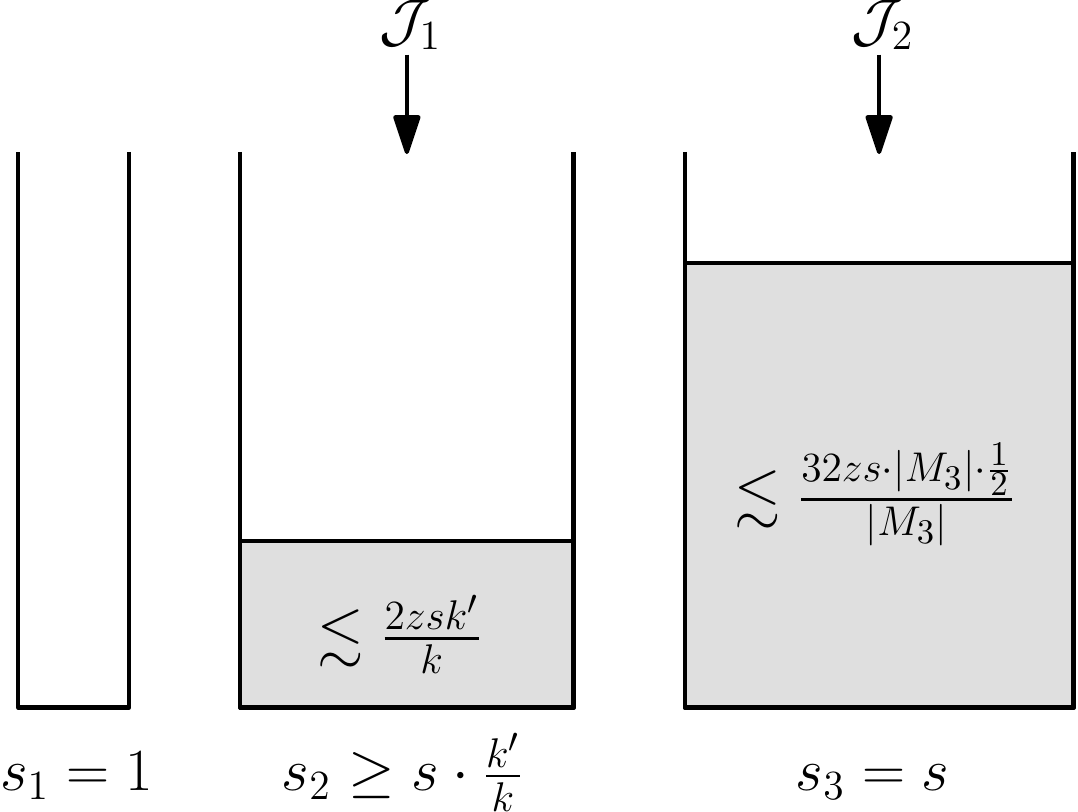}
  \caption{Schedule~$\sched'$}
  \label{fig:jump-restricted-opt}
\end{artclfig}

Along the same lines as in~\cite{graham:1966}, it follows that for all machines~$i \in M_2$
\[
  \load{i}
  \leq \frac{\frac{\sum_{j \in \jobclass{1}} p_j}{|M_2|} +
\max \limits_{j \in \jobclass{1}} p_j}{s'} 
 \leq \frac{\frac{|\jobclass{1}| \cdot 1}{|M_2|} + 1}{s'}
 \leq \frac{\frac{2zsk'}{k} + 1}{s'}
 \leq \frac{\frac{2zsk'}{k}}{s \cdot \frac{k'}{k}} + \frac{1}{1}
 = 2z + 1 \DOT
\]
Similarly,  for all machines~$i \in M_3$
\[
  \load{i}
  \leq \frac{\frac{\sum_{j \in \jobclass{2}} p_j}{|M_3|} +
\max \limits_{j \in \jobclass{2}} p_j}{s}
 \leq \frac{\frac{|\jobclass{2}| \cdot \frac{1}{2}}{|M_3|} +
\frac{1}{2}}{s}
 \leq \frac{\frac{32zs \cdot |M_3|}{2 \cdot |M_3|} +
1}{s}
 \leq 16z + 1 \DOT
\]

Hence, $\copt \leq \csched[\sched'] \leq 17z$. Before we proceed with
constructing a `bad' jump optimal schedule~$\sched$, we observe that
\begin{equation}
\label{eqref:UpperboundHats}
s' \leq 2s \cdot k'/k
\end{equation}
due to $1 \leq (\sqrt{m'} + 1)/k \leq 2\sqrt{m's}/k = 2s \cdot k'/k$.

We construct a jump optimal schedule~$\sched$ on the $\phi$-smooth
instance~$\mathcal{I}$ such that the corresponding makespan exceeds $zsk'$ with
high probability: Schedule all jobs in~$\jobclass{1}$ on the single machine
in~$M_1$. Then, $zsk' - 1 \leq \load{1} \leq 2zsk'$. Next, start assigning jobs
from~$\jobclass{2}$ to the machines in~$M_2$ according to the list scheduling
algorithm with an arbitrary job permutation, until
\begin{itemize}
\item[(a)] either~$\jobclass{2}$ becomes empty, or until
\item[(b)] $\load{i} \in \big[ \load{1} - \frac{1}{2s'}, \load{1} \big)$ for all~$i \in M_2$. If
there remain unscheduled jobs in~$\jobclass{2}$, then we assign them to
the machines in~$M_3$ using list scheduling.
\end{itemize}
Let~$Q = \sum_{j \in \jobclass{2}} p_j$ and let $\event$ denote the event that 
$Q > 4z(sk')^2$. If $\event$ occurs, then
\[
  \sum_{i \in M_2} s' \cdot \load{1}
  \leq |M_2| \cdot \left( 2s \cdot \frac{k'}{k} \right) \cdot 2zsk'
  = 4z(sk')^2
  < Q
\]
due to Inequality~\eqref{eqref:UpperboundHats}, i.e., the
algorithm will end up in case~(b) as~$p_j \leq 1/2$ for any job~$j \in
\jobclass{2}$. This shows that no machine~$i \in M_2$ is critical. Using the
same argument as for the analysis of~$\sched'$ we can show that the load of any
machine~$i \in M_3$ is bounded from above by $16z + 1 < 17z -1 \leq z \cdot
\sqrt{m' \cdot s} - 1 = zsk' - 1 \leq \load{1}$, i.e., the machine in~$M_1$ is the
unique critical machine. As each job on this machine has processing requirement
at least~$1/2$ and due to the property of the loads of the machines in~$M_2$ in
case~(b), schedule~$\sched$ is jump optimal and $\csched = \load{1} \geq zsk' -
1$.

It remains to determine the probability~$\Prob[\event]$. For this, note that
\begin{align*}
  \E[Q]
  &= \frac{|\jobclass{2}|}{4}
  \geq 8zs \cdot (m' - k')
  = 8zsm' \cdot \left( 1 - \frac{k'}{m'} \right) \cr
  &= 8zsm' \cdot \left( 1 - \frac{1}{\sqrt{m's}} \right)
  > 6zsm'
\end{align*}
as $\sqrt{m's} \geq 17$ by our initial assumption. On the other hand, $4z(sk')^2 = 4zsm'$.
Applying Hoeffding's Inequality~\cite{Hoeffding:1963} (see also Theorem~\ref{thm:app:hoeffding}), we obtain
\begin{align*}
  \Probl{\bar{\event}} 
  &= \Probl{Q \leq 4zsm'}
  \leq \Probl{Q - \Ee{Q} \leq -2zsm'} \cr
  &\leq \exp \left( - \frac{2 \cdot (2zsm')^2}{|\jobclass{2}| \cdot \left( \frac{1}{2}\right)^2} \right)
  \leq \exp \left( - \frac{32z^2 s^2 m'^2}{32zsm'+1} \right) \COMMA
\end{align*}
which becomes arbitrarily close to~$0$ when $z$ increases. Hence, for sufficiently large integers~$z$
\begin{align*}
  \El[I \sim \mathcal{I}]{\max_{\sched \in \JUMP{I}} \frac{\csched[I, \sched]}{\copt(I)}}
  &\geq  \El[I \sim \mathcal{I}]{ \left. \max_{\sched \in \JUMP{I}} \frac{\csched[I, \sched]}{\copt(I)} \right| \event} \cdot  \Probl[I \sim \mathcal{I}]{\event} \cr
  &\geq \frac{zsk' - 1}{17z} \cdot \frac{17}{18}
  \geq \frac{\sqrt{(m-2) \cdot \smax} - \frac{1}{z}}{18} \DOT \qedhere
\end{align*}
\end{proof}

\begin{cor}
For every~$\phi \geq 2$ there exists a class of $\phi$-smooth instances~$\mathcal{I}$ on
restricted identical machines such that
\[
  \El[I \sim \mathcal{I}]{\max_{\sched \in \JUMP{I}} \frac{\csched[I, \sched]}{\copt(I)}}
  = \Omega(\sqrt{m}) \DOT
\]
\end{cor}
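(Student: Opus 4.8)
The plan is to derive the identical-machines result as a direct specialization of Theorem~\ref{thm:RestrictedJump}, exactly as the corollary is positioned in the text. Setting $\smax = \smin = 1$ (so that $s = 1$) in the construction of Theorem~\ref{thm:RestrictedJump} collapses the three machine-speed classes into a single speed, which is precisely the restricted identical-machines setting. The quantities simplify: $k' = \sqrt{m'}$, $k = \lceil \sqrt{m'} \rceil$, and $s' = \max\{1, k'/k\} = 1$ since $k' \le k$. The job classes and their processing-requirement intervals ($[1/2,1]$ for $\jobclass{1}$, $[0,1/2]$ for $\jobclass{2}$) are unchanged, as are the allowed-machine sets. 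Thus the $\phi$-smooth instance obtained is a legitimate restricted identical-machines instance for $\phi = 2$ (and hence for all $\phi \ge 2$).

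With $s = 1$, the conclusion of Theorem~\ref{thm:RestrictedJump} reads
\[
  \El[I \sim \mathcal{I}]{\max_{\sched \in \JUMP{I}} \frac{\csched[I,\sched]}{\copt(I)}}
  = \Omega\left(\sqrt{(m-2)\cdot 1}\right) = \Omega(\sqrt{m-2}) = \Omega(\sqrt{m}),
\]
so essentially nothing new needs to be proved — one only needs to verify that every step in the proof of Theorem~\ref{thm:RestrictedJump} remains valid under $s = 1$. I would check the handful of places where the argument implicitly used $s \ge 1$ or $s > 1$: the requirement $z > 2$ and the asymptotic assumption $\sqrt{m's} = \sqrt{m'} \ge 17$ are still fine (they only restrict $m$ and $z$, which we are free to take large); the bound $\copt \le 17z$ goes through verbatim; inequality~\eqref{eqref:UpperboundHats}, $s' \le 2sk'/k = 2k'/k$, still holds since $1 \le (\sqrt{m'}+1)/k \le 2\sqrt{m'}/k$; the event $\event = \{Q > 4z(sk')^2 = 4zm'\}$ and its probability bound via Hoeffding are unaffected; and the verification that schedule~$\sched$ is jump optimal with makespan $\csched = \load{1} \ge zsk' - 1 = z\sqrt{m'} - 1$ carries over since the comparison $16z + 1 < 17z - 1 \le z\sqrt{m'} - 1$ uses only $\sqrt{m'} \ge 17$. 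Hence for sufficiently large $z$ the same chain of inequalities yields
\[
  \El[I \sim \mathcal{I}]{\max_{\sched \in \JUMP{I}} \frac{\csched[I,\sched]}{\copt(I)}}
  \ge \frac{zsk' - 1}{17z}\cdot\frac{17}{18}
  \ge \frac{\sqrt{m-2} - 1/z}{18} = \Omega(\sqrt{m}).
\]

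The main ``obstacle'' is really just bookkeeping: one must make sure that after substituting $s = 1$ no quantity degenerates (e.g.\ that $|M_3| = m' - (k-1)$ can still be made positive, that $s' = 1$ is a legitimate speed, and that the $\phi$-smooth model constraints on densities are met for $\phi = 2$). Since none of these break, the cleanest write-up is a two-line proof that simply says: apply Theorem~\ref{thm:RestrictedJump} with all speeds equal to $1$; the instance constructed there is then a restricted identical-machines instance, and its conclusion becomes $\Omega(\sqrt{m})$. I do not anticipate needing any genuinely new argument.
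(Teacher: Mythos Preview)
Your proposal is correct and is exactly the intended approach: the paper states the corollary without proof, as it follows immediately from Theorem~\ref{thm:RestrictedJump} by taking $\smax = 1$, which makes all three machine classes have speed~$1$ and hence yields a restricted identical-machines instance with the claimed $\Omega(\sqrt{m})$ bound. Your verification that none of the intermediate steps break under $s=1$ is thorough but, as you note, unnecessary for the write-up.
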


\begin{remark}
In the proof of Theorem \ref{thm:RestrictedJump} we introduce an arbitrary integer $z$. We argue that there exists a sufficiently large value for $z$ such that the desired result follows. Choosing an even larger value for $z$ implies that the results above not only hold in expectation but also with high probability.
\end{remark}

\subsection{Lex-jump Optimal Schedules on Restricted Identical Machines}
\label{subsec:lb-lex-jump-restricted}
In this subsection, we show that there exist instances with~$\phi \geq 8$ such that the smoothed performance guarantee for lex-jump optimal schedules in the restricted setting is in the same order as the worst case performance guarantee.

As in Section~\ref{subsec:lb-list-lex-jump}, we construct an instance with
several job classes and machine classes and the loads of the machines
are gradually decreasing with increasing machine class.
By setting the sets~$\mathcal{M}_j$ of allowed machines equal to the union of
only one or two machine classes and choosing to schedule the jobs on the
\emph{wrong} machines, we can enforce that jobs cannot leave the machine class
on which they are scheduled in the lex-jump
optimal solution, whereas the optimal makespan is still small.

\begin{theorem}
\label{thm:rest-id-lj}
For every~$\phi \geq 8$ there exists a class of $\phi$-smooth instances~$\mathcal{I}$ on
restricted identical machines such that
\[
  \El[I \sim \mathcal{I}]{\max_{\sched \in \LEX{I}} \frac{\csched[I, \sched]}{\copt(I)}}
  = \Omega \left( \frac{\log m}{\log \log m} \right) \DOT
\]
\end{theorem}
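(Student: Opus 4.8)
The plan is to construct a deterministic lower bound instance in the spirit of Theorem~\ref{thm.mainII.list.lex} and the worst-case construction of Awerbuch et al.~\cite{Awerbuch+etal:2006}, but with the number of machine/job classes calibrated so that $\phi \geq 8$ already suffices to realize $\Theta(\log m / \log\log m)$ classes, and then to argue that the ``bad'' schedule is lex-jump optimal for \emph{every} realization of the perturbed processing requirements. Concretely, I would fix a branching parameter $d = d(m)$ (morally $d \approx \log m / \log\log m$) and partition the machines into classes $M_0, M_1, \ldots, M_r$ with $r = \Theta(d)$, where class $M_k$ contains roughly $d!/k!$ identical machines (all of speed $1$ since we are in the identical-machine case). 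The jobs come in classes $J_1, \ldots, J_r$; a job in class $J_\ell$ has processing requirement drawn uniformly from a tiny interval around some target value $t_\ell$ (with the interval length at most $1/\phi \le 1/8$, so the realized value is essentially $t_\ell$), and, crucially, its allowed set $\allow{j}$ is the union of only one or two consecutive machine classes, say $M_{\ell-1} \cup M_\ell$. The target loads are chosen so that in the bad schedule each machine in $M_k$ receives enough jobs of class $J_k$ to have load $\approx H_r - H_k$ (or a similar harmonic-type profile), while in the good schedule each machine in $M_{k}$ takes a single job of class $J_{k+1}$ and the makespan stays $O(1)$.

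The key steps, in order, are: (1) Specify the instance precisely — the class sizes, the target processing requirements $t_\ell$, the smoothing intervals, and the restriction sets $\allow{j}$ — and verify a counting identity ensuring every job of every class can be placed on a distinct machine in the bad schedule (this is the analogue of $|M_\ell| = r!/\ell!$ and the bookkeeping in Lemma~\ref{lemma.list.schedule}). (2) Exhibit the bad schedule $\sched$ explicitly and compute the load of each machine in $M_k$, showing it is $\Theta(r)$ on the top class; since all speeds are $1$, the load bounds reduce to summing the (near-)deterministic job sizes. (3) Prove $\sched$ is lex-jump optimal: for a job $j$ of class $J_\ell$ on a machine $i \in M_\ell$, the only alternative machines in $\allow{j}$ lie in $M_{\ell-1}$ (the other class in its allowed set), and these already carry load at least that of $i$, so no jump strictly decreases $j$'s completion time — here the restriction $\allow{j} \subseteq M_{\ell-1}\cup M_\ell$ is what makes the argument go through regardless of the random realizations, because a job simply \emph{cannot} escape to a genuinely underloaded fast class. (4) Construct the good schedule $\sched'$ (each machine in $M_k$, $k=0,\ldots,r-1$, gets one job of class $J_{k+1}$; top class empty) and bound $\copt \le \csched[\sched'] = O(1)$. (5) Relate $r$ to $m$: since $|M| = \sum_k d!/k! \le e\cdot d!$, we have $m \le e\cdot d!$, so $d = \Omega(\log m/\log\log m)$ by the standard inversion of the factorial via Stirling, giving $\csched/\copt = \Omega(r) = \Omega(\log m/\log\log m)$. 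Finally, since the bound holds for every $I \in \mathcal{I}$ (every realization), it holds in expectation, and $\phi \in [1,8)$ is handled by treating such instances as $8$-smooth.

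The main obstacle I anticipate is step (3) combined with getting the numerical constants in step (1) to cooperate: the smoothing intervals have width up to $1/\phi \le 1/8$, and on a machine carrying $k \approx r$ jobs this accumulated slack is up to $r/8$, which is of the same order as the load $\Theta(r)$ we are trying to lower-bound. So unlike the $\Theta(\log\phi)$ construction in Section~\ref{subsec:lb-list-lex-jump} — where $\phi$ was large and the relative slack $\ell \cdot 2^{r+1}/(2^\ell\phi)$ was tiny — here with only $\phi \ge 8$ we must choose the target sizes $t_\ell$ so that the \emph{gaps} $t_{\ell} - (\text{number of }J_\ell\text{-jobs per machine})\cdot(\text{contribution})$ between consecutive-class loads exceed the worst-case accumulated perturbation, and simultaneously ensure the lex-jump inequality $\load{i'} + p_j/s_{i'} \ge \load{i}$ survives for the adversarial realization where $i$'s jobs are as small as possible and $i'$'s as large as possible. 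I expect this forces the job sizes in class $J_\ell$ to grow geometrically (like $c^\ell$ for a constant $c > 1$ chosen so that one extra job on a slower-loaded machine always overshoots), which is exactly the mechanism in the Awerbuch et al.\ $\Theta(\log m/\log\log m)$ example; the work is in re-deriving that example with explicit intervals of width $1/8$ rather than exact values. Everything else — the counting identities, the $O(1)$ bound on $\copt$, and the factorial-inversion estimate — is routine once the parameters are pinned down.
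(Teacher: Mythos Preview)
Your anticipated obstacle in step~(3) is not a nuisance to be engineered around; it is fatal to the ``for every realization'' strategy. With $\phi$ fixed at~$8$, every job's size lives in an interval of width~$1/8$, so a machine in the top class carrying $\Theta(r)$ jobs has load uncertainty $\Theta(r)$, while lex-jump optimality forces adjacent classes to differ in load by at most $\max_j p_j \le 1$. No choice of parameters reconciles these: with near-unit jobs the worst-case inequality becomes $(r-k+1)\cdot\tfrac{7}{8} \ge r-k+1$, which is simply false; and your proposed fix of geometrically growing sizes $t_\ell\sim c^\ell$ does not help, because processing requirements are confined to $[0,1]$, so geometric growth is exhausted after $O(1)$ classes (or, if you rescale as in Section~\ref{subsec:lb-list-lex-jump}, the rescaled interval width forces $\phi=\Omega(c^r)$, which gives back $r=O(\log\phi)$ classes rather than $\Theta(\log m/\log\log m)$). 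There is no deterministic construction here.

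The paper's proof takes a genuinely different, probabilistic route. Each class~$\jobclass{h}$ is split into two subclasses: large ``type~A'' jobs with $p_j$ uniform on $[7/8,1]$ and allowed on $M_h\cup M_{h+1}$, and a swarm of $17m_h$ tiny ``type~B'' filler jobs with $p_j$ uniform on $[0,1/8]$ allowed \emph{only} on~$M_h$. The bad schedule runs LPT within each class; the filler jobs then force any two machines in the same class to have loads within~$1/8$ of each other (Lemma~\ref{lemma:rest-id-lj:load-dif}). A Hoeffding bound (Lemma~\ref{lem:rest-id-lj:succ-prob}) shows that with high probability the class-wise sums $Q_h^A,Q_h^B$ are within $m_h/16$ and $m_h/32$ of their expectations, and the class sizes follow the recurrence $a_h=\lceil(a_{h-1}/a_{h-2}-7/15)\,a_{h-1}\rceil$, tuned exactly so that, conditional on this event, the expected average loads of consecutive classes differ by the $7/16$ margin needed for lex-jump optimality to survive the remaining $7/32+7/32$ slack (Lemma~\ref{lem:rest-id-lj:lexjumpopt2}). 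The two missing ideas in your plan are thus the type~B filler jobs combined with LPT (to kill intra-class variance) and the Hoeffding step (to kill inter-class variance); the restriction sets alone do not suffice.
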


First, we introduce the $\phi$-smooth instance~$\mathcal{I}$ for~$\phi \geq 8$. 
Given an integer~$k \geq 68$, consider the following recurrence formula:
\[
  a_0 = k^2 \COMMA \quad
  a_1 = k^3 \COMMA \quad \mbox{and} \quad
  a_h = \CEIL{ \left( \frac{a_{h-1}}{a_{h-2}} - \frac{7}{15} \right) \cdot a_{h-1} } \ \mbox{for} \ h \geq 2 \DOT
\]
Starting with $a_1/a_0 = k$, the fraction $a_h/a_{h-1}$ decreases with increasing index~$h$ until it is less or equal~$1$.
To see this, note that $a_h/a_{h-1} \geq 1$ implies that $a_{h-1} \geq a_{h-2}$. Therefore, we know that $a_h \geq a_{h-1} \geq \ldots \geq a_0 = k^2 > 15$. Furthermore, we can bound the ratio $a_h / a_{h-1}$ from above by $a_h / a_{h-1} \leq a_{h-1} / a_{h-2} - 7/15 + 1/a_{h-1} < a_{h-1} / a_{h-2} - 6/15 < a_{h-1} / a_{h-2}$.
Let~$z_k$ be the smallest integer~$h$ such that $a_h/a_{h-1} \leq 1$. Hence, $a_0, a_1, \ldots, a_{z_k-1}$ is a strictly increasing sequence. We will bound the number~$z_k$ from above later in the analysis.

We consider~$z_k$ job classes $\jobclass{1}, \ldots, \jobclass{z_k}$ and as many machine
classes $M_1, \ldots, M_{z_k}$. Each machine class~$M_h$ contains $m_h = a_{h-1}$ machines with speed~$1$. Each job class~$\jobclass{h}$ consists of two subclasses~$\jobclass{h}^A$ and~$\jobclass{h}^B$ of size~$a_h$ and of size $b_h = 17m_h$, respectively. The jobs in class~$\jobclass{h}^A$ are called type~$A$ jobs, have processing requirements independently and uniformly distributed in~$[7/8, 1]$, and can be processed on machines in~$M_h \cup M_{h+1}$. As a convention let $M_{z_k+1} = \emptyset$. Jobs in class~$\jobclass{h}^B$ are called type~$B$ jobs, have processing requirements independently and uniformly distributed in~$[0, 1/8]$, and can only be processed on machines in~$M_h$.

The schedule $\sched = \sched(I)$ for an instance~$I \in \mathcal{I}$ is obtained by scheduling the jobs in~$\jobclass{h}$ on the machines in~$M_h$ using LPT (longest processing time) scheduling, i.e., list scheduling with a list in which the jobs are ordered according to non-increasing processing requirements. Note that the LPT algorithm first schedules all type~A jobs and then all type~B jobs. Schedule~$\sched(I)$ is visualized in Figure~\ref{fig:lex-restricted-bad}. Machine~$h$ represents all machines in class~$M_h$.

\begin{artclfig}\newcommand{\height}{12em}
  \includegraphics[height=\height]{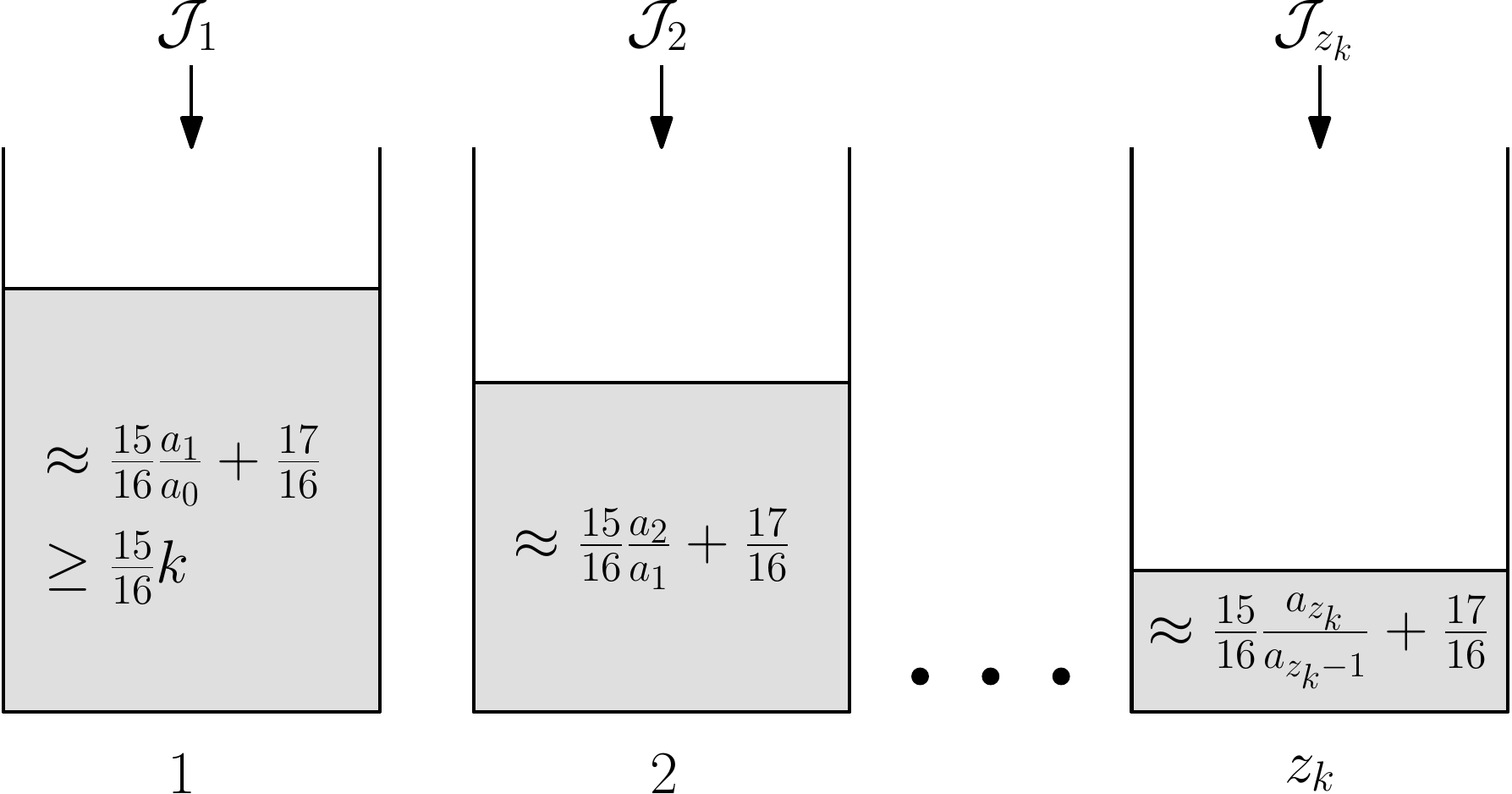}
  \caption{Schedule $\sched(I)$}
  \label{fig:lex-restricted-bad}
\end{artclfig}

We show that schedule~$\sched$ is lex-jump optimal with high probability. To be more specific, we show lex-jump optimality when the values~$Q_h^A = \sum_{j \in \jobclass{h}^A} p_j$ and~$Q_h^B = \sum_{j \in \jobclass{h}^B} p_j$ are close to their expectations for all $h = 1, \ldots, z_k$. Let~$\event_h^A$ and~$\event_h^B$ denote the events that
\[
  \big| Q_h^A - \El{Q_h^A} \big| \leq \frac{m_h}{16} \quad \mbox{and} \quad
  \big| Q_h^B - \El{Q_h^B} \big| \leq \frac{m_h}{32} \COMMA \quad \mbox{respectively} \DOT
\]
Moreover, let~$\event$ denote the event that the events~$\event_h^A$ and~$\event_h^B$ are simultaneously true for all $h = 1, \ldots, z_k$. By~$\bar{\event}_h^A$, $\bar{\event}_h^B$, and~$\bar{\event}$ we refer to the complement of~$\event_h^A$, $\event_h^B$, and~$\event$.

First, we analyze the sequence $a_0, a_1, \ldots, a_{z_k}$ to obtain bounds for the number~$z_k$ of machine and job classes and for the number~$m$ of machines.

\begin{lemma}
\label{lem:rest-id-lj:frac}
For any $h = 1, \ldots, z_k$ the following inequality holds:
\[
  \frac{a_h}{a_{h-1}} 
  \leq k - (h-1) \cdot \frac{2}{5} \DOT
\]
\end{lemma}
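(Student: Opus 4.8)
The plan is to prove the bound by induction on~$h$, tracking the quantity $a_h/a_{h-1}$ and using the defining recurrence $a_h = \CEIL{(a_{h-1}/a_{h-2} - 7/15)\cdot a_{h-1}}$. The base case~$h=1$ is immediate since $a_1/a_0 = k^3/k^2 = k$, which equals the claimed right-hand side $k - (1-1)\cdot 2/5 = k$. For the inductive step, I would assume $a_{h-1}/a_{h-2} \leq k - (h-2)\cdot 2/5$ and try to derive the bound for $a_h/a_{h-1}$.

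The key computation starts from the recurrence and the ceiling estimate $\CEIL{x} < x + 1$, giving
\[
  \frac{a_h}{a_{h-1}} < \frac{a_{h-1}}{a_{h-2}} - \frac{7}{15} + \frac{1}{a_{h-1}} \DOT
\]
Using the induction hypothesis to bound the first term, this is at most $k - (h-2)\cdot\frac25 - \frac{7}{15} + \frac{1}{a_{h-1}}$. The goal right-hand side is $k - (h-1)\cdot\frac25 = k - (h-2)\cdot\frac25 - \frac25$. So it suffices to show $-\frac{7}{15} + \frac{1}{a_{h-1}} \leq -\frac25 = -\frac{6}{15}$, i.e. $\frac{1}{a_{h-1}} \leq \frac{1}{15}$, i.e. $a_{h-1} \geq 15$. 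Since (as already observed in the paragraph preceding the lemma) $a_{h-1} \geq a_0 = k^2 > 15$ for all relevant indices in the strictly increasing part of the sequence, this holds, and the induction goes through.

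I should be slightly careful about which indices~$h$ the claim is stated for: it is $h = 1, \ldots, z_k$, and by definition of~$z_k$ the sequence $a_0, \ldots, a_{z_k-1}$ is strictly increasing, so $a_{h-1} \geq a_0 = k^2 > 15$ whenever $h-1 \leq z_k - 1$, i.e. $h \leq z_k$, covering exactly the needed range. The only mild obstacle is making sure the induction hypothesis is available, i.e. that the bound for index $h-1$ is meaningful — but for $h \geq 2$ we have $h-1 \geq 1$, so the hypothesis is exactly the previously established case, and for $h=2$ it is the base case. Thus no genuine difficulty arises; the main point is simply to observe that $a_{h-1} \geq 15$ converts the harmless $+1$ from the ceiling into a loss of at most $\frac{1}{15}$, which combined with the $-\frac{7}{15}$ in the recurrence yields the $-\frac{2}{5}$ decrement per step.
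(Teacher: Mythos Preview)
Your proof is correct and follows essentially the same approach as the paper: induction on~$h$ with base case $a_1/a_0 = k$, then using the ceiling bound $\CEIL{x} \leq x+1$ together with $a_{h-1} \geq a_0 = k^2 \geq 15$ to convert the $-7/15$ in the recurrence into a net decrement of $-2/5$ per step. The paper's write-up is slightly terser but the argument is identical.
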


\begin{proof}
The claim is true for~$h = 1$. By definition of~$a_h$,
\[
  \frac{a_h}{a_{h-1}}
  \leq \frac{\left( \frac{a_{h-1}}{a_{h-2}} - \frac{7}{15} \right) \cdot a_{h-1} + 1}{a_{h-1}}
  \leq \frac{a_{h-1}}{a_{h-2}} - \frac{6}{15}
  = \frac{a_{h-1}}{a_{h-2}} - \frac{2}{5}
\]
for any $h = 2, \ldots, z_k$ as $a_{h-1} \geq a_0 = k^2 \geq 15$. The claim follows by induction.
\end{proof}

Now, we can bound the number~$z_k$ of job classes.

\begin{cor}
\label{cor:rest-id-lj:zk}
The number~$z_k$ of machine classes and job classes is bounded by~$5k/2$.
\end{cor}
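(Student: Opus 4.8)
The plan is to use Lemma~\ref{lem:rest-id-lj:frac} together with the definition of $z_k$ as the first index~$h$ for which $a_h/a_{h-1} \leq 1$. By Lemma~\ref{lem:rest-id-lj:frac}, for every $h = 1, \ldots, z_k$ we have $a_h/a_{h-1} \leq k - (h-1)\cdot\frac{2}{5}$. The key observation is that if $h$ is chosen large enough that the right-hand side drops to $1$ or below, then in particular $a_h/a_{h-1} \leq 1$, which by minimality of $z_k$ forces $z_k \leq h$.

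Concretely, I would solve $k - (h-1)\cdot\frac{2}{5} \leq 1$ for $h$, giving $h-1 \geq \frac{5}{2}(k-1)$, i.e. $h \geq \frac{5}{2}(k-1) + 1 = \frac{5k}{2} - \frac{3}{2}$. So for any integer $h \geq \lceil \frac{5k}{2} - \frac{3}{2}\rceil$ we have $a_h/a_{h-1} \leq 1$ (as long as $h \leq z_k$, so that the lemma applies — but if $h > z_k$ we are already done since then $z_k < h \leq \frac{5k}{2}$). In either case $z_k \leq \lceil \frac{5k}{2} - \frac{3}{2}\rceil \leq \frac{5k}{2}$, which is the claimed bound. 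One should double-check the boundary: the inequality in Lemma~\ref{lem:rest-id-lj:frac} is only asserted for $h \leq z_k$, so the clean way to phrase it is: suppose for contradiction $z_k > \frac{5k}{2}$; then Lemma~\ref{lem:rest-id-lj:frac} applies at $h = \lceil\frac{5k}{2}\rceil \leq z_k$ and yields $a_h/a_{h-1} \leq k - (\lceil\frac{5k}{2}\rceil - 1)\cdot\frac{2}{5} \leq k - (\frac{5k}{2} - 1)\cdot\frac{2}{5} = k - k + \frac{2}{5} = \frac{2}{5} < 1$, contradicting the minimality of $z_k$.

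There is essentially no obstacle here; the only thing to be careful about is the off-by-one arithmetic in the ceiling and making sure the index at which Lemma~\ref{lem:rest-id-lj:frac} is invoked does not exceed $z_k$, which is handled by the contradiction framing above. The bound $k \geq 68$ is not even needed for this particular step (it is presumably used elsewhere), though it does guarantee $z_k \geq 2$ so that the statement is nonvacuous.
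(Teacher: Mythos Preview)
Your argument is correct and follows essentially the same idea as the paper, which simply applies Lemma~\ref{lem:rest-id-lj:frac} at $h = z_k - 1$, uses $a_{z_k-1}/a_{z_k-2} > 1$, and solves the resulting inequality for~$z_k$. One small caveat: your ``clean'' contradiction framing with $h = \lceil 5k/2 \rceil$ does not quite close when~$k$ is odd (then $\lceil 5k/2 \rceil = (5k+1)/2 > 5k/2$, and the case $h = z_k$ is not a contradiction), whereas your first version with $h = \lceil 5k/2 - 3/2 \rceil \leq 5k/2$ is airtight.
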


\begin{proof}
Applying Lemma~\ref{lem:rest-id-lj:frac} for $h = z_k-1$ we obtain
\[
  1 < \frac{a_{z_k-1}}{a_{z_k-2}} \leq k - (z_k-2) \cdot \frac{2}{5} \DOT
\]
Hence,
\[
  z_k < (k-1) \cdot \frac{5}{2} + 2 < \frac{5k}{2} \DOT \qedhere
\]
\end{proof}

\begin{lemma}
\label{lem:rest-id-lj:m}
The number~$m$ of machines is bounded by $\Gamma(k'+3)$ where~$\Gamma$ denotes the gamma function and where $k' = \CEIL{5k/2}$.
\end{lemma}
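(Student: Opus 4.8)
The plan is to bound $m = \sum_{h=1}^{z_k} m_h = \sum_{h=1}^{z_k} a_{h-1}$ by controlling the growth of the sequence $a_0, a_1, \ldots, a_{z_k-1}$ and then summing. First I would record the explicit telescoping bound: since Lemma~\ref{lem:rest-id-lj:frac} gives $a_h/a_{h-1} \leq k - (h-1)\cdot\tfrac25 \leq k$ for every $h$, and since $a_0 = k^2$, a crude bound is $a_h \leq k^{h+2}$. Summing the geometric-like series then gives $m = \sum_{h=0}^{z_k-1} a_h \leq \sum_{h=0}^{z_k-1} k^{h+2} \leq k^{z_k+2}$ (using $k \geq 2$ so the tail is dominated by its last term up to a constant, or simply bounding the sum by $2k^{z_k+1} \leq k^{z_k+2}$). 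So the real content is to replace the crude $k$ in each factor by the decreasing quantity $k - (h-1)\cdot\tfrac25$ and recognize the resulting product as (close to) a factorial, which is what the gamma-function bound encodes.

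The key step is therefore to chain Lemma~\ref{lem:rest-id-lj:frac} multiplicatively. Writing $a_{h} = a_0 \cdot \prod_{j=1}^{h} \frac{a_j}{a_{j-1}}$ and using $\frac{a_j}{a_{j-1}} \leq k - (j-1)\cdot\tfrac25 \leq \tfrac25\bigl(\tfrac{5k}{2} - (j-1)\bigr) \leq \tfrac25\bigl(k' - (j-1)\bigr)$ with $k' = \CEIL{5k/2}$, I get $a_h \leq k^2 \cdot \bigl(\tfrac25\bigr)^h \cdot \prod_{j=1}^{h}\bigl(k' - j + 1\bigr) = k^2 \bigl(\tfrac25\bigr)^h \cdot \frac{k'!}{(k'-h)!}$. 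Since each $a_h$ thus dominated is bounded by $k^2 \cdot \frac{k'!}{(k'-h)!}$ (the factor $(\tfrac25)^h \leq 1$), I can sum: $m = \sum_{h=0}^{z_k-1} a_h \leq k^2 \sum_{h=0}^{z_k-1} \frac{k'!}{(k'-h)!}$. Now $z_k - 1 < 5k/2 \leq k'$ by Corollary~\ref{cor:rest-id-lj:zk}, so every term in the sum is of the form $k'!/(k'-h)!$ with $0 \leq h \leq k'$, and $\sum_{h=0}^{k'} \frac{k'!}{(k'-h)!} = \sum_{i=0}^{k'} \frac{k'!}{i!} = k'! \sum_{i=0}^{k'}\frac{1}{i!} < e \cdot k'! < 3 \cdot k'!$. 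Combining, $m < 3 k^2 \cdot k'! $, and since $k' = \CEIL{5k/2} \geq k$, we have $3k^2 \leq 3 (k')^2 \leq (k'+1)(k'+2)$ for $k' \geq$ a small constant (which holds since $k \geq 68$), giving $m < (k'+1)(k'+2) \cdot k'! = (k'+2)!= \Gamma(k'+3)$.

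The main obstacle, such as it is, is purely bookkeeping: making sure the off-by-one indices in the factorial product line up and that the small slack $3k^2 \leq (k'+1)(k'+2)$ really holds for all admissible $k$ (it does, comfortably, since $k \geq 68$ forces $k' \geq 170$). One subtlety worth double-checking is that the multiplicative chaining only uses Lemma~\ref{lem:rest-id-lj:frac} for indices $h \leq z_k - 1$, where $a_h/a_{h-1} > 1$ and the sequence is still increasing, so the bounds are applied exactly in the range they were proved; and that the final sum over $h$ runs only up to $z_k - 1$ (the machine classes are $M_1, \ldots, M_{z_k}$ with $m_h = a_{h-1}$, so the indices of the $a$'s run from $0$ to $z_k - 1$), which is safely below $k'$. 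With those checks in place the displayed bound $m \leq \Gamma(k'+3)$ follows, and in fact with room to spare.
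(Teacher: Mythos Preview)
Your argument is correct and is essentially the same as the paper's: you chain Lemma~\ref{lem:rest-id-lj:frac} multiplicatively to get $a_h \le k^2 (2/5)^h\, k'!/(k'-h)!$, drop the factor $(2/5)^h$, sum to at most $e\cdot k^2\cdot k'!$, and absorb the prefactor into $(k'+1)(k'+2)$. The paper does exactly this (phrasing the product bound as an induction and writing the last step as $m \le e\cdot k^2\cdot k'! \le (k'+2)! = \Gamma(k'+3)$ without spelling out $e\,k^2 \le (k'+1)(k'+2)$); your version is just more explicit about the final inequality and about the index range $h \le z_k-1 \le k'$.
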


\begin{proof}
By induction we show that
\[
  a_h \leq k^2 \cdot \left( \frac{2}{5} \right)^h \cdot \frac{k'!}{(k'-h)!}
\]
for any $h = 0, \ldots, z_k-1$. Note that $z_k \leq 5k/2 \leq k'$ due to Corollary~\ref{cor:rest-id-lj:zk}. For~$h=0$ the claim holds since $a_0 = k^2$. For~$h \geq 1$ we apply Lemma~\ref{lem:rest-id-lj:frac} to get
\[
  \frac{a_h}{a_{h-1}}
  \leq k - (h-1) \cdot \frac{2}{5}
  \leq \frac{2}{5} \cdot (k'-(h-1)) \DOT
\]
The induction hypothesis for~$a_{h-1}$ yields
\[
  a_h
  \leq \frac{2}{5} \cdot (k'-(h-1)) \cdot k^2 \cdot \left( \frac{2}{5} \right)^{h-1} \cdot \frac{k'!}{(k'-(h-1))!}
  = k^2 \cdot \left( \frac{2}{5} \right)^h \cdot \frac{k'!}{(k'-h)!} \DOT
\]
Recalling $m_h = a_{h-1}$ we can bound the number~$m$ of machines by using
\[
  \frac{m}{k^2}
  = \sum \limits_{h=1}^{z_k} \frac{m_h}{k^2}
  = \sum \limits_{h=0}^{z_k-1} \frac{a_h}{k^2}
  \leq \sum \limits_{h=0}^{z_k-1} \frac{k'!}{(k'-h)!}
  \leq k'! \cdot e \DOT
\]
Hence, $m \leq e \cdot k^2 \cdot k'! \leq (k'+2)! = \Gamma(k'+3)$.
\end{proof}

\begin{lemma}
\label{lem:rest-id-lj:succ-prob}
Event~$\bar{\event}$ occurs with probability at most~$10k \cdot \exp(-k/2)$.
\end{lemma}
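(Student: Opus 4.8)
The statement bounds the probability of the "bad" event $\bar{\event}$, which is the union over all $h = 1, \ldots, z_k$ of the events $\bar{\event}_h^A$ and $\bar{\event}_h^B$. The natural approach is a union bound: first I would bound $\Prob[\bar{\event}_h^A]$ and $\Prob[\bar{\event}_h^B]$ individually for each fixed $h$ using Hoeffding's inequality (Theorem~\ref{thm:app:hoeffding}), and then sum up the $2z_k$ failure probabilities, using $z_k < 5k/2$ from Corollary~\ref{cor:rest-id-lj:zk} to get a bound of the form $5k \cdot (\text{single-failure probability})$.

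For the individual bounds, recall $Q_h^A = \sum_{j \in \jobclass{h}^A} p_j$ is a sum of $a_h$ independent random variables, each uniform on $[7/8, 1]$ (hence supported on an interval of length $1/8$), and $Q_h^B = \sum_{j \in \jobclass{h}^B} p_j$ is a sum of $b_h = 17 m_h$ independent random variables, each uniform on $[0, 1/8]$. Hoeffding's inequality then gives
\[
  \Prob\big[\, \big| Q_h^A - \El{Q_h^A} \big| > \tfrac{m_h}{16} \,\big]
  \le 2 \exp\!\left( - \frac{2 (m_h/16)^2}{a_h \cdot (1/8)^2} \right)
  = 2 \exp\!\left( - \frac{2 m_h^2}{a_h} \right),
\]
and similarly
\[
  \Prob\big[\, \big| Q_h^B - \El{Q_h^B} \big| > \tfrac{m_h}{32} \,\big]
  \le 2 \exp\!\left( - \frac{2 (m_h/32)^2}{17 m_h \cdot (1/8)^2} \right)
  = 2 \exp\!\left( - \frac{m_h}{136} \right).
\]
The $B$-bound is already clean since $m_h = a_{h-1} \ge a_0 = k^2$, so it is at most $2\exp(-k^2/136) \le 2\exp(-k/2)$ for $k \ge 68$. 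For the $A$-bound I need to control the ratio $m_h^2 / a_h = a_{h-1}^2 / a_h$. Since the sequence $a_0, \ldots, a_{z_k - 1}$ is strictly increasing and $a_h / a_{h-1} \le k$ (from $a_1/a_0 = k$ and Lemma~\ref{lem:rest-id-lj:frac}, as the ratio only decreases), we get $a_{h-1}^2 / a_h \ge a_{h-1} / k \ge a_0 / k = k^2 / k = k$; for $h = z_k$ one must handle the boundary case separately, but there $a_{z_k}/a_{z_k-1} \le 1$ so $a_{z_k-1}^2/a_{z_k} \ge a_{z_k-1} \ge k^2 \ge k$ as well. Hence $\Prob[\bar{\event}_h^A] \le 2\exp(-2k)$, which is also at most $2\exp(-k/2)$.

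Combining via the union bound over $h = 1, \ldots, z_k$ with $z_k < 5k/2$:
\[
  \Prob[\bar{\event}]
  \le \sum_{h=1}^{z_k} \big( \Prob[\bar{\event}_h^A] + \Prob[\bar{\event}_h^B] \big)
  \le z_k \cdot 4 \exp(-k/2)
  < \frac{5k}{2} \cdot 4 \exp(-k/2)
  = 10 k \exp(-k/2),
\]
which is the claimed bound. The only slightly delicate point is getting the ratio $a_{h-1}^2/a_h$ bounded below by something growing in $k$; everything else is a routine application of Hoeffding plus the already-established bound on $z_k$. I expect the boundary index $h = z_k$ in the $A$-estimate to be the one spot requiring a separate line of reasoning, but it causes no real difficulty since the ratio $a_{z_k}/a_{z_k-1} \le 1$ only makes $a_{z_k-1}^2/a_{z_k}$ larger.
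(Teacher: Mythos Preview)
Your proposal follows exactly the same route as the paper: Hoeffding for each $\bar{\event}_h^A$ and $\bar{\event}_h^B$, the estimate $a_h/a_{h-1} \le k$ together with $a_{h-1} \ge a_0 = k^2$ to control the $A$-term, $m_h \ge k^2$ and $k \ge 68$ for the $B$-term, and then a union bound using $z_k < 5k/2$. One small arithmetic slip: the Hoeffding exponent for the $A$-event is
\[
  \frac{2\,(m_h/16)^2}{a_h\,(1/8)^2} \;=\; \frac{m_h^2}{2a_h}
\]
rather than $2m_h^2/a_h$; with your lower bound $a_{h-1}^2/a_h \ge k$ this gives $2\exp(-k/2)$ (not $2\exp(-2k)$), which is precisely what is needed and matches the paper. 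Your separate treatment of $h = z_k$ is harmless but unnecessary, since Lemma~\ref{lem:rest-id-lj:frac} already covers all $h = 1,\ldots,z_k$.
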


\begin{proof}
We bound the probability for the events~$\bar{\event}_h^A$ and~$\bar{\event}_h^B$ to occur. Recalling $m_h = a_{h-1} \geq a_0 = k^2$, $a_h \leq k \cdot a_{h-1}$ (see Lemma~\ref{lem:rest-id-lj:frac}), $b_h = 17m_h$, and $k \geq 68$ we obtain
\begin{align*}
  \Probl{\bar{\event}_h^A}
  &= \Probl{\Big| Q_h^A - \Ee{Q_h^A} \Big| > \frac{m_h}{16}}
  \leq 2 \exp \left( - \frac{2 \left( \frac{m_h}{16} \right)^2}{a_h \cdot \left( \frac{1}{8} \right)^2} \right) \cr
  &= 2 \exp \left( - \frac{a_{h-1}}{a_h} \cdot \frac{a_{h-1}}{2} \right)
  \leq 2 \exp \left( -\frac{a_{h-1}}{2k} \right)
  \leq 2 \exp \left(-\frac{k}{2} \right)
\end{align*}
and
\begin{align*}
  \Probl{\bar{\event}_h^B}
  &= \Probl{\Big| Q_h^B - \Ee{Q_h^B} \Big| > \frac{m_h}{32}}
  \leq 2 \exp \left( - \frac{2 \left( \frac{m_h}{32} \right)^2}{b_h \cdot \left( \frac{1}{8} \right)^2} \right) \cr
  &= 2 \exp \left( - \frac{m_h}{17m_h} \cdot \frac{a_{h-1}}{8} \right)
  \leq 2 \exp \left( -\frac{k^2}{136} \right)
  \leq 2 \exp \left( - \frac{k}{2} \right) \DOT
\end{align*}
Each of the first inequalities stems from Hoeffding's bound~\cite{Hoeffding:1963} (see also Theorem~\ref{thm:app:hoeffding}). A union bound yields
\[
  \Probl{\bar{\event}}
  = \Probl{\bigcup_{h=1}^{z_k} \big( \bar{\event}_h^A \cup \bar{\event}_h^B \big)}
  \leq 2z_k \cdot 2\exp \left( -\frac{k}{2} \right)
  \leq 10k \cdot \exp \left( -\frac{k}{2} \right)
\]
due to Corollary~\ref{cor:rest-id-lj:zk}.
\end{proof}

As event~$\event$ occurs with high probability and as
\[
  \El[I \sim \mathcal{I}]{\max_{\sched \in \LEX{I}} \frac{\csched[I, \sched]}{\copt(I)}}
  \geq \El[I \sim \mathcal{I}]{\left. \max_{\sched \in \LEX{I}} \frac{\csched[I, \sched]}{\copt(I)} \right| \event} \cdot \Probl[I \sim \mathcal{I}]{\event} \COMMA
\]
to prove Theorem~\ref{thm:rest-id-lj} it suffices to bound the expected value conditioned on event~$\event$ by $\Omega \big( \frac{\log m}{\log \log m} \big)$. Therefore, in the remainder of this section we assume that event~$\event$ happens.

\begin{lemma}
\label{lemma:rest-id-lj:load-dif}
The loads of the machines within the same class differ only slightly. In particular, $|\load{i} - \load{i'}| \leq 1/8$ for any machines $i, i' \in M_h$.
\end{lemma}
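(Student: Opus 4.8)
The plan is to use the LPT (longest-processing-time) rule that generates $\sched$ within each machine class $M_h$, together with the conditioning on event $\event$, which controls the total type-$A$ and type-$B$ volumes $Q_h^A$ and $Q_h^B$. Fix a class $M_h$ and recall that all $m_h = a_{h-1}$ machines in $M_h$ have speed $1$, that LPT places all type-$A$ jobs (with $p_j \in [7/8,1]$) before any type-$B$ job (with $p_j \in [0,1/8]$), and that $|\jobclass{h}^A| = a_h$ while $|\jobclass{h}^B| = b_h = 17 m_h$. The first thing I would establish is the contribution of the type-$A$ jobs alone: since every type-$A$ job has size in $[7/8,1]$ and $a_h \le k \cdot a_{h-1}$ (Lemma~\ref{lem:rest-id-lj:frac}), each machine of $M_h$ receives roughly $a_h/m_h$ type-$A$ jobs, and because these jobs are nearly uniform in size, two machines in $M_h$ differ by at most one type-$A$ job; more carefully, after the type-$A$ phase the difference in load between any two machines of $M_h$ is bounded by the maximum type-$A$ job size, namely~$1$. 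However, that is too weak — I need the final bound $1/8$, so the type-$B$ phase must be doing the ``smoothing.''

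The key step is therefore the standard LPT argument applied to the type-$B$ phase: when LPT assigns a job $j$ to the currently least-loaded machine, the load of that machine before the assignment is at most the average load over $M_h$, and afterwards it exceeds the minimum load by at most $p_j \le 1/8$. Hence at the end of the LPT process the maximum machine load in $M_h$ exceeds the minimum by at most the size of the last-assigned (hence smallest) type-$B$ job, which is at most $1/8$ — \emph{provided} that in the type-$B$ phase every machine of $M_h$ actually receives at least one type-$B$ job, so that the ``leveling'' is global and not just among a subset. This is where the conditioning on $\event$ enters: I would show $b_h = 17 m_h$ type-$B$ jobs, each of expected size $1/16$ (so $\El{Q_h^B} = 17 m_h/16 \approx m_h$), distributed by LPT among $m_h$ machines whose type-$A$ loads differ by at most $1$, indeed suffice to bring every machine's load above the others' type-$A$-only loads. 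Concretely, each machine needs only about $m_h$ worth of type-$B$ volume versus a total $Q_h^B \approx m_h$, and since under $\event$ we have $|Q_h^B - \El{Q_h^B}| \le m_h/32$, there is enough type-$B$ volume — and it is split into pieces of size $\le 1/8$ — for LPT to level the class to within $1/8$.

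The main obstacle I anticipate is making the ``every machine receives a type-$B$ job and the leveling is global'' argument fully rigorous: I must verify that the type-$A$ load imbalance (at most~$1$) cannot be so large that some machine never becomes the minimum during the type-$B$ phase and thus never gets leveled. The clean way is the inductive invariant that throughout the type-$B$ phase, once LPT has placed enough type-$B$ mass, the current maximum load minus current minimum load is at most $\max\{\,1 - (\text{type-}B\text{ mass placed so far})/m_h,\ 1/8\,\}$; since the total type-$B$ mass $Q_h^B \ge \El{Q_h^B} - m_h/32 = 17m_h/16 - m_h/32 > m_h$ under $\event$, the first term drops to $0$ (or below) before the phase ends, leaving the bound $1/8$. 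Everything else — the bounds $a_h \le k a_{h-1}$, $b_h = 17 m_h$, the uniform-interval sizes, $k \ge 68$ — is bookkeeping already available from the setup and from Lemmas~\ref{lem:rest-id-lj:frac}--\ref{lem:rest-id-lj:succ-prob}.
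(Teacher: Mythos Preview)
Your high-level strategy is exactly the paper's: after the type-$A$ phase the loads in $M_h$ differ by at most~$1$, and the type-$B$ phase (jobs of size $\le 1/8$) has enough total mass to level the class, with the key numerical fact being $Q_h^B \ge \El{Q_h^B} - m_h/32 = 17m_h/16 - m_h/32 > m_h$ under~$\event$. So the plan is sound.

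Where your execution goes wrong is the proposed invariant ``$\max - \min \le \max\{\,1 - Q/m_h,\ 1/8\,\}$'' during the type-$B$ phase. This is false already for tiny examples: take $m_h=3$ machines with type-$A$ loads $0,0,1$ and place a single type-$B$ job of size $1/8$. Then $Q/m_h = 1/24$, the invariant claims $\max-\min \le 23/24$, but the actual gap is still~$1$. The point is that the minimum load does not increase by $Q/m_h$; it stays put until \emph{every} machine at the current minimum has been touched. So the invariant cannot be repaired by a small constant shift; any direct ``leveling rate'' argument has to track how many machines are at the minimum, which is messy.

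The paper sidesteps this entirely with a short contradiction argument, and it is worth adopting: suppose $\load{i}-\load{i'}>1/8$. If machine~$i$ (or any machine with final load $\ge \load{i}$) had received a type-$B$ job, look at its \emph{last} such job; at that moment it was the minimum, so its final load would be at most $\load{i'}+1/8$, a contradiction. Hence all type-$B$ mass sits on the at most $m_h-1$ machines with final load $<\load{i}$. Each such machine has type-$A$ load $\ge \load{i}-1$ and final load $<\load{i}$, so it carries less than~$1$ of type-$B$ mass. Thus $Q_h^B < m_h-1 < 17m_h/16 - m_h/32$, contradicting~$\event_h^B$. This is both shorter and avoids the invariant issue you flagged as the ``main obstacle''.
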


\begin{proof}
Suppose to the contrary that there exist two machines $i, i' \in M_h$
such that $\load{i} - \load{i'} > 1/8$. Recall that according to the LPT rule all type~$A$ jobs will be assigned to the machines before the type~$B$ jobs are assigned. After all type~$A$ jobs  have been assigned to the machines in~$M_h$, the difference in load between any two machines in~$M_h$ is at most~$1$ since~$p_j \leq 1$ for all jobs~$j$.

Since the processing time of all type~$B$ jobs is bounded by~$1/8$, $\load{i} - \load{i'} > 1/8$ implies that  no type~$B$ job is assigned to machine~$i$ nor to any machine that has load at least~$\load{i}$. Hence, all type~$B$ jobs are assigned to the machines that have load less than~$\load{i}$. Note that there are at most $m_h-1$ such machines.

As the difference in load between machine~$i$ and any other machine in~$M_h$ is
at most~$1$, the total amount of processing requirements of type~$B$ jobs in
class~$M_h$ is bounded by $Q_h^B \leq (m_h - 1) \cdot 1 < 17m_h/16 - m_h/32 = \E[Q_h^B] - m_h/32$
contradicting the assumption that event~$\event_h^B$ holds.
\end{proof}

\begin{lemma}
\label{lemma:rest-id-lj:load}
For any machine~$i \in M_h$ the inequality
\[
  \left| \load{i} - \frac{1}{m_h} \left( \El{Q_h^A} + \El{Q_h^B} \right) \right| \leq \frac{7}{32}
\]
holds, i.e., the load of machine~$i$ is close to the expected average machine load in class~$M_h$.
\end{lemma}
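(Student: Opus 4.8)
The plan is to bound the total load on the machines in class $M_h$ from both sides and then invoke Lemma~\ref{lemma:rest-id-lj:load-dif} to transfer the average bound to each individual machine. First I would observe that under event $\event$ the only jobs that are ever scheduled on a machine in $M_h$ by the construction of $\sched(I)$ are the jobs of $\jobclass{h}^A$ and $\jobclass{h}^B$; hence $\sum_{i \in M_h} \load{i} = Q_h^A + Q_h^B$. By the definition of $\event_h^A$ and $\event_h^B$ we have $|Q_h^A - \El{Q_h^A}| \le m_h/16$ and $|Q_h^B - \El{Q_h^B}| \le m_h/32$, so
\[
  \left| \sum_{i \in M_h} \load{i} - \left( \El{Q_h^A} + \El{Q_h^B} \right) \right| \leq \frac{m_h}{16} + \frac{m_h}{32} = \frac{3 m_h}{32} \DOT
\]
Dividing by $m_h$ gives that the \emph{average} load $\frac{1}{m_h}\sum_{i \in M_h}\load{i}$ differs from $\frac{1}{m_h}(\El{Q_h^A}+\El{Q_h^B})$ by at most $3/32$.

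Next I would pass from the average load to the load of an arbitrary fixed machine $i \in M_h$. By Lemma~\ref{lemma:rest-id-lj:load-dif}, every machine in $M_h$ has load within $1/8$ of every other machine in $M_h$, so in particular the load of machine $i$ differs from the average load in class $M_h$ by at most $1/8$. Combining this with the previous estimate via the triangle inequality,
\[
  \left| \load{i} - \frac{1}{m_h}\left( \El{Q_h^A} + \El{Q_h^B} \right) \right| \leq \frac{1}{8} + \frac{3}{32} = \frac{4}{32} + \frac{3}{32} = \frac{7}{32} \COMMA
\]
which is exactly the claimed bound.

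The only genuinely substantive point — and the one I would want to state carefully rather than gloss over — is the first claim, that no jobs other than those of $\jobclass{h}$ land on the machines of $M_h$ in schedule $\sched(I)$. This is immediate from the way $\sched(I)$ is defined (the jobs of $\jobclass{h}$ are assigned to $M_h$ via LPT, and the classes partition the job set and the machine set), so it is really just an appeal to the construction; the rest is the two-step triangle-inequality bookkeeping above. There is no hard obstacle here: the lemma is a routine consequence of the concentration events $\event_h^A, \event_h^B$ already assumed and of the intra-class load uniformity established in Lemma~\ref{lemma:rest-id-lj:load-dif}.
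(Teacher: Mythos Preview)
Your proof is correct and is essentially the same as the paper's: both use the identity $\sum_{i\in M_h}\load{i}=Q_h^A+Q_h^B$, the concentration bounds from~$\event_h^A$ and~$\event_h^B$ (contributing $1/16+1/32=3/32$), and Lemma~\ref{lemma:rest-id-lj:load-dif} (contributing $1/8$), combined via the triangle inequality. The paper merely writes the triangle-inequality decomposition in a single display rather than in two steps, but the ingredients and the arithmetic are identical.
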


\begin{proof}
By applying the triangle inequality we obtain
\begin{align*}
  \left| \load{i} - \frac{\El{Q_h^A} + \El{Q_h^B}}{m_h} \right|
  &\leq \left| \load{i} - \frac{Q_h^A + Q_h^B}{m_h} \right| + \frac{\left| Q_h^A - \El{Q_h^A} \right|}{m_h} +  \frac{\left| Q_h^B - \El{Q_h^B} \right|}{m_h} \cr
  &\leq \left| \load{i} - \frac{\sum \limits_{i' \in M_h} \load{i'}}{|M_h|} \right| + \frac{1}{16} + \frac{1}{32}
  \leq \frac{7}{32} \COMMA
\end{align*}
where the second inequality holds since~$\event_h^A$ and~$\event_h^B$ are true. The last inequality is due to Lemma~\ref{lemma:rest-id-lj:load-dif}.
\end{proof}

\begin{lemma}
\label{lem:rest-id-lj:lexjumpopt2}
Schedule~$\sched$ is lex-jump optimal.
\end{lemma}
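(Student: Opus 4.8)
The plan is to verify directly that~$\sched$ (conditioned on event~$\event$) admits no lex-jump improving move, i.e.\ that no job scheduled on a machine~$i$ in~$\sched$ can be moved to a machine on which it finishes strictly earlier. Since all speeds equal~$1$, a job~$j$ lying on machine~$i$ finishes at time~$\load{i}$ and would finish at time $\load{i'}+p_j$ on another machine~$i'$, so it suffices to show $\load{i'}+p_j \geq \load{i}$ for every job~$j$ on~$i$ and every $i'\in\allow{j}\setminus\SET{i}$. Write $\mu_h := \frac{1}{m_h}\bigl(\Ee{Q_h^A}+\Ee{Q_h^B}\bigr) = \frac{15}{16}\cdot\frac{a_h}{a_{h-1}}+\frac{17}{16}$ for the expected average load in class~$M_h$ (using that type~$A$ jobs have mean $15/16$, type~$B$ jobs mean $1/16$, and $m_h=a_{h-1}$, $b_h=17m_h$). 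Lemma~\ref{lemma:rest-id-lj:load} then gives $|\load{i}-\mu_h|\leq 7/32$ for every $i\in M_h$, and the recurrence yields $a_{h+1}=\CEIL{(\tfrac{a_h}{a_{h-1}}-\tfrac{7}{15})\,a_h}\geq (\tfrac{a_h}{a_{h-1}}-\tfrac{7}{15})\,a_h$, hence $\tfrac{a_{h+1}}{a_h}\geq\tfrac{a_h}{a_{h-1}}-\tfrac{7}{15}$ and therefore $\mu_h-\mu_{h+1}\leq \tfrac{15}{16}\cdot\tfrac{7}{15}=7/16$ for every $h<z_k$.

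Now take a job~$j$ on a machine $i\in M_h$. Either $j\in\jobclass{h}^A$, in which case $\allow{j}=M_h\cup M_{h+1}$ and $p_j\geq 7/8$, or $j\in\jobclass{h}^B$, in which case $\allow{j}=M_h$. (i) If~$j$ is of type~$A$ and the target $i'$ lies in~$M_h$, then Lemma~\ref{lemma:rest-id-lj:load-dif} gives $\load{i'}+p_j\geq \load{i}-1/8+7/8 > \load{i}$. (ii) If~$j$ is of type~$A$ and the target $i'$ lies in~$M_{h+1}$ — possible only for $h<z_k$, since $M_{z_k+1}=\emptyset$ — then $\load{i'}\geq\mu_{h+1}-7/32$ and $\load{i}\leq\mu_h+7/32$, so
\[
 \load{i'}+p_j \;\geq\; \mu_{h+1}-7/32+7/8 \;=\; \mu_{h+1}+21/32 \;\geq\; \mu_h-7/16+21/32 \;=\; \mu_h+7/32 \;\geq\; \load{i}.
\]
(iii) If~$j$ is of type~$B$ (so the target $i'$ also lies in~$M_h$), I would use that~$\sched$ restricted to~$\jobclass{h}$ and~$M_h$ is an LPT schedule: since every type~$A$ job is larger than every type~$B$ job, all type~$B$ jobs are inserted after all type~$A$ jobs, so the last type~$B$ job~$j^\star$ placed on~$i$ is the last job~$i$ ever receives; hence $\load{i}$ equals the load of~$i$ right after~$j^\star$ was assigned. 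By the list-scheduling rule, at that moment~$i$ had minimum load among the machines of~$M_h$, so the load of~$i$ before adding~$j^\star$ was at most~$\load{i'}$, whence $\load{i}\leq\load{i'}+p_{j^\star}\leq\load{i'}+p_j$, using $p_j\geq p_{j^\star}$ by the LPT order. These three cases exhaust all possible jumps, so no job can improve and~$\sched$ is lex-jump optimal.

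The delicate step is~(ii): the chain of inequalities is tight and closes only because the interval endpoints~$7/8$ and~$1/8$, the slack~$7/32$ from Lemma~\ref{lemma:rest-id-lj:load}, and the constant~$7/15$ subtracted in the recurrence were chosen to match, namely $28/32-2\cdot 7/32=14/32=7/16$. A secondary point requiring care is the LPT argument in~(iii): one must use that type~$B$ jobs are all scheduled after the type~$A$ jobs in order to identify the last type~$B$ job on a machine with the last job placed there, which is precisely what lets its post-insertion load (equal to its final load) be bounded via the minimum-load property of list scheduling.
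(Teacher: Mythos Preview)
Your proof is correct and follows essentially the same approach as the paper: the cross-class argument in case~(ii) is identical (bounding $\mu_h-\mu_{h+1}\le 7/16$ via the recurrence and combining with the $7/32$ slack from Lemma~\ref{lemma:rest-id-lj:load}), and your within-class argument in~(iii) is exactly the paper's LPT/list-scheduling argument. The only organizational difference is that the paper handles \emph{all} within-class moves (type~$A$ and type~$B$ alike) with the single LPT argument of your case~(iii) --- the last job placed on~$i$ is its smallest, so the list-scheduling inequality for that job propagates to every job on~$i$ --- whereas you split off the type~$A$ within-class case and treat it separately via Lemma~\ref{lemma:rest-id-lj:load-dif}; this works but is not needed.
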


\begin{proof}
We need to show that $\load{i'} + p_j \geq \load{i}$ holds for any machine~$i \in M_h$, any job~$j \in
\jobset[i]$, and any machine~$i' \in \allow{j}$. Let~$i \in M_h$ be an arbitrary machine. First, consider the last job~$j$ that has been assigned to~$i$. Then, $\load{i'} + p_j \geq \load{i}$ for any machine~$i' \in M_h$ as this job was assigned to machine~$i$ by list scheduling. Furthermore, job~$j$ is a smallest job on machine~$i$ due to the LPT rule. Hence, $\load{i'} + p_{j'} \geq \load{i}$ for any machine~$i' \in M_h$ and any job~$j' \in \jobset[i]$ assigned to machine~$i$.

For type~$B$ jobs on machine~$i$ the set of allowed machines equals~$M_h$. It just remains to show that $\load{i'} + p_j \geq \load{i}$ for any machine~$i' \in M_{h+1}$ and any type~$A$ job $j \in \jobset[i]$ with $i \in M_h$. Recalling $a_h = \CEIL{(a_{h-1}/a_{h-2}) - 7/15) \cdot a_{h-1}}$ for~$h \geq 2$, $m_h = a_{h-1}$, and $b_h/m_h = 17$ we observe that
\begin{align*}
  \frac{\El{Q_{h+1}^A} + \El{Q_{h+1}^B}}{m_{h+1}}
  &= \frac{\frac{15}{16} a_{h+1} + \frac{1}{16} b_{h+1}}{m_{h+1}}
  = \frac{15}{16} \cdot \frac{a_{h+1}}{a_h} + \frac{1}{16} \cdot \frac{b_{h+1}}{m_{h+1}} \cr
  &\geq \frac{15}{16} \cdot \left( \frac{a_h}{a_{h-1}} - \frac{7}{15} \right) + \frac{1}{16} \cdot \frac{b_h}{m_h} \cr
  &= \frac{\El{Q_h^A} + \El{Q_h^B}}{m_h} - \frac{7}{16}
\end{align*}
for any $h = 1, \ldots, z_k - 1$. This implies
\begin{align*}
  \load{i'} + p_j
  &\geq \frac{\El{Q_{h+1}^A} + \El{Q_{h+1}^B}}{m_{h+1}} - \frac{7}{32} + \frac{7}{8} \cr
  &\geq \frac{\El{Q_h^A} + \El{Q_h^B}}{m_h} - \frac{7}{16} + \frac{21}{32} \cr
  &= \frac{\El{Q_h^A} + \El{Q_h^B}}{m_h}  + \frac{7}{32}
  \geq \load{i} \COMMA
\end{align*}
where the first and the last inequality are due to Lemma~\ref{lemma:rest-id-lj:load}.
\end{proof}

Finally, we can prove Theorem~\ref{thm:rest-id-lj}.

\begin{proof}[\ProofText{Theorem~\ref{thm:rest-id-lj}}]
As mentioned before, due to Lemma~\ref{lem:rest-id-lj:succ-prob} it suffices to bound the expected value conditioned on event~$\event$. If event~$\event$ holds, then schedule~$\sched = \sched(I)$ is lex-jump optimal (see Lemma~\ref{lem:rest-id-lj:lexjumpopt2}), i.e., $\sched \in \LEX{I}$, and has makespan
\begin{align*}
  \cmax
  &\geq \max \limits_{i \in M_1} \load{i}
  \geq \frac{Q_1^A + Q_1^B}{m_1}
  \geq \frac{\Ee{Q_1^A} + \Ee{Q_1^B}}{m_1} - \frac{\frac{m_1}{16} + \frac{m_1}{32}}{m_1} \cr
 &= \frac{\frac{15}{16}k^3 + \frac{1}{16} \cdot 17k^2}{k^2} - \frac{3}{32} \geq \frac{15}{16}k \COMMA
\end{align*}
where the third inequality is due to the occurrence of~$\event_1^A$ and~$\event_1^B$. Now, consider the following schedule~$\sched'$:
\begin{itemize}

  \item For $h = 1, \ldots, z_k-1$ spread the jobs of class~$\jobclass{h}^A$ evenly among the machines in class~$M_{h+1}$. As $|\jobclass{h}^A| = a_h = m_{h+1} = |M_{h+1}|$, each machine is assigned exactly one type~$A$ job.

  \item Spread the jobs of class~$\jobclass{z_k}^A$ evenly among the machines in class~$M_{z_k}$. As $|\jobclass{z_k}^A| = a_{z_k} \leq a_{z_k-1} = m_{z_k} = |M_{z_k}|$, each machine is assigned at most one type~$A$ job.

  \item For $h = 1, \ldots, z_k$ spread the jobs of class~$\jobclass{h}^B$ evenly among the machines in class~$M_h$. As $|\jobclass{h}^B| = 17m_h = 17 \cdot |M_h|$, each machine is assigned exactly $17$~type~$B$ jobs.

\end{itemize}
Note that with `evenly' we refer to the number of jobs on each machine and not to the load. Figure~\ref{fig:lex-restricted-opt} shows schedule~$\sched'$ where each machine~$h$ is a representative for all machines in class~$M_h$.

\begin{artclfig}\newcommand{\height}{12em}
  \includegraphics[height=\height]{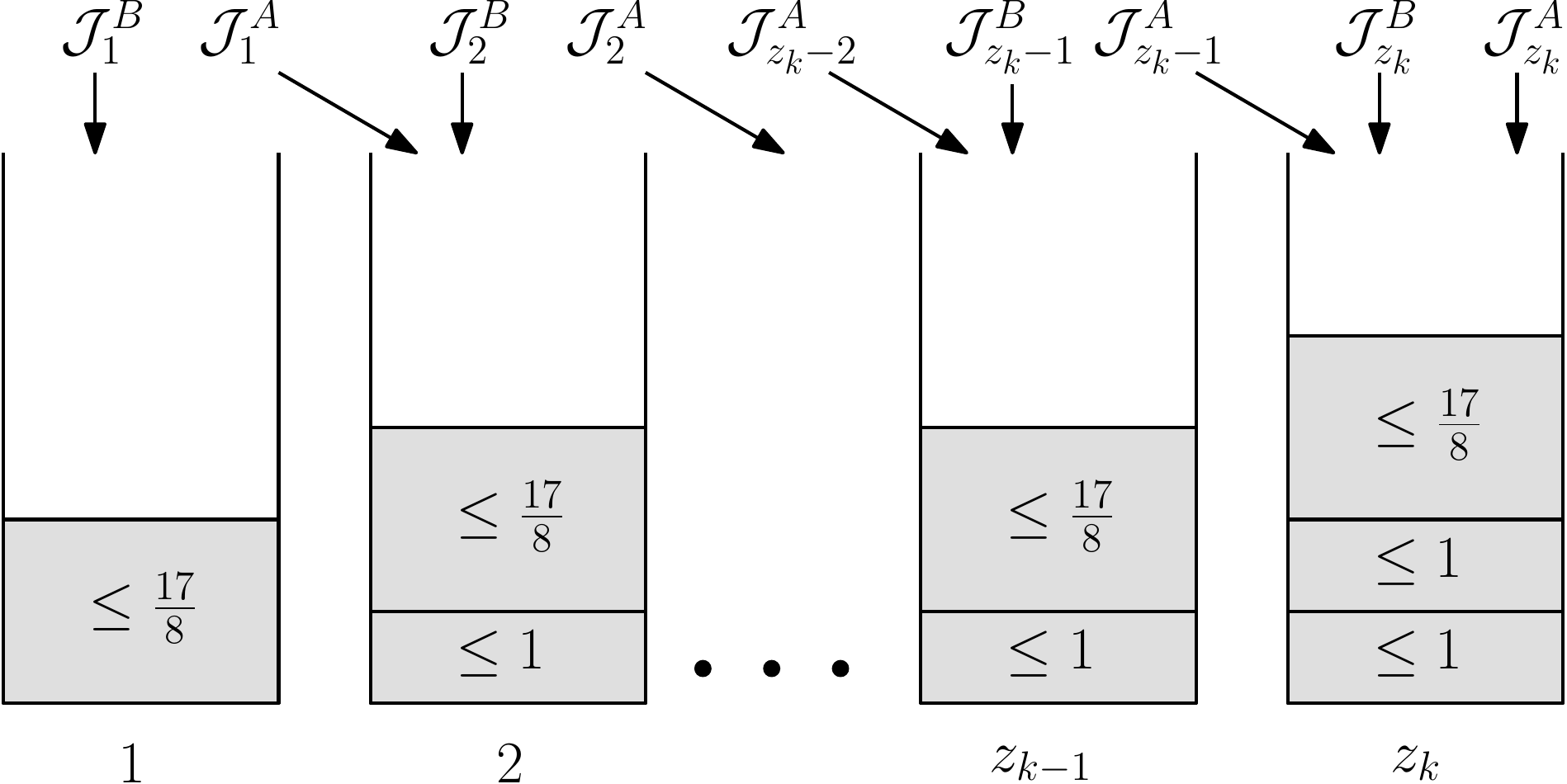}
  \caption{Schedule $\sched'$}
  \label{fig:lex-restricted-opt}
\end{artclfig}

As each machine contains at most $2$~type~$A$ jobs and $17$~type~$B$ jobs, the makespan of schedule~$\sched'$ and hence~$\copt$ is bounded by $2 \cdot 1 + 17 \cdot 1/8 \leq 5$. This implies $\cmax(\sched)/\copt \geq 3k/16 = \Omega(\Gamma^{-1}(m))$ due to Lemma~\ref{lem:rest-id-lj:m}. Hence,
\[ \El[I \sim \mathcal{I}]{\left. \max_{\sched \in \LEX{I}} \frac{\csched[I, \sched]}{\copt(I)} \right| \event} \geq \El[I \sim \mathcal{I}]{\left. \frac{\csched[I, \sched(I)]}{\copt(I)} \right| \event} = \Omega\left(\frac{\log m}{\log \log m}\right) \DOT \qedhere \]
\end{proof}

\begin{remark}
The worst case upper bound on the performance guarantee for
lex-jump optimal schedules on restricted related machines is
$O\left(\frac{\log S}{\log \log S} \right)$, where $S = \sum_{i}
s_i /s_m$~\cite{Rutten:etal:2012}.
As for identical machines $S = m$, i.e., each machine has speed~$1$, the
upper bound matches the lower bound of Theorem~\ref{thm:rest-id-lj} up
to a constant factor and smoothing does also not improve the performance
guarantee for the worst lex-jump optimal schedules on restricted related
machines.

Lemma \ref{lem:rest-id-lj:succ-prob} established that $\event$ occurs with high probability. Hence, if we choose $k$ suitably large, the stated results not only hold in expectation, but also with high probability.
\end{remark}

\section{Concluding Remarks}
\label{sec:concluding}

We have proven that the lower bounds for all scheduling variants with restricted machines
are rather robust against random noise, not only in expectation but even with high probability. We have also shown that the situation looks much
better for unrestricted machines where we obtained performance guarantees of~$\Theta(\phi)$ and~$\Theta(\log \phi )$ for the jump and lex-jump algorithm, respectively.
The latter bound also holds for the price of anarchy of routing on parallel links and
for the list scheduling algorithm, even when the order in which the jobs are presented to the algorithm can be chosen by the adversary when the realization of the processing times are known.

There are several interesting directions of research and we view our results
only as a first step towards fully understanding local search and greedy algorithms
in the framework of smoothed analysis. For example, we have only
perturbed the processing requirements, and it might be the case that the worst-case
bounds for the restricted scheduling variants break down if also the sets~$\allow{j}$
are to some degree random. In general it would be interesting to study different
perturbation models where the sets~$\allow{j}$ and/or the speeds~$s_i$ are perturbed. 
Lemma~\ref{lemma.exponentially.decreasing.speeds} and Corollary~\ref{corol.many.small.jobs} indicate that there need to exist many machines having exponentially small speeds. We conjecture that if speeds are being smoothed, then the smoothed performance guarantee of near list schedules on restricted related machines is $\Theta(\log \phi)$ as well.

Another interesting question is the following:
since we do not know which local optimum is reached, we have always looked at
the worst local optimum. It might, however, be the case that the local optima
reached in practice are better than the worst local optimum. It would be interesting
to study the quality of the local optimum reached under some reasonable assumptions on how
exactly the local search algorithms work. An extension in this direction would be 
to analyze the quality of coordination mechanisms under smoothing.

\section*{Acknowledgments}
We thank three anonymous referees for their valuable comments and suggestions 
that helped to improve the writing of the paper.


\newpage
\appendix

\section{Table of notation}
\label{sec:appendix-table}

In the table below, the notation used in this paper is summarized.
\vspace{1em}

\begin{tabular}{|ll|} \hline
$J$ & set of jobs $1, \ldots, n$ \\
$M$ & set of machines $1, \ldots, m$ \\
$p_j$ & processing requirement of job~$j$ \\
$s_i$ & speed of machine~$i$ \\
${\cal M}_j$ & set of machines on which job $j$ can be scheduled \\
$\smax$ & maximum speed of the machines \\
$\smin = 1$ & minimum speed of the machines;  \\
& by scaling we assume w.l.o.g. it to be $1$. \\
$\copt$ & optimal makespan \\
$\cmax(\sched)$ & makespan of schedule $\sched$ \\
$\jobset[i](\sched)$ & set of jobs scheduling on machine $i$ in schedule
$\sched$ \\
$\load{i}(\sched)$ & $ = \sum_{j \in \jobset[i](\sched)} p_j / s_i$ \\
 & load of
machine $i$ in schedule $\sched$. \\
$\jobset[i,j](\sched)$ & $ = \jobset[i](\sched) \cap \{ 1, \ldots, j \}$ \\
$j_i^t$ & $= \MIN{j \WHERE \sum_{\ell \in \jobset[i,j](\sched)} p_{\ell} / s_i \geq t \cdot \copt}$ \\
$\jobset[i,\geq t](\sched)$ & $= \jobset[i,j_i^t](\sched)$ \\
$c$ & $ = \left\lfloor \frac{\cmax(\sched)}{\copt} \right\rfloor - 1$ \\
$i_k$ & $= \MAX{i \in M \WHERE \load{i'} \geq k \cdot \copt \, \forall \, i' \leq i }$, \\
& assuming $s_1 \geq s_2 \geq \ldots \geq s_m$  \\
$H_k$ & $= \{1, \ldots, i_k\}$ \\
$R_k$ & $= H_k \setminus H_{k+1}$ \text{ for $k=0,1,\ldots, c-1$} \\
$R_c$ & $= H_c$.
\\ \hline
\end{tabular}

\section{Hoeffding's bound}

On several occasions in this paper we use Hoeffding's
bound~\cite{Hoeffding:1963} to bound tail probabilities. For
completeness, we state the bound in the following theorem.

\begin{theorem}
\label{thm:app:hoeffding}
Let $X_1, \ldots, X_n$ be independent random variables.
Define $X := \sum_{j=1}^n X_j$ and $\mu = \Ee{X}$. If each $X_j \in [a_j,b_j]$
for some constants $a_j$ and $b_j$, $j = 1,\ldots,n$, then for any $t >
0$
\begin{align*}
\Probl{X \leq \Ee{X} - t}
&\leq \exp\left( \frac{-2t}{\sum_j (b_j - a_j)^2} \right), \quad \text{
and, } \\
\Probl{X \geq \Ee{X} + t}
&\leq \exp\left( \frac{-2t}{\sum_j (b_j - a_j)^2} \right). 
\end{align*}
\end{theorem}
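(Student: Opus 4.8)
The plan is to prove the upper-tail estimate $\Probl{X \geq \Ee{X} + t} \leq \exp\!\bigl(-2t^2 / \sum_{j}(b_j-a_j)^2\bigr)$; the lower-tail estimate then follows at once by applying the upper-tail estimate to the independent variables $-X_1, \ldots, -X_n$, since $-X_j \in [-b_j, -a_j]$ is an interval of the same length $b_j - a_j$. For the upper tail I would use the exponential Markov inequality (the Chernoff method): for every $s > 0$,
\[
  \Probl{X \geq \Ee{X} + t}
  = \Probl{e^{s(X - \Ee{X})} \geq e^{st}}
  \leq e^{-st} \cdot \Ee{e^{s(X - \Ee{X})}} \DOT
\]
Writing $\mu_j = \Ee{X_j}$ and using independence, the expectation on the right factorizes as $\Ee{e^{s(X-\Ee{X})}} = \prod_{j=1}^n \Ee{e^{s(X_j - \mu_j)}}$, so everything reduces to bounding the moment generating function of a single centered, bounded random variable.

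The crucial ingredient is \emph{Hoeffding's lemma}: if $Y$ is a random variable with $\Ee{Y} = 0$ and $Y \in [a,b]$ almost surely, then $\Ee{e^{sY}} \leq \exp\!\bigl(s^2 (b-a)^2/8\bigr)$ for every $s \in \mathbb{R}$. I would prove this by analysing the cumulant generating function $\psi(s) = \ln \Ee{e^{sY}}$. One has $\psi(0) = 0$ and $\psi'(0) = \Ee{Y} = 0$, and the second derivative $\psi''(s)$ equals the variance of $Y$ under the exponentially tilted measure with density proportional to $e^{sY}$; since that tilted law is again supported on $[a,b]$, its variance is at most $(b-a)^2/4$ (the maximum being attained by the two-point distribution on the endpoints). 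A Taylor expansion $\psi(s) = \psi(0) + s\,\psi'(0) + \tfrac{s^2}{2}\psi''(\xi)$ with $\xi$ between $0$ and $s$ then gives $\psi(s) \leq s^2 (b-a)^2 / 8$, which is the claimed bound. An alternative, calculus-free route is to use convexity of $y \mapsto e^{sy}$ to bound $e^{sy} \leq \frac{b-y}{b-a}e^{sa} + \frac{y-a}{b-a}e^{sb}$ on $[a,b]$, take expectations (the linear terms vanish because $\Ee{Y}=0$), and then minimize the resulting one-variable expression in $s$.

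Combining the two pieces, and abbreviating $\Delta_j = b_j - a_j$, we get for every $s>0$
\[
  \Probl{X \geq \Ee{X} + t}
  \leq e^{-st} \prod_{j=1}^n e^{s^2 \Delta_j^2/8}
  = \exp\!\Bigl(-st + \tfrac{s^2}{8}\,{\textstyle\sum_{j=1}^n} \Delta_j^2\Bigr) \DOT
\]
The exponent is a quadratic in $s$ minimized at $s^\star = 4t / \sum_{j} \Delta_j^2$, and substituting $s = s^\star$ yields exponent $-2t^2 / \sum_{j}\Delta_j^2$, which is exactly the asserted tail bound. I expect the only genuinely non-routine step to be Hoeffding's lemma --- concretely, the uniform estimate $\psi''(s) \leq (b-a)^2/4$ on the tilted variance (or, on the alternative route, the convexity bound together with its one-dimensional optimization over $s$); the reduction to a single variable, the factorization, and the final optimization over $s$ are all standard Chernoff-bound bookkeeping.
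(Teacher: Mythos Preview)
Your argument is the standard Chernoff--method proof of Hoeffding's inequality and is correct. There is nothing to compare it against, however: the paper does not prove this theorem at all. It is merely stated in the appendix for the reader's convenience, with a reference to Hoeffding's original 1963 paper, and is then invoked as a black box in several places (e.g.\ Inequality~\eqref{eqn:probFailure} and the proofs of Theorems~\ref{thm:JumpRelatedLower} and~\ref{thm:RestrictedJump}).

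One minor remark: the statement as printed in the paper has a typo --- the exponent reads $-2t/\sum_j (b_j-a_j)^2$, but the correct bound (and the one actually applied throughout the paper) has $-2t^2$ in the numerator. You silently corrected this and proved the right version, which is what was intended.
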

\end{document}